\documentclass[twocolumn,10pt,final,journal]{IEEEtran}
\usepackage{mathrsfs}
\usepackage{cite}
\usepackage{amsfonts}
\usepackage[cmex10]{amsmath}
\usepackage{array}
\usepackage{mdwmath}
\usepackage{eqparbox}
\usepackage{graphics,dblfloatfix}
\usepackage{graphicx}
\usepackage{epsfig}
\usepackage{subfigure}
\usepackage{amssymb}
\usepackage{lineno}
\usepackage{cite}
\usepackage{booktabs}
\usepackage{tabularx}
\usepackage{amsmath}
\usepackage{multirow}
\usepackage{epsfig}
\usepackage{amssymb}
\usepackage{pgfplots}
\usepackage{algorithmic}
\usepackage[ruled]{algorithm}
\usepackage{color}
\usepackage{times}
\usepackage{paralist}
\usepackage{xspace}
\usepackage{mathrsfs}
\usepackage{clrscode}
\allowdisplaybreaks 
\newtheorem{thm}{Theorem}
\newtheorem{lem}{Lemma}
\newtheorem{cor}{Corollary}
\newtheorem{defn}{Definition}

\renewcommand{\paragraph}[1]{\smallskip \noindent {\textsc{#1}}}

\def\ie{\textit{i.e.}\xspace}
\def\etal{\textit{et al.}\xspace}

\def\whp{{\emph{w.h.p.}}}



\newcommand{\mycut}[1]{{}}

\DeclareMathOperator{\EMST}{EMST} \DeclareMathOperator{\EST}{EST}

\def\ie{\textit{i.e.}\xspace}
\def\etal{\textit{et al.}\xspace}

\def\whp{{\emph{w.h.p.}}}

\begin{document}
\title{General Capacity for Deterministic Dissemination in Wireless Ad Hoc Networks}

\author{Cheng~Wang, Jieren Zhou, Tianci Liu, Lu Shao, Huiya Yan, Xiang-Yang Li, Changjun Jiang
 \IEEEcompsocitemizethanks{\IEEEcompsocthanksitem  Wang, Zhou, Shao and Jiang are with the Department of Computer
Science and Engineering, Tongji University, and with the Key
Laboratory of Embedded System and Service Computing, Ministry of
Education, China. (E-mail: 3chengwang@gmail.com)
\IEEEcompsocthanksitem  Liu and Li are with The School of software, Tsinghua University. (E-mail: xiangyang.li@gmail.com)
\IEEEcompsocthanksitem  Yan is with Mathematics Department, University of Wisconsin-La Crosse.
(E-mail: hyan@uwlax.edu)
\IEEEcompsocthanksitem Part of the results was published in IEEE
INFOCOM 2011 \cite{2011-cheng-general-capacity}.
}
}
\maketitle

\begin{abstract}
In this paper, we study capacity scaling laws of the deterministic dissemination (DD)
in random wireless networks under the generalized physical
model (GphyM). This is truly not a new topic.
Our motivation to readdress this issue is two-fold: Firstly, we aim to propose a more general result to unify the network capacity for general homogeneous random models by investigating the impacts of different parameters of the system on the network capacity.
Secondly, we target to close the open gaps between the upper and the lower bounds on the network capacity in the literature.
The generality of this work lies in three aspects:
(1) We study the homogeneous random network of a general node density
$\lambda \in [1,n]$, rather than either random dense
network (RDN, $\lambda=n$) or random extended network (REN,
$\lambda=1$) as in the literature. (2) We address
the general deterministic  dissemination sessions, \ie, the general multicast sessions, which unify the capacities for unicast and broadcast sessions
by setting the number of destinations for each session as a general
value $n_d\in[1,n]$. (3) We allow the number of sessions to change in
the range $n_s\in(1,n]$, instead of assuming that $n_s=\Theta(n)$ as
in the literature. We derive the general upper bounds on the capacity for the arbitrary case of $(\lambda, n_d, n_s)$ by introducing the Poisson Boolean model of continuum percolation,
and prove that they are tight according to the existing general lower bounds constructed in the literature.
\end{abstract}

\begin{keywords}
Network Capacity, Scaling Laws, Deterministic Dissemination, Random Wireless Networks,
Percolation Theory
\end{keywords}

\section{Introduction}\label{section:Introduction}

This work falls within the scope of the issue of capacity scaling laws for wireless networks, initiated by Gupta and Kumar~\cite{02Gupta2000}, \ie, the
scaling of network performance in
the limit when the network gets large, \cite{ozgur2007hca}.
The main advantage of studying scaling laws is to highlight
qualitative and architectural properties of the system without
considering too many details
\cite{02Gupta2000,ozgur2007hca}.
The network capacity  depends directly on the type of traffic sessions of interest.
Generally, the traffic sessions in wireless networks
can be classified into two broad types: \emph{data dissemination}, where a session has only one source, and \emph{data gathering}, where a session intends to transmit data from its multiple sources to a relatively small number of destinations;
on the other hand, according to the property of destination selection schemes, they can also be divided into the following two types:
 \emph{deterministic session}, where the selection of destination(s) are/is determined beforehand, and \emph{opportunistic session}, where the destination(s) are/is opportunistically chosen during the transmitting procedure.
Based on those classifications, the typical session patterns can be located as shown in Table. \ref{tab:session}.

In this work,  we focus on dissemination sessions that
can be usually represented by a triple dimensional vector $(n,n_c,n_d)$ with $1\leq
n_d\leq n_c \leq n-1$.
These parameters are defined by the following:
The node set of network, say $\mathcal{V}:=\mathcal{V}(n)$, comprises $n$ nodes;
the cardinality of source set
$\mathcal{S}\subseteq \mathcal{V}$ is $|\mathcal{S}|=n_s$; during
the process of dissemination with source $v_i
\in \mathcal{S}$, $n_c$ nodes are randomly chosen to compose a
\emph{candidate set}, denoted by $\mathcal{C}_i$, and the session is
completed when data are transmitted to a subset $\mathcal{D}_i\subseteq \mathcal{C}_i$, called \emph{destination set},  where
$|\mathcal{D}_i|=n_d\leq n_c$.
Obviously, when $n_d\equiv n_c$, the
 dissemination  is specified into a \emph{deterministic dissemination}, \ie,
 the so-called \emph{general multicast session}.
Please see the illustration in
Fig.\ref{fig-manycast}.

The purpose of this paper is to investigate the capacity of wireless networks
where $n_s:(1,n]$\footnote{We use the term $f(n): [\phi_1(n), \phi_2(n)]$
to represent $f(n)=\Omega(\phi_1(n))$ and $f(n)=O(\phi_2(n))$; and
use $f(n): (\phi_1(n), \phi_2(n))$ to represent
$f(n)=\omega(\phi_1(n))$ and $f(n)=o(\phi_2(n))$.} general multicast sessions, denoted by $(n,n_d,n_d)$ with $n_d$:$[1,n]$, run simultaneously.
In the research of \emph{networking-theoretic} capacity scaling laws
\cite{ozgur2007hca}, the unicast and broadcast sessions can be usually regarded as two special cases of general multicast sessions according to
the number of destinations for each session.
Usually, any proposed multicast capacity could be specialized into the unicast and
broadcast capacities by letting $n_d=1$ and $n_d=n$, respectively.
This principle often applies in the literatures
\cite{01Xiang-Yang2007,Alireza-2008,LiMobiCom08,Hu-hoc09,2012TC-MC,Multicast2014}.

\begin{figure}[t]
\begin{center}
\scalebox{1}{\includegraphics{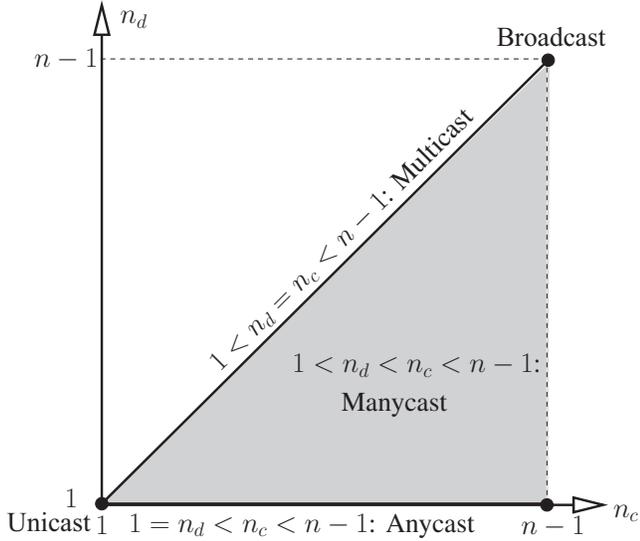}}  \caption{General
Dissemination Sessions. Here, $m$ is the size of
the specified group for a source, where $1\leq m\leq n-1$; $d$ is the number
of desired destinations of the source, where $1\leq d\leq m$.} \label{fig-manycast}
\end{center}
\end{figure}

Most of the existing results differ from each other due to the diversity of adopted analytical models and assumptions. Besides session patterns introduced above,
there are two typical models in terms of scaling patterns that are adopted in the literature: \emph{random extended network} (REN),
where the node density is fixed to a constant \cite{15Franceschetti2007,LiMobiCom08,Zhengrong2008transaction,2012TC-MC},
 and \emph{random dense network} (RDN),
 where  the node density increases linearly with the number of nodes \cite{02Gupta2000,16Keshavarz-Haddad2006,20Shakkottai2007,Alireza-arena2007,Alireza-2008}.
In \cite{20Shakkottai2007}, Shakkottai \etal derived the multicast capacity of RDN for a specific case
that $n_s=n^\epsilon$ and $n_s\cdot n_d=\Theta(n)$, where
$\epsilon\in(0,1]$. They showed that such per-session multicast capacity under
the protocol model is at most of $O(\frac{1}{\sqrt{n_s\log n}})$. To achieve the upper bound, they proposed a simple and novel
routing architecture, called the \emph{multicast comb}, to transfer
multicast data in the network. A more general result, in terms of $n_s$ and $n_d$, was proposed by Li \etal in
\cite{01Xiang-Yang2007}. They showed that when $n_s=\Omega(\log
n_d\cdot \sqrt{n\log n/n_d})$, the per-session multicast capacity
for RDN under the protocol model is of
$\Theta(\frac{1}{n_s}{\sqrt{\frac{n}{n_d\log n }}})$ if
$n_d=O(\frac{n}{\log n})$, and is of  $\Theta( {1}/{n_s})$ if
$n_d=\Omega({n}/{\log n})$. After that, Keshavarz-Haddad \etal
\cite{Alireza-2008} derived the multicast capacity for RDN under
the \emph{generalized physical model} \cite{agarwal:cba} by
designing new multicast schemes and computing the upper bounds. A \emph{gap} remains open between the
upper and the lower bounds in the regime $n_d:[n/(\log n)^3, n/\log n]$ as illustrated in Fig.\ref{fig-existingresult}(a).
For multicast capacity of REN under the generalized physical model,
Li \etal \cite{LiMobiCom08} derived a lower bound as
$\Omega(\frac{\sqrt{n}}{n_s\sqrt{n_d}})$ for the case that
$n_s=\Omega(n^{1/2+\epsilon})$ and $n_d=O(n/(\log n)^{2\alpha+6})$.
Recently, Wang \etal \cite{2012TC-MC} devised the specific
multicast schemes and derived the multicast throughput for all cases
$n_s$:$(1,n]$ and $n_d$:$[1,n]$. Under the assumption that
$n_s=\Theta(n)$, their lower bounds are specialized into those in
Equation (\ref{equ-REN-throughput}). They also derived an upper
bound for the case that $n_s=\Theta(n)$, as in Equation
(\ref{equ-REN-upperbound}). An obvious \emph{gap} exists between the
upper and the lower bounds in the regime $n_d$:$[n/(\log n)^{\alpha+1},
n/\log n]$ (Please see the illustration in
Fig.\ref{fig-existingresult}(b)).
To the best of our knowledge, this is the latest results on general multicast capacity for static REN without considering the impacts of node mobility \cite{zhou2010delay,Hu-hoc09} or advanced physical communication technology \cite{MASS-2009-Cognitive,wangxinbing1}.
\emph{Closing} the remaining \emph{gaps}
is one of the motivations of this paper.

Both REN and RDN are extreme cases for a random network consisting of $n$ nodes in terms of the node density $\lambda$. So the
characterization of two particular models does not suffice to
develop a comprehensive understanding of wireless networks, although
they are representative models to some extent, \cite{ozgur2007hca}.
Hence, in this paper, we comprehensively consider the network with a
general node density $\lambda:[1,n]$, rather than only the cases
$\lambda=1$ (REN) and $\lambda=n$ (RDN), which can offer complete
and deep insights about the scaling laws for wireless networks.
\emph{Unearthing} the \emph{nature} of general scaling is another
motivation of this work.

In conclusion, we aim to examine the capacity scaling laws
of  general wireless networks, where the generality lies in
three aspects: (1) a general
node density, $\lambda$:$[1,n]$; (2) a general number of receivers,
$n_d:[1,n]$; (3) a general number of sessions, $n_s$:$(1,n]$.
For such general multicast capacity of general wireless networks,
we have computed the lower bounds under the generalized physical model in \cite{2011-cheng-general-capacity}.
More specifically, we build routing backbones of two levels: \emph{highways} and
\emph{arterial roads}. Furthermore,  arterial roads (ARs) have two
subclasses, \ie, \emph{ordinary arterial roads} (O-ARs) and
\emph{parallel arterial roads} (P-ARs). Note that the highways are
the same as those in \cite{15Franceschetti2007,LiMobiCom08,Alireza-2008,{2012TC-MC}},
but the ARs are different from the \emph{second-class highways}
(SHs) in \cite{2012TC-MC}. Recall that in the SH system of
\cite{2012TC-MC}, there are two types of  SHs: \emph{odd}
SHs and \emph{even} SHs. The bottleneck of the whole routing could
happen in the switching phase between the  odd  and  even  SHs.
There is no such a bottleneck in the current AR system, which can
improve the multicast throughput for some regimes of $n_s$ and
$n_d$. Based on the highways, O-ARs and P-ARs, we design four
routing schemes. By exploiting the theory of \emph{maximum
occupancy}, we derive the optimal multicast throughput and scheme
according to different ranges of $\lambda$, $n_d$, and $n_s$.

\begin{table} \renewcommand{\arraystretch}{1.31}
\caption{Typical Session Patterns} \label{tab:session}
\vspace{-0.1in}
 \centering
 \scalebox{0.96}{\begin{tabular}{c|c|c}
 \hline   & Deterministic Session & Opportunistic Session\\
  \hline \hline
\multirow{3}{0.8in}{Dissemination /Single-Source} & Unicast           &                         Anycast            \\
& Broadcast           &                 $\cdots$                     \\
& Multicast           &                 Manycast                    \\
\hline
\multirow{3}{0.8in}{Gathering /Multiple-Sources}&     Data Collection       &         Undefined                            \\
&    $\cdots$ &          (to the best of  \\
& ConvergeCast  (Many-to-One) &                       ~~~~our knowledge)              \\
  \hline \end{tabular}
  }
  \end{table}

\textbf{Major contributions of this paper} can be summarized as follows:

  $\triangleright$  For deriving the upper bounds on multicast capacity, we introduce the Poisson Boolean model of continuum percolation \cite{meester1996cp} (not
Poisson bond percolation model \cite{15Franceschetti2007}), which, to the best of our knowledge, is not used in previous studies on upper bounds of network capacity.
Based on the argument of \emph{giant cluster} (component) in the Poisson boolean percolation model,
we can divide the communications under any multicast routing scheme into two parts, \ie,
communications inside  and outside the giant component.
Obviously, the network throughput must be determined by the bottleneck of two parts.
We give a general formula to compute upper bounds on the capacity.

  $\triangleright$ For the case that $n_s=\Theta(n)$ and $\lambda=n$ (or $\lambda=1$),
\ie, RDN and REN, due to the limitations of adopted analytical methods,
the previous works \cite{Alireza-2008,{2012TC-MC}} have not derived the tight bounds
on  multicast capacity under the generalized physical model.
By applying our general results to these special cases,
we close those gaps.

The rest of the paper is organized as follows. The system model is
formulated  in Section \ref{section:SystemNetwork}. We present and
discuss the main results in Section \ref{sec-main-results}. In
Section \ref{sec-technical-preparations}, we make preparations for
the analysis. We derive the upper bounds on the capacity
in Section \ref{sec-upper}.
For completeness, we include the derivation of lower bounds from \cite{2011-cheng-general-capacity} in Appendix \ref{sec-lower}.
We draw some conclusions  in Section
\ref{section-Conclusion and Future Work}.

\section{System Model}\label{section:SystemNetwork}

\subsection{Random Scaling Model}

We construct a random network
with node density $\lambda$, denoted by $\mathcal{N}(\lambda,n)$, by placing wireless nodes randomly into a square region
$\mathcal{R}(\lambda,n)=[0,\sqrt{A}]^2$ according to a Poisson point process with density $\lambda$, where $A= n/\lambda $. When
$\lambda$ is set to be $1$ (or $n$), our model corresponds to
\emph{random extended network} (REN) (or \emph{random dense network}
(RDN)).
According to Chebyshev's
inequality, we get that the number of nodes in $\mathcal{A}(a^2)$ is
within $((1-\epsilon )n, (1+\epsilon)n)$ with high probability,
where $\epsilon>0$ is an arbitrarily small constant. To simplify the
description, we assume that the number of nodes is exactly $n$,
without changing our results in the sense of order,
\cite{15Franceschetti2007,Zhengrong2008transaction,2012TC-MC}. We are mainly
concerned with the events that occur inside these squares with high
probability (w.h.p.); that is, with probability approaching one as $n
\to \infty$.

\begin{figure*}[t]
\begin{center}
\begin{tabular}{cc}
\scalebox{0.9}{\includegraphics{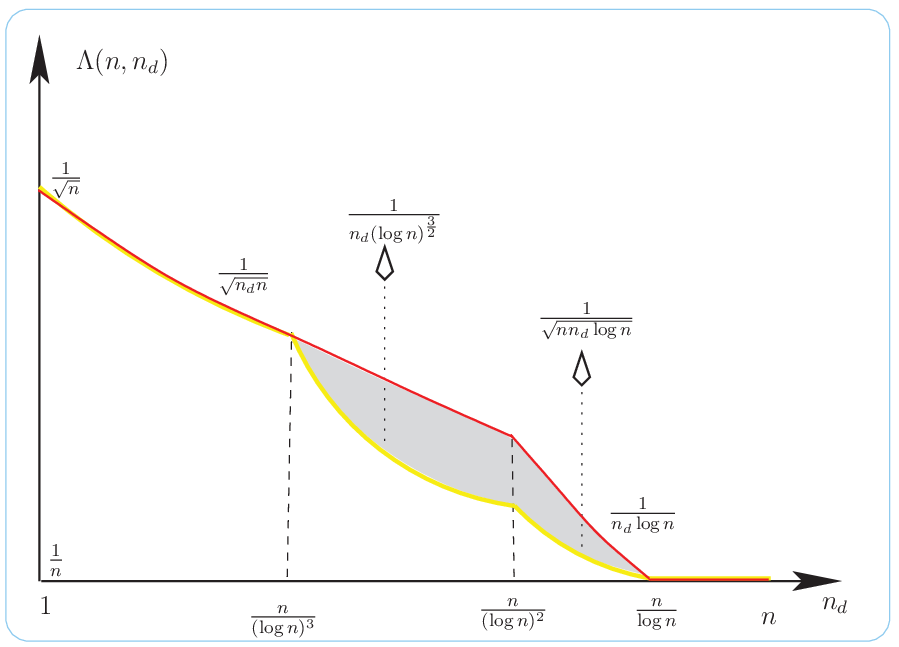}}    & \scalebox{0.9}{\includegraphics{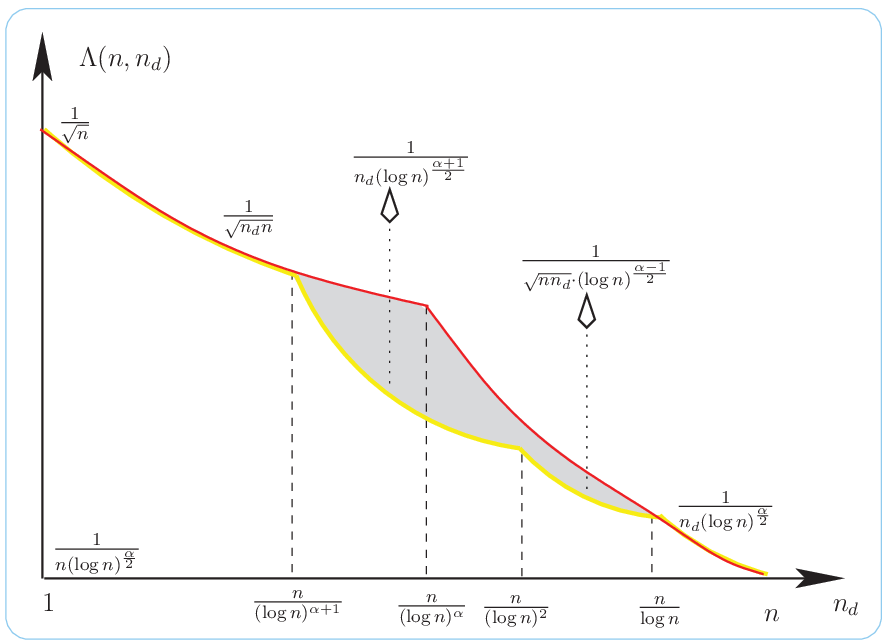}}\\
{\small (a) Capacity for RDN }&
{\small (b) Capacity for REN }
\end{tabular}
\end{center}
\caption{Results on General Multicast Capacity.
Obvious \emph{gaps} exist between the upper and the lower
bounds in the regimes $n_d  :[n/(\log n)^3,
n/\log n]$ for RDN and $n_d  :[n/(\log n)^{\alpha+1}, n/\log n]$
for REN, illustrated by the shaded regions.}
\label{fig-existingresult}
 \end{figure*}

\subsection{Session Patterns}\label{subsec-sessionpattern}

In wireless networks, there are two broad types of session patterns:
\emph{information dissemination} and \emph{information gathering}.
The former is the interest of this paper. Generally, dissemination
sessions can be further divided into two categories:
\emph{deterministic dissemination}, in which the destination(s) of a
message is (are) determined when it is generated at a source, such
as unicast, broadcast, and multicast, and \emph{opportunistic
dissemination}, such as anycast
\cite{xuan2002routing,choudhury2004mac}, and manycast
\cite{carter2003manycast} sessions, in which the destination(s) of a
message is (are) opportunistically chosen and both the paths to  the
group member(s) and the destination(s) can change dynamically
according to the network condition, such as the node movement
situation.

In this work, we focus on the general multicast sessions,
including unicast, broadcast and multicast sessions. We adopt a similar construction procedure to the one in
\cite{2012TC-MC}.
To generate the $k$-th ($1 \leq k\leq n_s$) multicast session, with source $v_{\mathcal{S},k}\in \mathcal{S}$,
denoted by $\mathcal{M}_{\mathcal{S},k}$, $n_d$ points
$p_{\mathcal{S},k_i}$ ($1 \leq i \leq n_d$, and $1 \leq n_d \leq
n-1$) are randomly and independently chosen from the deployment
region $\mathcal{R}(\lambda,n)$. Denote the set of these $n_d$
points by $\mathcal{\tilde{P}}_{\mathcal{S},k}=\{p_{\mathcal{S},k_1},
 p_{\mathcal{S},k_2},\cdots,   p_{\mathcal{S},k_{n_d}}\}$.
Let $v_{\mathcal{S},k_i}$ be
the nearest ad hoc node from $p_{\mathcal{S},k_i}$ (ties are broken randomly). In
$\mathcal{M}_{\mathcal{S},k}$,  the node $v_{\mathcal{S},k}$, serving as a source, intends to deliver data to $n_d$ destinations
$\mathcal{D}_{\mathcal{S},k}=\{v_{\mathcal{S},k_1},
 v_{\mathcal{S},k_2},\cdots,   v_{\mathcal{S},k_{n_d}}\}$  at an arbitrary data rate $\lambda_{{\mathcal{S}},k}$.
Let $\mathcal{U}_{\mathcal{S},k}=\{v_{\mathcal{S},k}\}\cup
\mathcal{D}_{\mathcal{S},k}$
 be the \emph{spanning set} of nodes for the multicast session $\mathcal{M}_{\mathcal{S},k}$.
Please see the illustration in Fig.\ref{fig-construct-m}.

\begin{figure}[t]
\begin{center}
\scalebox{1}{\includegraphics{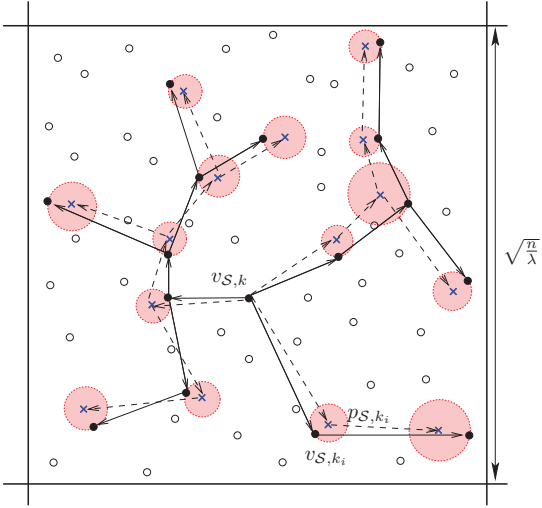}}
\caption{Multicast
session $\mathcal{M}_{\mathcal{S},k}$, \cite{2012TC-MC}. The tree consisting of solid
lines represents the Euclidean minimum spanning tree (EMST) over
$\mathcal{U}_{\mathcal{S},k}=\{v_{\mathcal{S},k_i}~|~ 0\leq i \leq
n_d\}$, denoted by $\EMST(\mathcal{U}_{\mathcal{S},k})$, where $v_{\mathcal{S},k_0}$ is $v_{\mathcal{S},k}$. The tree
consisting of dashed lines represents an Euclidean spanning tree
(EST) over $\mathcal{P}_{\mathcal{S},k}=\{p_{\mathcal{S},k_i}~|~
0\leq i \leq n_d\}$, denoted by
$\EST_0(\mathcal{P}_{\mathcal{S},k})$, where $p_{\mathcal{S},k_0}$ is $v_{\mathcal{S},k}$, and for any $0 \leq i,j
\leq n_d$, link $(p_{\mathcal{S},k_i} \to p_{\mathcal{S},k_j}) \in
\EST_0(\mathcal{P}_{\mathcal{S},k})$ if and only if
link $(v_{\mathcal{S},k_i} \to v_{\mathcal{S},k_j}) \in
\EMST(\mathcal{U}_{\mathcal{S},k})$.
} \label{fig-construct-m}
\end{center}
\end{figure}

\subsection{Communication Model}\label{subsec-interferencemodel}
Generally, there are three types of communication (interference)
models: the \emph{protocol model}\cite{02Gupta2000},
 \emph{physical model} \cite{02Gupta2000}
and \emph{generalized physical model} \cite{agarwal:cba} (along with the name
¡°Gaussian Channel model¡±, \cite{LiMobiCom08}). We adopt
the generalized physical model since it is more realistic than the
other two
\cite{agarwal:cba,15Franceschetti2007,LiMobiCom08,Alireza-arena2007}.

Let $\mathcal{K}_t$ denote a \emph{scheduling set} of links in which all
links can be scheduled simultaneously in time slot $t$.
\begin{defn}\label{defn-Gaussianmodel}
Under the generalized physical model, when a scheduling set
$\mathcal{K}_t$ is scheduled,
  the rate of a link $<u,v>\in \mathcal{K}_t$ is achieved at
\begin{equation}\label{equ-gaussianchannelmodel}
R_{u,v;t}=B\times\mathbf{1}\cdot\{<u,v>\in \mathcal{K}_t\} \times
\log(1+\mathrm{SINR}_{u,v;t}),
\end{equation}
where $\mathrm{SINR}_{u,v;t}=\frac{P\cdot
\ell(|\mathbf{x}_u-\mathbf{x}_v|) }{N_0+\sum_{<i,j>\in
\mathcal{K}_t{}/<u,v>}
  P\cdot \ell(\mathbf{x}_i-\mathbf{x}_v|)}$; $\mathbf{x}_u$ denotes the position of node $u$,
  $|\mathbf{x}_u-\mathbf{x}_v|$ represents the Euclidean distance between node $u$ and node $v$;
   $\ell(\cdot)$ denotes the power attenuation function that is assumed to depend only on the distance between the
    transmitter and the receiver
\cite{02Gupta2000,15Franceschetti2007,LiMobiCom08,chau2009capacity};
 $\ell(|\cdot|):=|\cdot|^{-\alpha}$ for dense scaling networks, and
$\ell(|\cdot|):=\min\{1, |\cdot|^{-\alpha}\}$ for extended scaling
networks \cite{15Franceschetti2007}.
\end{defn}

\section{Main Results}\label{sec-main-results}
We mainly derive the upper bounds on the general multicast capacity of random ad hoc networks.

\subsection{General Upper Bounds}

\begin{thm}\label{thm-upper-bound}
The multicast capacity for random network $\mathcal{N}(\lambda,n)$
is at most
\[ \overline{\Lambda} (\lambda,n)=\max\limits_{l_c:\mathcal{L}_c} \left\{\min\left\{\frac{\min\{1, l_c^{-\alpha}\}}{\mathbf{L}(n_s,  \frac{\sqrt{n}}{l_c\sqrt{n_d \lambda}})},
\frac{\min\{1, (\frac{\lambda}{\log
n})^{\frac{\alpha}{2}}\}}{\mathbf{L}(n_s,  \frac{ n \cdot\sqrt{ \lambda}
\cdot l_c}{n_d \cdot \sqrt{\log n}})}\right\}\right\},\]
where $\mathcal{L}_c=[{1}/{\sqrt{\lambda}},\sqrt{{\log n}/{\lambda} }]$,  and $\mathbf{L}(m,n)$ is defined in Table \ref{tab-functions}.
\end{thm}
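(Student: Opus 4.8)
The plan is to upper-bound the throughput of an \emph{arbitrary} multicast scheme by a cut argument built on continuum percolation, isolating two independent bottlenecks whose minimum governs the per-session rate, and then letting the free geometric scale $l_c$ range over all admissible values. First I would set up the Poisson Boolean model: place a disk of radius proportional to $l_c$ at each node. Since the node process has intensity $\lambda$ and $l_c\geq 1/\sqrt{\lambda}$, the expected number of points per disk $\asymp \lambda l_c^2$ exceeds (up to a constant) the critical value for continuum percolation, so by \cite{meester1996cp} a unique \emph{giant cluster} $G$ emerges that crosses $\mathcal{R}(\lambda,n)$ and carries a constant fraction of the nodes \whp. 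Its complement consists of small clusters, so that serving a node outside $G$ forces a link as long as the connectivity radius $r_{\mathrm{con}}=\sqrt{\log n/\lambda}$. The upper endpoint of $\mathcal{L}_c$ is exactly $r_{\mathrm{con}}$ and the lower endpoint $1/\sqrt{\lambda}$ is the typical nearest-neighbour spacing; I would argue that, after coarsening, every scheme is characterized by an effective backbone scale $l_c\in\mathcal{L}_c$, which is what lets the bound range over this interval.

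Next, for a \emph{fixed} effective scale $l_c$ I would split the traffic of the scheme into communications carried \emph{inside} $G$ and communications that must \emph{leave} $G$ to serve destinations in the small clusters, so that its achievable per-session rate is at most the smaller of the two partial throughputs. For the interior part, an information-theoretic cut across the middle of the region bounds the aggregate rate crossing it: each backbone link spans distance $\Theta(l_c)$, so its rate is $O(\log(1+\mathrm{SINR}))=O(\min\{1,l_c^{-\alpha}\})$, the numerator of the first term. The demand crossing the cut is controlled by the total routing length of the multicast trees, governed by $\EMST(\mathcal{U}_{\mathcal{S},k})$; distributing the $n_s$ session trees over the backbone lanes and invoking the \emph{maximum-occupancy} (balls-in-bins) estimate produces the congestion factor $\mathbf{L}\!\left(n_s,\frac{\sqrt{n}}{l_c\sqrt{n_d\lambda}}\right)$ in the denominator. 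For the exterior part, every bridging link operates at scale $r_{\mathrm{con}}$, so its rate is $O(\min\{1,(\lambda/\log n)^{\alpha/2}\})$ (note $r_{\mathrm{con}}^{-\alpha}=(\lambda/\log n)^{\alpha/2}$), and counting how many session trees must simultaneously bridge out through a common bottleneck, again by maximum occupancy, yields $\mathbf{L}\!\left(n_s,\frac{n\sqrt{\lambda}\,l_c}{n_d\sqrt{\log n}}\right)$.

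Finally I would assemble the bound: for a scheme with effective scale $l_c$ the rate is at most $\min\{t_1(l_c),t_2(l_c)\}$, and since the effective scale of \emph{any} scheme lies in $\mathcal{L}_c$, taking the supremum over schemes gives $\overline{\Lambda}(\lambda,n)=\max_{l_c\in\mathcal{L}_c}\min\{t_1(l_c),t_2(l_c)\}$, exactly the stated expression; this is also the form that can match the scale-optimized lower-bound construction, certifying tightness.

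I expect the main obstacle to be the exterior bottleneck together with the reduction of an arbitrary scheme to a single scale. Establishing that the small clusters genuinely force connectivity-scale links (rather than being circumvented by clever relaying) and quantifying the number of sessions sharing a bridging bottleneck requires careful use of the geometry of the sub-critical complement; simultaneously, pinning down the exact second argument of $\mathbf{L}$ in each term, so that the two congestion factors agree in order with the lower-bound construction of \cite{2011-cheng-general-capacity}, is the delicate computational core on which tightness ultimately rests.
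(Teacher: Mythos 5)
Your architecture coincides with the paper's: restrict the maximum link length $l_c$ of an arbitrary scheme to $[1/\sqrt{\lambda},\sqrt{\log n/\lambda}]$, run the Poisson Boolean model $\mathcal{B}(\lambda,l_c)$ to extract the unique giant component, split the scheme's links into interior and exterior ones, bound each class by a per-link rate divided by a maximum-occupancy congestion factor, take the minimum of the two, and then the maximum over $l_c$ because the optimal scheme's scale is unknown. The interior term is also obtained exactly as you sketch in its second half: the EMST length lower bound $\Omega(\sqrt{n_d n/\lambda})$ gives probability $\min\{1,\,l_c\sqrt{n_d\lambda}/\sqrt{n}\}$ that a session loads a given interior sender, and the balls-in-bins lemma converts this into $\mathbf{L}(n_s,\sqrt{n}/(l_c\sqrt{n_d\lambda}))$; no mid-region information-theoretic cut is needed or used, so that part of your framing is a harmless detour.

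The one genuine gap is the piece you explicitly defer as the ``delicate computational core'': the second argument of $\mathbf{L}$ in the exterior term. The paper obtains it from a quantitative estimate you never identify: for $l_c=o(\sqrt{\log n/\lambda})$, the largest distance $\bar{l}^{\mathrm{M}}_c$ from an exterior node to the giant component satisfies $\lambda\cdot l_c\cdot\bar{l}^{\mathrm{M}}_c=\Omega(\log n)$ with high probability (Lemma \ref{lem-distance-outside}), proved from the exponential tail $\Pr[\bar{l}_c(u)>x]\approx e^{-\varepsilon\lambda l_c x}$ of the distance-to-giant-cluster distribution combined with a bound over the linearly many exterior nodes. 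This yields $\bar{l}^{\mathrm{M}}_c=\Omega(\log n/(\lambda l_c))$, and it is this length --- not the connectivity radius itself --- that enters the probability $\min\{1,\,n_d\,\bar{l}^{\mathrm{M}}_c\sqrt{\lambda}/(n\sqrt{\log n})\}$ that a session must traverse a bridging link, which is what produces $\mathbf{L}(n_s,\,n\sqrt{\lambda}\,l_c/(n_d\sqrt{\log n}))$. Without this estimate the exterior congestion factor, and hence the match with the lower bounds in the previously open regimes, cannot be pinned down; the rest of your outline goes through essentially as written.
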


\subsection{Tight Capacity Bounds} 


In \cite{2011-cheng-general-capacity}, the general lower bounds have been provided by designing some strategies.

\begin{lem}[\cite{2011-cheng-general-capacity}]\label{thm-achievable-MT-all-schemes}
The general multicast throughput for random network
$\mathcal{N}(\lambda,n)$ can be achieved as
\[\underline{\Lambda}(\lambda,n)=\max\{\Lambda_{\mathrm{o}}(\lambda, n),\Lambda_{\mathrm{p}}(\lambda, n), \Lambda_{{\mathrm{o}\&\mathrm{h}}}(\lambda, n),\Lambda_{{\mathrm{p}\&\mathrm{h}}}(\lambda, n)\},\]
where $\Lambda_{\mathrm{o}}(\lambda,
n),\Lambda_{\mathrm{p}}(\lambda, n),
\Lambda_{{\mathrm{o}\&\mathrm{h}}}(\lambda,
n),\Lambda_{{\mathrm{p}\&\mathrm{h}}}(\lambda, n)$ are defined in
Table \ref{tab-functions}.
\end{lem}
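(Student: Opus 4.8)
The plan is to establish this achievability bound constructively: for each of the four quantities $\Lambda_{\mathrm{o}}$, $\Lambda_{\mathrm{p}}$, $\Lambda_{{\mathrm{o}\&\mathrm{h}}}$, $\Lambda_{{\mathrm{p}\&\mathrm{h}}}$ I would exhibit an explicit routing-plus-scheduling scheme that attains it \whp, after which the claim follows by retaining the best scheme in each regime of $(\lambda,n_d,n_s)$. The backbone is built in two layers. First, at the connectivity scale set by the density $\lambda$, I would partition $\mathcal{R}(\lambda,n)$ into square cells and invoke continuum (Poisson Boolean) percolation to guarantee, \whp, a dense grid of horizontal and vertical crossing paths --- the \emph{highways} --- with hops at the percolation scale $\Theta(1/\sqrt{\lambda})$, so that a single TDMA schedule with a constant spatial-reuse factor gives every highway link a rate bounded below by a constant under the generalized physical model (here $\alpha>2$ makes the aggregate interference from a tiled schedule summable, hence the SINR is bounded below). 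Second, at a coarser multicast scale I would construct the \emph{arterial roads} (ARs) that carry traffic between ordinary nodes and the highway entry/exit points; the two variants, O-ARs and P-ARs, differ in whether feeder segments are routed independently or bundled in parallel, the latter being designed precisely to remove the switching bottleneck of the second-class-highway system in \cite{2012TC-MC}.

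Next I would route the sessions using the Euclidean structure already fixed in Fig.~\ref{fig-construct-m}: for session $\mathcal{M}_{\mathcal{S},k}$ the data follow $\EMST(\mathcal{U}_{\mathcal{S},k})$, whose total edge length governs how much backbone the session consumes. Each tree edge is realized by hopping along an AR from the source (or an upstream relay) up to the nearest highway, traversing highway segments across the region, then descending along an AR into the destination cell. The four schemes correspond to the four ways of combining the feeder type with the presence or absence of the highway layer: $\Lambda_{\mathrm{o}}$ and $\Lambda_{\mathrm{p}}$ use ARs alone (appropriate when spanning sets are geometrically compact, \eg\ large $n_d$), whereas $\Lambda_{{\mathrm{o}\&\mathrm{h}}}$ and $\Lambda_{{\mathrm{p}\&\mathrm{h}}}$ add highways to carry long-range traffic (appropriate when the trees are long). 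Summing the per-edge costs and inserting the known scaling $\Theta(\sqrt{n_d\cdot n/\lambda})$ of the expected EMST length over $n_d+1$ uniform points yields the aggregate backbone demand of each session.

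The per-session throughput is then (constant link rate) divided by (maximum load on a backbone link), and the central estimate is the \emph{maximum occupancy} bound advertised in the introduction. Because the $n_d$ destinations of every session are chosen uniformly and independently, the number of session routes crossing a fixed highway or AR segment behaves like a balls-into-bins load, and I would bound its maximum over all segments \whp\ --- not merely its expectation --- for both the highway layer and, separately, the feeder (AR) layer. Dividing the constant link rate by these worst-case loads gives the achievable rate of each scheme; optimizing the free design scale (equivalently the parameter $l_c$ over $\mathcal{L}_c$ in Theorem~\ref{thm-upper-bound}) and selecting the best of the four schemes in each regime produces $\underline{\Lambda}(\lambda,n)$ exactly as tabulated in Table~\ref{tab-functions}.

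The main obstacle I anticipate is carrying the maximum-occupancy analysis out \emph{uniformly} in the density $\lambda$. Unlike the separate REN ($\lambda=1$) and RDN ($\lambda=n$) treatments in the literature, here every scale --- cell side, highway spacing, feeder length, and the resulting interference sum --- is a function of $\lambda$, so the concentration bound on the worst-case load, the percolation guarantee for the highway grid, and the SINR lower bound must be shown to hold simultaneously over the whole range $\lambda:[1,n]$. A secondary difficulty is proving that the parallel P-AR construction genuinely removes the switching bottleneck, \ie\ that bundling feeder segments does not itself create a congested segment; this is what lets $\Lambda_{\mathrm{p}}$ and $\Lambda_{{\mathrm{p}\&\mathrm{h}}}$ dominate in the intermediate regimes and, together with the occupancy bound, makes the lower bound meet the upper bound of Theorem~\ref{thm-upper-bound}.
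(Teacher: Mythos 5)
Your overall architecture is the one the paper uses: a two-tier backbone (percolation highways plus connectivity-based arterial roads in ordinary and parallel variants), session routing along a Euclidean spanning tree of the spanning set, a per-link load bound from maximum-occupancy theory (Lemma~\ref{lem-maximum-occupancy} applied through Lemma~\ref{lem-basic-link-throughput}), and the bottleneck principle followed by a maximum over the four schemes. Three points where your sketch diverges from, or would go wrong relative to, the paper's argument are worth naming. First, you divide a \emph{constant} link rate by the worst-case load on every backbone segment. That is correct for the highways (hops of length $\Theta(1/\sqrt{\lambda})=O(1)$), but the arterial-road hops have length $\Theta(\sqrt{\log n/\lambda})$, so under $\ell(d)=\min\{1,d^{-\alpha}\}$ their sustainable rate is $\Theta(\min\{1,(\lambda/\log n)^{\alpha/2}\})$; this is exactly the $\mathbf{R}_{\mathrm{O-AR}}(\lambda,n)$ and $\mathbf{R}_{\mathrm{P-AR}}(\lambda,n)$ factors in Table~\ref{tab-functions} (Lemmas~\ref{lemma-rate-ordinary} and~\ref{lemma-P-AR-rate}), and dropping them gives the wrong throughput for all $\lambda=o(\log n)$, which is most of the general density range the lemma is meant to cover. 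Second, the paper builds the highway layer by the bond-percolation discretization of \cite{15Franceschetti2007} (cells declared open with probability tending to $1-e^{-c^2}$, with many disjoint crossings per slab via Lemma~\ref{lemma-density-highway}); the Poisson Boolean continuum model is deliberately reserved for the \emph{upper} bound in this paper. Your continuum-percolation highways could be made to work, but you would need the analogous disjoint-crossings-per-slab count, which you do not supply. Third, there is no continuous optimization over $l_c\in\mathcal{L}_c$ on the achievability side: the construction scales ($\sqrt{c^2/\lambda}$ for highway cells, $3\sqrt{\log n/\lambda}$ for AR-cells) are fixed, and the maximization in the lemma is only over the four discrete schemes, so identifying your free design scale with the $l_c$ of Theorem~\ref{thm-upper-bound} conflates the converse parameter with the construction. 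The remainder of your plan (load probabilities per station, \whp\ occupancy bounds, tree-length scaling $\Theta(\sqrt{n_d\cdot n/\lambda})$ via Lemma~\ref{lemma-EMST-upperbound}) matches the paper's computation of $\mathbf{p}_{\mathrm{o}}$, $\mathbf{p}_{\mathrm{p}}$, $\mathbf{p}_{\mathrm{oh},\cdot}$ and $\mathbf{p}_{\mathrm{ph},\cdot}$.
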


\begin{table}[t] \renewcommand{\arraystretch}{1.8}
\caption{Defined Functions and Parameters.} \label{tab-functions}
 \centering
 \scalebox{1}{\begin{tabular}{p{1.4cm}p{6.4cm}}
 \hline Functions  &Definitions\\
  \hline \hline
      $\mathbf{L}(m, n)$  & $\left\{ \begin{array}{lrl}
 \Theta\left(\frac{\log n}{\log \frac{n}{m}}\right) &
 \mbox{when}& m:[1, \frac{n}{\mathrm{polylog}(n)})\\
  \Theta\left(\frac{\log n}{\log \frac{n\log n}{m}}\right) &
 \mbox{when}& m:[\frac{n}{\mathrm{polylog}(n)}, n\log n)\\
 \Theta\left(\frac{m}{n}\right) & \mbox{when}& m =\Omega(n\log n)\\
      \end{array} \right.  $                \\
                   \hline
 $ \mathbf{R}_{\mathrm{O-AR}}(\lambda, n)$    & $ \left\{
\begin{array}{lll}
\Theta(\frac{\lambda^{\frac{\alpha}{2}}}{(\log n)^{\frac{\alpha}{2}}}) & \mbox{when} &  \lambda :[1, \log n] \\
\Theta(1) & \mbox{when} &   \lambda :[\log n, n] \\
  \end{array} \right.
$               \\
                                     \hline
$ \mathbf{R}_{\mathrm{P-AR}}(\lambda, n)$   & $\left\{
\begin{array}{lll}
\Theta(\frac{\lambda^{\frac{\alpha}{2}}}{(\log n)^{\frac{\alpha}{2}}}) & \mbox{when} &  \lambda :[1, (\log n)^{1-\frac{2}{\alpha}}] \\
\Theta(\frac{1}{\log n} ) & \mbox{when} &   \lambda :[(\log n)^{1-\frac{2}{\alpha}}, n] \\
  \end{array} \right.
$                 \\
                                     \hline
$\mathbf{p}_{\mathrm{o}}$  & $\left\{ \begin{array}{ll}
\Theta(\sqrt{ \frac{n_d \log n}{n} }) & \mathrm{when}~n_d   =   O(\frac{n}{\log n}) \\
\Theta(1) & \mathrm{when}~n_d=\Omega(\frac{n}{\log n})\\
  \end{array} \right.  $               \\
                   \hline
   $\mathbf{p}_{\mathrm{p}}$   &   $\left\{ \begin{array}{ll}
\Theta(\frac{\sqrt{n_d}}{\sqrt{n\log n}}) & \mathrm{when}~n_d :[1,\frac{n}{\log n}]   \\
\Theta(\frac{n_d}{n}) & \mathrm{when}~n_d :[\frac{n}{\log n},n]\\
  \end{array} \right.  $               \\
                   \hline
  $\mathbf{p}_{\mathrm{oh},\mathrm{O-AR}}$   &  $\left\{ \begin{array}{ll}
\Theta( \frac{n_d \cdot (\log n)^{3/2}}{n} ) & \mathrm{when}~n_d :[1,\frac{n}{(\log n)^{3/2}}]   \\
\Theta(1) & \mathrm{when}~n_d :[\frac{n}{(\log n)^{3/2}},n]\\
  \end{array} \right.  $                  \\
\hline
 $\mathbf{p}_{\mathrm{oh},\mathrm{H}}$,  $\mathbf{p}_{\mathrm{ph},\mathrm{H}}$  & $\left\{ \begin{array}{ll}
\Theta(\sqrt{\frac{n_d}{n}}) & \mathrm{when}~n_d:[1,\frac{n}{(\log n)^2}] \\
\Theta(\frac{n_d\log n}{n}) & \mathrm{when}~n_d:[\frac{n}{(\log n)^2},\frac{n}{\log n}]\\
\Theta(1) & \mathrm{when}~n_d:[\frac{n}{(\log n)},n]\\
  \end{array} \right.     $             \\
\hline $\mathbf{p}_{\mathrm{ph},\mathrm{P-AR}}$   & $\left\{
\begin{array}{ll}
\Theta( \frac{n_d \cdot  \sqrt{\log n} }{n} ) & \mathrm{when}~n_d :[1,\frac{n}{ \sqrt{\log n} }]   \\
\Theta(1) & \mathrm{when}~n_d :[\frac{n}{ \sqrt{\log n} },n]\\
  \end{array} \right.  $ \\
\hline
$\Lambda_{\mathrm{o}}(\lambda, n)$  & ~${\mathbf{R}_{\mathrm{O-AR}}(\lambda, n)}/{\mathbf{L}(n_s,\frac{1}{\mathbf{p}_\mathrm{o}})}$\\
\hline
$\Lambda_{\mathrm{p}}(\lambda, n)$ & ~${\mathbf{R}_{\mathrm{P-AR}}(\lambda, n)}/{\mathbf{L}(n_s,\frac{1}{\mathbf{p}_\mathrm{p}})}$\\
\hline $\Lambda_{{\mathrm{o}\&\mathrm{h}}}(\lambda, n)$ &
~$\min\left\{\frac{\mathbf{R}_{\mathrm{O-AR}}(\lambda,
n)}{\mathbf{L}(n_s,
\frac{1}{\mathbf{p}_{\mathrm{oh},\mathrm{O-AR}}})},
\frac{1}{\mathbf{L}(n_s,\frac{1}{\mathbf{p}_{\mathrm{oh},\mathrm{H}}})}\right\}$ \\
\hline $\Lambda_{{\mathrm{p}\&\mathrm{h}}}(\lambda, n)$ &
~$\min\left\{\frac{\mathbf{R}_{\mathrm{P-AR}}(\lambda,
n)}{\mathbf{L}(n_s,
\frac{1}{\mathbf{p}_{\mathrm{ph},\mathrm{P-AR}}})},
\frac{1}{\mathbf{L}(n_s,\frac{1}{\mathbf{p}_{\mathrm{ph},\mathrm{H}}})}\right\}$\\
\hline
  \end{tabular}
  }
  \end{table}

We specialize the general results from Theorem \ref{thm-upper-bound}
and Lemma \ref{thm-achievable-MT-all-schemes} to the cases that $\lambda=n$ and  $\lambda=1$, corresponding to the RDN and REN.
Following a common  assumption in most existing works, \ie,
$n_s=\Theta(n)$, we  show that for both RDN and REN
our results give the first tight bounds on multicast capacity over the whole regime $n_d:[1,n]$.

\subsubsection{Random Dense Networks}
In Theorem \ref{thm-upper-bound},
$\overline{\Lambda} (n,n)$, \ie,
the upper bound on the capacity achieves its maximum value
by choosing $l_c=\Theta(\frac{1}{\sqrt{n}})$ when $n_d=O( {n}/{(\log n)^2})$; and also achieves its maximum value by choosing $l_c=\Theta(\sqrt{\log n}/\sqrt{n})$
when $n_d=\Omega( {n}/{(\log n)^2})$.
Specifically, the multicast capacity  is
  \emph{at most} of order
  \begin{equation}\label{equ-ns=n-throughput-2}
\left\{ \begin{array}{lrl}
\Theta(\frac{1}{\sqrt {n_d
n}}) & \mathrm{when}& n_d:[1,\frac{n}{({\log n})^3}]\\
 \Theta(\frac{1}{n_d(\log n)^{\frac{3}{2}}}) &
 \mathrm{when}& n_d:[\frac{n}{({\log n})^3},\frac{n}{(\log n)^2}]\\
 \Theta(\frac{1}{\sqrt{n n_d \log n}}) & \mathrm{when} & n_d :[\frac{n}{(\log n)^2},\frac{n}{\log n}]\\
\Theta(\frac{1}{n}) & \mathrm{when}& n_d :[\frac{n}{\log n},n]\\
   \end{array} \right.
\end{equation}

This result is \emph{exciting},  because the multicast
throughput as in Equation (\ref{equ-ns=n-throughput-2})  had been proven to
be achievable by Keshavarz-Haddad \etal in \cite{Alireza-2008}. Moreover, they derived an upper bound as
\begin{equation}\label{equ-RDN-A-upperbound}
\left\{ \begin{array}{lrl} O(\frac{1}{\sqrt {n_d n}}) &\mathrm{when}
& n_d :  [1,\frac{n}{({\log n})^{2}}]
\\
O(\frac{1}{n_d\cdot \log n}) & \mathrm{when} & n_d:
[\frac{n}{({\log n})^{2}},\frac{ n}{\log n} ]
\\
O(\frac{1}{n}) &\mathrm{when} & n_d :  [\frac{ n}{\log n}, n]
  \end{array} \right.
\end{equation}
It is clear that there is a gap between the upper and the lower
bounds in the regime $n_d:(\frac{n}{({\log n})^3}, \frac{n}{\log
n})$, as illustrated in Fig.\ref{fig-existingresult}(a).
In this work, we \emph{close} this gap. Moreover, by
Lemma \ref{thm-achievable-MT-all-schemes}, this optimal throughput
in Equation (\ref{equ-ns=n-throughput-2}) can also be achieved by using our schemes $\mathbb{M}_{\mathrm{o}}$ cooperatively and
$\mathbb{M}_{\mathrm{o}\&\mathrm{h}}$ that are defined in
Table \ref{tab-notions} in Appendix \ref{sec-lower}.

\subsubsection{Random Extended Networks}
In Theorem \ref{thm-upper-bound},
$\overline{\Lambda} (1,n)$  achieves its maximum value
by letting $l_c=\Theta(1)$ when $n_d=O( {n}/{(\log n)^2})$; and achieves its maximum value by letting $l_c=\Theta(\sqrt{\log n})$
when $n_d=\Omega( {n}/{(\log n)^2})$.
Specifically,  the
  multicast capacity  is
  \emph{at most} of order
  \begin{equation}\label{equ-REN-throughput}
\left\{ \begin{array}{lrl} \Theta(\frac{1}{\sqrt {n_d
n}}) & \mathrm{when}& n_d:[1,\frac{n}{({\log n})^{\alpha+1}}]\\
 \Theta(\frac{1}{n_d(\log n)^{\frac{\alpha+1}{2}}}) &
 \mathrm{when}& n_d:[\frac{n}{({\log n})^{\alpha+1}},\frac{n}{(\log n)^2}]\\
   \Theta(\frac{1}{\sqrt{nn_d}\cdot(\log n)^\frac{\alpha-1}{2}})& \mathrm{when} & n_d :[\frac{n}{(\log n)^2},\frac{n}{\log n}]\\
 \Theta(\frac{1}{n_d(\log n)^{\frac{\alpha}{2}}}) & \mathrm{when}& n_d :[\frac{n}{\log n},n]\\
   \end{array} \right.
\end{equation}

Also, such multicast throughput had been achieved by the schemes in
\cite{2012TC-MC}, and the upper bounds were proposed as:
\begin{equation}\label{equ-REN-upperbound}
\left\{ \begin{array}{lrl} O(\frac{1}{\sqrt {n_d
n}}) & \mathrm{when}& n_d:[1,\frac{n}{({\log n})^\alpha}]\\
O(\frac{1}{n_d(\log n)^{\frac{\alpha}{2}}})&
 \mathrm{when}& n_d:[\frac{n}{({\log n})^\alpha},n]\\
   \end{array} \right.
\end{equation}
As illustrated in Fig.\ref{fig-existingresult}(b), we close the gap between the upper and the lower bounds in the regime $n_d:[\frac{n}{({\log n})^{\alpha+1}},\frac{n}{\log n}]$.
In addition, by Lemma \ref{thm-achievable-MT-all-schemes}, this optimal throughput in Equation (\ref{equ-REN-throughput})
can  be equally achieved by using our schemes $\mathbb{M}_{\mathrm{p}}$  and $\mathbb{M}_{\mathrm{p}\&\mathrm{h}}$ cooperatively
that are defined in Table \ref{tab-notions} in Appendix \ref{sec-lower}.

\section{Technical Preparations}\label{sec-technical-preparations}

\subsection{Maximum Occupancy}
We use the results in \emph{maximum occupancy theory} to derive the
lower bounds of the multicast throughput. Now we introduce the following result from \cite{raab-balls},
\cite{mitzenmacher1996power} and \cite{liu2008data}.

\begin{lem}\label{lem-maximum-occupancy}
Let $\mathbf{L}(m, n)$ be the random variable that counts the
maximum number of balls in any bin, if we throw $m$ balls
independently and uniformly at random into $n$ bins. Then,
it holds that
\whp,
\begin{equation*}\label{equ:maximum-occupancy}
\mathbf{L}(m, n)=\left\{ \begin{array}{lrl}
 \Theta\left(\frac{\log n}{\log \frac{n}{m}}\right) &
 \mbox{when}& m:[1, \frac{n}{\mathrm{polylog}(n)})\\
  \Theta\left(\frac{\log n}{\log \frac{n\log n}{m}}\right) &
 \mbox{when}& m:[\frac{n}{\mathrm{polylog}(n)}, n\log n)\\
 \Theta\left(\frac{m}{n}\right) & \mbox{when}& m =\Omega(n\log n)\\
      \end{array} \right.
\end{equation*}
\end{lem}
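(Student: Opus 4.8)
The plan is to treat $\mathbf{L}(m,n)$ as the maximum of $n$ identically distributed (but weakly dependent) bin loads, bounding it from above by a first-moment/union-bound argument and from below by a second-moment argument, with the heavy-load regime $m=\Omega(n\log n)$ handled separately by concentration. Fix a bin; its load $X_i$ is distributed as $\mathrm{Bin}(m,1/n)$, so for any integer $k$ we have $\Pr[X_i\ge k]\le\binom{m}{k}n^{-k}\le (em/(kn))^{k}$. A union bound over the $n$ bins gives $\Pr[\max_i X_i\ge k]\le n\,(em/(kn))^{k}$, and the upper bound on $\mathbf{L}(m,n)$ is the smallest $k^\ast$ for which this quantity is $o(1)$, i.e.\ the smallest $k$ with $k\log\!\big(kn/(em)\big)>\log n$.

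First I would pin down $k^\ast$ by substituting a candidate $k=(1+\epsilon)\log n/\log(n/m)$ (respectively with $\log(n\log n/m)$ in the second regime) and checking that the union bound vanishes. In the light-load regime $m:[1,n/\mathrm{polylog}(n))$ the factor $\log(n/m)$ is large, so $\log(kn/(em))=(1+o(1))\log(n/m)$ and the product exceeds $\log n$, giving $\mathbf{L}(m,n)=O(\log n/\log(n/m))$. In the regime $m:[n/\mathrm{polylog}(n),n\log n)$ the correct normalization is $\log(n\log n/m)=\log(n/m)+\log\log n$: the extra $\log\log n$ reflects the correction $\log k=\Theta(\log\log n)$, which is subdominant next to $\log(n/m)$ when $n/m$ is large (regime~1) but of comparable order once $m\gtrsim n/\mathrm{polylog}(n)$ (regime~2). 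For $m=\Omega(n\log n)$ the per-bin mean $\mu:=m/n=\Omega(\log n)$ is large, so a multiplicative Chernoff bound makes $\Pr[X_i\ge(1+\delta)\mu]$ smaller than $n^{-2}$ for a suitable constant $\delta$, and the union bound yields $\mathbf{L}(m,n)=O(m/n)$; the matching lower bound is the trivial deterministic fact that the maximum is at least the average $\mu$.

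For the lower bounds in the two light regimes I would apply the second-moment method. Let $Y_k$ count the bins receiving at least $k$ balls, with $k=(1-\epsilon)\log n/\log(n/m)$ (resp.\ $\log(n\log n/m)$). Estimating $\Pr[X_i\ge k]$ from $\binom{m}{k}(1/n)^k(1-1/n)^{m-k}$ via Stirling, I would show $\mathbb{E}[Y_k]=n\,\Pr[X_i\ge k]\to\infty$ for this slightly smaller $k$. Since distinct bin loads are negatively associated, the indicators $\mathbf{1}[X_i\ge k]$ are negatively correlated, so $\mathrm{Var}(Y_k)\le\mathbb{E}[Y_k]+\sum_{i\ne j}\big(\Pr[X_i\ge k,X_j\ge k]-\Pr[X_i\ge k]\Pr[X_j\ge k]\big)\le\mathbb{E}[Y_k]$ because the cross terms are non-positive. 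Chebyshev then gives $\Pr[Y_k=0]\le\mathrm{Var}(Y_k)/\mathbb{E}[Y_k]^2\le 1/\mathbb{E}[Y_k]\to 0$, so \whp\ some bin holds at least $k$ balls, matching the upper bound up to the constant absorbed into $\Theta$.

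The main obstacle I anticipate is the tight asymptotic evaluation of the binomial tail $\binom{m}{k}n^{-k}$ in the intermediate regime $m:[n/\mathrm{polylog}(n),n\log n)$, where neither the Poisson picture ($m\ll n$) nor the Gaussian/Chernoff picture ($m\gg n\log n$) is cleanly valid, and where the $\log k$ correction inside the exponent is precisely what shifts the normalizing denominator from $\log(n/m)$ to $\log(n\log n/m)$. Controlling that correction to the precision needed so that the first-moment upper bound and the second-moment lower bound agree (both equal to $\Theta$ of the same expression) is the delicate step; everything else reduces to standard Stirling estimates, Chernoff bounds, and the negative-association variance bound. As this is a known result, an alternative is to invoke the tight analysis of Raab and Steger directly and merely verify that our three-regime parametrization of $m$ coincides with their thresholds.
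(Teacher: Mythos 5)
The paper does not prove this lemma at all: it is imported verbatim as a known result, with the three-regime formula taken directly from Raab and Steger's ``balls into bins'' analysis (the references \cite{raab-balls}, \cite{mitzenmacher1996power}, \cite{liu2008data} in the text). Your outline is therefore not competing with an argument in the paper but reconstructing the cited one, and it is the standard, correct route: the first-moment/union bound $n(em/(kn))^k$ for the upper bound, the second-moment method on the count $Y_k$ of heavily loaded bins combined with negative association of multinomial loads for the lower bound, and a multiplicative Chernoff bound plus the trivial ``maximum exceeds the mean'' observation for the regime $m=\Omega(n\log n)$. Your diagnosis of where the work lies is also right -- the $\log k=\Theta(\log\log n)$ correction inside $\log\bigl(kn/(em)\bigr)$ is exactly what moves the denominator from $\log(n/m)$ to $\log(n\log n/m)$ once $m\gtrsim n/\mathrm{polylog}(n)$. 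One small imprecision: at the boundary of the first regime (say $m=n/(\log n)^c$) the correction $\log k$ is of the \emph{same} order as $\log(n/m)$, not subdominant, so your claim $\log(kn/(em))=(1+o(1))\log(n/m)$ fails there; it only holds up to a constant factor. Since the lemma asserts only a $\Theta$ bound this costs nothing, but the $(1+o(1))$ phrasing should be weakened to $\Theta(\cdot)$ near that threshold. With that caveat, the proposal is a sound sketch of the proof the paper chose to cite rather than reproduce.
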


\subsection{Network Throughput by Occupancy Theory}

We give a technical lemma as a basic argument of the analysis of
network capacity.

\begin{lem}\label{lem-basic-link-throughput}
Given a multicast scheme $\mathbb{M}$, for any link initiating from
a node $u$, say $uv$, if it can sustain a rate of
$\mathbf{R}(\lambda,n)$, and any multicast session shares the
bandwidth of link $uv$ with the probability of $\mathbf{p}$, then the
throughput along link $uv$  is of order $\Theta(\Lambda(\lambda,n))$,
where
$\Lambda(\lambda,n)=\frac{\mathbf{R}(\lambda,n)}{\mathbf{L}(n_s,
{1}/{\mathbf{p}})}$.
\end{lem}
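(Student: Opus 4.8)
The plan is to reduce the contention on link $uv$ to a balls-into-bins experiment and then invoke the maximum-occupancy estimate of Lemma~\ref{lem-maximum-occupancy}. The governing quantity is the number of distinct sessions that simultaneously demand the bandwidth of $uv$: since the link sustains an aggregate rate $\mathbf{R}(\lambda,n)$ and this rate is time-shared among the contending sessions, the per-session rate is $\mathbf{R}(\lambda,n)$ divided by the contention count, so everything hinges on estimating that count sharply from both sides.

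First I would model the demand of each session on $uv$ as an independent Bernoulli trial with success probability $\mathbf{p}$, so that across the $n_s$ sessions the number of sessions routed through $uv$ is a sum of $n_s$ independent indicators, each equal to $1$ with probability $\mathbf{p}$. Interpreting $1/\mathbf{p}$ as the effective number of bins and the $n_s$ sessions as balls thrown independently and uniformly, the number of sessions that land on any one link has exactly the occupancy distribution of a single bin, and the worst-case contention over the family of such links is the maximum occupancy. Applying Lemma~\ref{lem-maximum-occupancy} with $m=n_s$ balls and $1/\mathbf{p}$ bins yields that, w.h.p., the maximum number of sessions sharing any single link is $\Theta\bigl(\mathbf{L}(n_s,1/\mathbf{p})\bigr)$.

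From this two-sided control I would close the argument in the usual bottleneck fashion. For the converse (upper bound), the bottleneck link is shared by $\Omega\bigl(\mathbf{L}(n_s,1/\mathbf{p})\bigr)$ sessions w.h.p., so even under perfect time-sharing no session can receive more than $O\bigl(\mathbf{R}(\lambda,n)/\mathbf{L}(n_s,1/\mathbf{p})\bigr)$ along it. For the matching achievability (lower bound), no link carries more than $O\bigl(\mathbf{L}(n_s,1/\mathbf{p})\bigr)$ sessions w.h.p.; allocating each contending session an equal fraction of the link rate $\mathbf{R}(\lambda,n)$ then guarantees every session a rate of $\Omega\bigl(\mathbf{R}(\lambda,n)/\mathbf{L}(n_s,1/\mathbf{p})\bigr)$. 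Combining the two bounds gives the claimed order $\Theta\bigl(\mathbf{R}(\lambda,n)/\mathbf{L}(n_s,1/\mathbf{p})\bigr)=\Theta(\Lambda(\lambda,n))$.

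The main obstacle I anticipate is rigorously justifying the identification of ``session uses $uv$ with probability $\mathbf{p}$'' with a uniform throw into $1/\mathbf{p}$ bins: one must argue that the per-link demand indicators are, at least asymptotically, independent across sessions (which follows from the independent generation of sources and destinations) and that $1/\mathbf{p}$ plays the role of an integer bin count in the order sense, and one must verify that the relevant regime of the pair $(n_s,1/\mathbf{p})$ falls into the branch of Lemma~\ref{lem-maximum-occupancy} that produces the stated $\mathbf{L}$. A secondary point is confirming that focusing on the single worst-case bottleneck link does not lose a factor in the scaling, i.e.\ that the maximum occupancy, and not merely the typical occupancy, is the operative quantity for the network-wide per-session throughput.
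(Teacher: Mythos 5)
Your proposal is correct and matches the paper's intended argument: the paper states this lemma without an explicit proof, presenting it immediately after the maximum-occupancy result (Lemma~\ref{lem-maximum-occupancy}) precisely as the balls-into-bins reduction you describe, with the $n_s$ sessions as balls, $1/\mathbf{p}$ as the effective bin count, and the worst-case (bottleneck) occupancy $\mathbf{L}(n_s,1/\mathbf{p})$ dividing the sustainable link rate $\mathbf{R}(\lambda,n)$. Your two-sided bottleneck argument and the caveats you flag (asymptotic independence of the per-session indicators and the fact that the maximum, not typical, occupancy governs the per-session throughput) are exactly the points the paper implicitly relies on.
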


\subsection{The Tail of Poisson Trials}

\begin{lem}[\cite{motwani1995randomized}]\label{lem-poisson-trial-tail}
Let $X_1$, $X_2$, $\cdots$, $X_n$ be independent Poisson trials such that, for $1\leq i \leq n$, $\Pr[X_i=1]=p_i$, where
$0<p_i<1$. Then, for $X=\sum_{i=1}^n X_i$, $\mu=\mathrm{E}(X)=\sum_{i=1}^n p_i$, and any $\delta>0$,
\[
\Pr[X>(1+\delta)\mu]<\left[ \frac{e^\delta}{(1+\delta)^{1+\delta}}\right]^\mu.
\]
\end{lem}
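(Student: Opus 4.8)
The plan is to use the standard moment-generating-function (Chernoff) technique. First I would fix an arbitrary parameter $t>0$ and apply Markov's inequality to the nonnegative random variable $e^{tX}$, exploiting the monotonicity of $x\mapsto e^{tx}$: since the event $\{X>(1+\delta)\mu\}$ coincides with $\{e^{tX}>e^{t(1+\delta)\mu}\}$, Markov's inequality gives
\[
\Pr[X>(1+\delta)\mu]\leq \frac{\mathrm{E}[e^{tX}]}{e^{t(1+\delta)\mu}}.
\]
The advantage of passing to $e^{tX}$ is that the exponential converts the sum $X=\sum_{i=1}^n X_i$ into a product, so the independence of the trials can be exploited multiplicatively.

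Next I would evaluate the moment generating function. By independence of the $X_i$ and the fact that each is a Poisson (Bernoulli) trial with $\mathrm{E}[e^{tX_i}]=p_i e^t+(1-p_i)=1+p_i(e^t-1)$, I obtain
\[
\mathrm{E}[e^{tX}]=\prod_{i=1}^n\bigl(1+p_i(e^t-1)\bigr).
\]
Applying the elementary inequality $1+x\leq e^x$ with $x=p_i(e^t-1)>0$, and recalling $\mu=\sum_{i=1}^n p_i$, I would bound this product by $\exp\bigl((e^t-1)\mu\bigr)$. Combining with the Markov step yields, for every $t>0$,
\[
\Pr[X>(1+\delta)\mu]\leq \exp\!\bigl(\mu\,(e^t-1-t(1+\delta))\bigr).
\]

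Finally I would optimize the free parameter. Differentiating the exponent $e^t-1-t(1+\delta)$ in $t$ gives $e^t-(1+\delta)$, which vanishes at $t=\ln(1+\delta)$; this choice is admissible because $\delta>0$ forces $t>0$, and the positive second derivative $e^t$ confirms it is the minimizer. Substituting $t=\ln(1+\delta)$ collapses the exponent to $\delta-(1+\delta)\ln(1+\delta)$, whence
\[
\Pr[X>(1+\delta)\mu]\leq \left(\frac{e^\delta}{(1+\delta)^{1+\delta}}\right)^{\mu},
\]
which is the claimed bound. I do not expect a genuine obstacle, as every step is routine; the only points requiring care are verifying that the optimizing $t=\ln(1+\delta)$ is positive (so the Markov step is legitimate) and that $1+x\leq e^x$ is applied in the correct direction. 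The strict inequality asserted in the lemma follows because $1+x<e^x$ for $x\neq 0$, and here $x=p_i(e^t-1)=p_i\delta>0$ by the nondegeneracy hypothesis $0<p_i<1$, so the product bound, and hence the final estimate, is strict.
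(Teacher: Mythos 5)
Your proof is correct: it is the standard Chernoff/moment-generating-function argument, and all steps (Markov on $e^{tX}$, factorization by independence, the bound $1+x<e^x$, and the optimization $t=\ln(1+\delta)$) are carried out properly, including the justification of strictness. The paper itself gives no proof of this lemma --- it is imported verbatim from the cited reference \cite{motwani1995randomized} --- and your argument is essentially the textbook proof found there, so there is nothing to reconcile.
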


\subsection{Euclidean Spanning Tree}

\begin{lem}[\cite{2012TC-MC}]\label{lem-K-EMST-length}
If $X_i$, $1\leq i \leq \infty$, are uniformly distributed on
$[0,a]^d$, for a set $\mathcal{U}(n)=\{X_1, X_2,\cdots, X_n\}$,
denote its Euclidean minimum spanning tree (EMST) by
$\EMST(\mathcal{U}(n))$. Under such deployment model,
build $K(n)$ sets, denoted by  $\mathcal{U}_1(n),
\mathcal{U}_2(n), \cdots,\mathcal{U}_{K(n)}(n)$, it holds that
\begin{equation}\label{equ-K-EMST-length}
    \Pr\left[\lim_{n\to \infty} \frac{\sum_{k=1}^{K(n)}\|\EMST(\mathcal{U}_k(n))\|}{{K(n)}\cdot a \cdot n^{1-\frac{1}{d}}} = \nu(d)   \right]=1.
\end{equation}
\end{lem}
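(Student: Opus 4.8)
The plan is to reduce the statement to the classical almost-sure asymptotics of the minimum spanning tree length for a \emph{single} point set, and then to promote that to the normalized \emph{average} over the $K(n)$ sets. First I would strip off the scaling factor $a$: the EMST length is positively homogeneous of degree one in the coordinates, so mapping $[0,a]^d$ onto the unit cube $[0,1]^d$ gives $\|\EMST(\mathcal{U}_k(n))\|=a\cdot\|\EMST(\mathcal{U}_k^{\circ}(n))\|$, where $\mathcal{U}_k^{\circ}(n)$ is the corresponding set of $n$ points uniform on $[0,1]^d$. It therefore suffices to show that the average of $Y_k^{(n)}:=\|\EMST(\mathcal{U}_k^{\circ}(n))\|/n^{1-1/d}$ converges almost surely to $\nu(d)$.

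The single-set ingredient is Steele's subadditive Euclidean functional theorem (the Beardwood--Halton--Hammersley theorem for the minimum spanning tree): for i.i.d.\ uniform points on $[0,1]^d$ with $d\ge 2$ one has $Y^{(n)}\to\nu(d)$ almost surely and in $L^1$, where $\nu(d)$ is the dimension-dependent MST constant; in particular $\mathbb{E}[Y^{(n)}]\to\nu(d)$. Assuming, as in the multicast construction of \cite{2012TC-MC}, that the $K(n)$ spanning sets are drawn independently, for each fixed $n$ the variables $Y_1^{(n)},\dots,Y_{K(n)}^{(n)}$ are i.i.d.\ copies of $Y^{(n)}$, so the sample mean $\bar Y^{(n)}=\frac{1}{K(n)}\sum_{k=1}^{K(n)}Y_k^{(n)}$ already satisfies $\mathbb{E}[\bar Y^{(n)}]\to\nu(d)$.

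The remaining, and genuinely delicate, step is to obtain almost-sure (not merely in-probability) convergence of $\bar Y^{(n)}$ when $K(n)$ grows with $n$, since a term-by-term use of the single-set limit only controls finitely many summands. I would supply this via the exponential concentration of the MST functional. By Talagrand-type concentration for subadditive Euclidean functionals (with the Kesten--Lee variance scaling $\Var(\|\EMST\|)=O(n^{(d-2)/d})$), the normalized length is sub-Gaussian, $\Pr[\,|Y^{(n)}-\mathbb{E}Y^{(n)}|>\epsilon\,]\le 2\exp(-c\,\epsilon^2 n)$, so the i.i.d.\ average obeys
\[
\Pr\bigl[\,|\bar Y^{(n)}-\mathbb{E}\bar Y^{(n)}|>\epsilon\,\bigr]\ \le\ 2\exp\!\bigl(-c\,\epsilon^2\,K(n)\,n\bigr)\ \le\ 2\exp\!\bigl(-c\,\epsilon^2\,n\bigr),
\]
using $K(n)\ge 1$. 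Since $\sum_n \exp(-c\epsilon^2 n)<\infty$ for every $\epsilon>0$, the Borel--Cantelli lemma gives $|\bar Y^{(n)}-\mathbb{E}\bar Y^{(n)}|\to 0$ almost surely; combined with $\mathbb{E}\bar Y^{(n)}\to\nu(d)$ this yields $\bar Y^{(n)}\to\nu(d)$ almost surely, which is exactly (\ref{equ-K-EMST-length}) once the factor $a$ is restored.

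The main obstacle I anticipate is precisely this promotion to almost-sure convergence for a growing number of sets: the BHH/Steele limit is an almost-sure statement along one nested sequence and says nothing on its own about a diagonal average of a triangular array. The concentration-plus-Borel--Cantelli route above circumvents this, but it genuinely requires the \emph{fast} decay of the per-set fluctuations of $\|\EMST\|$; a mere second-moment bound only delivers convergence in probability through Chebyshev, and the summability needed for almost-sure convergence (for instance when $K(n)=\Theta(1)$) forces the use of the exponential tail rather than a variance estimate.
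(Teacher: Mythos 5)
Your proposal is essentially sound, and it rests on the same key ingredient the paper invokes: the paper does not prove this lemma at all, but simply asserts that it "can be straightforwardly proven according to Theorem 2 of Steele (1988)" and refers the reader to Lemma D in the appendices of \cite{2012TC-MC}. So the single-set step of your argument (homogeneity to reduce $[0,a]^d$ to $[0,1]^d$, then the Beardwood--Halton--Hammersley/Steele almost-sure and $L^1$ limit $\|\EMST\|/n^{1-1/d}\to\nu(d)$) is exactly the cited ingredient, and you are right that the genuinely nontrivial content left unaddressed by the paper is the promotion from one nested sequence to a triangular array of $K(n)$ independent copies averaged at level $n$. Your concentration-plus-Borel--Cantelli route is a legitimate way to close that gap, and your closing remark correctly diagnoses why a variance bound alone cannot deliver almost-sure convergence when $K(n)=\Theta(1)$. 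The one step you should not present as routine is the sub-Gaussian tail $\Pr[|Y^{(n)}-\mathbb{E}Y^{(n)}|>\epsilon]\le 2\exp(-c\epsilon^2 n)$ for the \emph{normalized} MST length: a plain bounded-differences (Azuma/McDiarmid) argument gives only $2\exp(-\epsilon^2/C)$, which does not decay in $n$ and is useless for Borel--Cantelli; you need the sharper Rhee--Talagrand-type concentration for subadditive Euclidean functionals (available for the MST in $d=2$, the case relevant here, e.g.\ via the isoperimetric methods in Yukich's monograph), and for general $d$ the tail exponent is not always $t^2$. With that citation made explicit, your write-up is actually more complete than what the paper offers, since the paper leaves the entire multi-set extension to an external reference.
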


This lemma can be straightforwardly proven according to Theorem 2 of \cite{steele1988growth}.
Please see the detailed proof of Lemma D in the appendices of \cite{2012TC-MC}.

For any $n_s$ general multicast sessions constructed by the method in Section \ref{subsec-sessionpattern},
by a similar procedure to Lemma 7 of  \cite{2012TC-MC},
we have,
\begin{lem}\label{lem-n-s-tree-length-o}
For all multicast sessions $\mathcal{M}_{\mathcal{S},k}$ ($ 1 \leq k \leq n_s$),
it holds that for $n_d=o(\frac{n}{\log n})$,
\[\sum\nolimits_{k=1}^{n_s}  \|\EMST(\mathcal{D}_{\mathcal{S},k})\| =\Omega(n_s\cdot\sqrt{n_d \cdot n}),\]
where $\EMST(\mathcal{D}_{\mathcal{S},k})$ denotes the Euclidean minimum spanning tree (EMST) over the destination set
$\mathcal{D}_{\mathcal{S},k}$.
\end{lem}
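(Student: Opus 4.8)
The plan is to reduce the statement to the strong law for Euclidean minimum spanning trees in Lemma~\ref{lem-K-EMST-length}, applied to the \emph{uniform} points that generate the destinations, and then to pay a negligible price for replacing those points by the actual ad hoc nodes. Recall that in session $k$ each destination $v_{\mathcal{S},k_i}$ is the node nearest to a point $p_{\mathcal{S},k_i}$ drawn independently and uniformly from $\mathcal{R}(\lambda,n)=[0,\sqrt{A}]^2$, $A=n/\lambda$; thus the sets $\tilde{\mathcal{P}}_{\mathcal{S},k}$ are exactly the i.i.d.\ uniform samples required by Lemma~\ref{lem-K-EMST-length}, whereas $\mathcal{D}_{\mathcal{S},k}$ is a perturbation of $\tilde{\mathcal{P}}_{\mathcal{S},k}$. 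So I would first bound the aggregate tree length for the uniform points and then transfer it to the destinations.

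For the uniform points I would instantiate Lemma~\ref{lem-K-EMST-length} with $d=2$, with $K(n)=n_s$ groups of $n_d$ points each, over the square of side $\sqrt{A}$. Since $\nu(2)>0$, the growth law gives, almost surely, $\sum_{k=1}^{n_s}\|\EMST(\tilde{\mathcal{P}}_{\mathcal{S},k})\|=\Theta\!\big(n_s\,\sqrt{A}\,\sqrt{n_d}\big)=\Theta\!\big(n_s\sqrt{n\,n_d/\lambda}\big)$, which reduces to the asserted order $n_s\sqrt{n\,n_d}$ in the constant-density normalization under which the companion bound of \cite{2012TC-MC} is stated. For bounded $n_d$, where the BHH constant is irrelevant, the same order follows from $\|\EMST\|\ge\frac{1}{2}\sum_i d_i^{\mathrm{NN}}$ together with the fact that a uniform point has nearest-neighbour distance $\Theta(\sqrt{A/n_d})$ among $n_d$ such points.

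The crux, and the step I expect to be the main obstacle, is transferring this bound from $\tilde{\mathcal{P}}_{\mathcal{S},k}$ to $\mathcal{D}_{\mathcal{S},k}$ \emph{uniformly} over all $n_s\,n_d$ points at once. Here I would invoke the stability of spanning-tree length under vertex displacement: moving each $p_{\mathcal{S},k_i}$ to its nearest node $v_{\mathcal{S},k_i}$ alters the tree length by at most an additive multiple of $\sum_i\|p_{\mathcal{S},k_i}-v_{\mathcal{S},k_i}\|$, so that
\[
\sum_{k=1}^{n_s}\|\EMST(\mathcal{D}_{\mathcal{S},k})\|\;\ge\;\sum_{k=1}^{n_s}\|\EMST(\tilde{\mathcal{P}}_{\mathcal{S},k})\|-O\!\Big(\sum_{k,i}\|p_{\mathcal{S},k_i}-v_{\mathcal{S},k_i}\|\Big).
\]
A largest-empty-disk estimate for a Poisson process of intensity $\lambda$ bounds $\max_{k,i}\|p_{\mathcal{S},k_i}-v_{\mathcal{S},k_i}\|=O(\sqrt{\log n/\lambda})$ \whp, so the correction is $O\!\big(n_s n_d\sqrt{\log n/\lambda}\big)$. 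Dividing by the leading term $\Theta(n_s\sqrt{A\,n_d})$ yields a ratio $O(\sqrt{n_d\log n/n})$ in which $\lambda$ cancels; this is $o(1)$ \emph{exactly} when $n_d=o(n/\log n)$, which is precisely the hypothesis. Hence the correction is of lower order and the lower bound carries over, giving $\sum_{k}\|\EMST(\mathcal{D}_{\mathcal{S},k})\|=\Omega(n_s\sqrt{n_d\,n})$.

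Finally I would check that coincidences among destinations are harmless: two points share a nearest node only if they fall in a common Voronoi cell, an event of probability $\Theta(1/n)$ per pair, so the expected number of collisions is $O(n_s n_d^2/n)=o(n_s n_d)$ for $n_d=o(n)$; deleting the coincident points removes only a vanishing fraction of each tree and does not change the order. The only genuinely delicate point is the simultaneous displacement control over all sessions, and it is this uniform estimate that forces the extra $\log n$ factor in the admissible range of $n_d$.
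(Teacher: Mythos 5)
Your proposal is correct and is essentially the argument the paper itself relies on: the paper offers no proof, deferring to ``a similar procedure to Lemma 7 of \cite{2012TC-MC}'', and that procedure is exactly your two steps --- the growth law of Lemma~\ref{lem-K-EMST-length} applied to the generating uniform points $\tilde{\mathcal{P}}_{\mathcal{S},k}$, followed by a uniform nearest-node displacement bound $O(\sqrt{\log n/\lambda})$ whose relative cost $O(\sqrt{n_d\log n/n})$ is precisely where the hypothesis $n_d=o(n/\log n)$ enters. The one point you gloss over is the normalization: over the region $[0,\sqrt{n/\lambda}]^2$ your computation honestly yields $\Omega(n_s\sqrt{n_d\,n/\lambda})$, and the missing $1/\sqrt{\lambda}$ cannot be absorbed for general $\lambda$; this is a defect of the lemma as printed (carried over from the $\lambda=1$ setting of \cite{2012TC-MC}) rather than of your argument, and indeed the form $\Omega(\sqrt{n_d n/\lambda})$ is what the paper actually invokes per tree in the proof of Lemma~\ref{lem-inside}.
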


Note that the session construction in this work is different from that in  \cite{2012TC-MC}, and Lemma \ref{lem-n-s-tree-length-o}
only gives a result on $\sum\nolimits_{k=1}^{n_s}  \|\EMST(\mathcal{D}_{\mathcal{S},k})\|$ instead of
$\sum\nolimits_{k=1}^{n_s}  \|\EMST(\mathcal{M}_{\mathcal{S},k})\|$, where $\EMST(\mathcal{M}_{\mathcal{S},k})$ denotes the Euclidean minimum spanning tree (EMST) over the spanning set
$\mathcal{U}_{\mathcal{S},k}$.
Since it holds that $\|\EMST(\mathcal{M}_{\mathcal{S},k})\| \geq \|\EMST(\mathcal{D}_{\mathcal{S},k})\|$,
we can obtain the following corollary.
\begin{cor}\label{corallary-n-s-tree-length}
For all multicast sessions $\mathcal{M}_{\mathcal{S},k}$ ($ 1 \leq k \leq n_s$),
it holds that for $n_d=o(\frac{n}{\log n})$,
\[\sum\nolimits_{k=1}^{n_s}  \|\EMST(\mathcal{M}_{\mathcal{S},k})\| =\Omega(n_s\cdot\sqrt{n_d \cdot n}).\]
\end{cor}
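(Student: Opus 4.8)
The statement to prove follows by transferring the lower bound of Lemma~\ref{lem-n-s-tree-length-o} from the destination sets $\mathcal{D}_{\mathcal{S},k}$ to the spanning sets $\mathcal{U}_{\mathcal{S},k}=\{v_{\mathcal{S},k}\}\cup\mathcal{D}_{\mathcal{S},k}$. The obvious attempt is to invoke the containment $\mathcal{D}_{\mathcal{S},k}\subseteq\mathcal{U}_{\mathcal{S},k}$ and argue that enlarging the point set can only lengthen its Euclidean minimum spanning tree. I would be wary of this step, however: the length of an EMST is \emph{not} monotone under the insertion of a vertex, since the new vertex may serve as a low-cost hub (a Steiner-like point) and actually shorten the tree. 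Thus $\|\EMST(\mathcal{U}_{\mathcal{S},k})\|\ge\|\EMST(\mathcal{D}_{\mathcal{S},k})\|$ cannot be asserted as a pointwise geometric identity, and a little surgery is needed.

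My plan is to quantify how much the source can shorten the tree. Fix a session $k$ and write $T_k=\EMST(\mathcal{U}_{\mathcal{S},k})$. Let $d_1,\dots,d_j$ be the neighbors of the source $v_{\mathcal{S},k}$ in $T_k$, and set $\delta_k=\sum_{i=1}^{j}|v_{\mathcal{S},k}-d_i|$ equal to the total length of the edges of $T_k$ incident to the source. Deleting the source together with these $j$ edges splits $T_k$ into $j$ subtrees; reconnecting them by the path $d_1d_2,\dots,d_{j-1}d_j$ yields a spanning tree of $\mathcal{D}_{\mathcal{S},k}$, and by the triangle inequality the added length is at most $2\delta_k$. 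Hence $\|\EMST(\mathcal{D}_{\mathcal{S},k})\|\le\|T_k\|-\delta_k+2\delta_k$, that is,
\[\|\EMST(\mathcal{U}_{\mathcal{S},k})\|\ \ge\ \|\EMST(\mathcal{D}_{\mathcal{S},k})\|-\delta_k.\]
Summing over $k$ and applying Lemma~\ref{lem-n-s-tree-length-o} gives $\sum_{k}\|\EMST(\mathcal{U}_{\mathcal{S},k})\|\ge\Omega(n_s\sqrt{n_d n})-\sum_{k}\delta_k$, so it remains to show that the correction $\sum_k\delta_k$ is of smaller order than $n_s\sqrt{n_d n}$.

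This last estimate is where the real work lies. The number of edges of a planar EMST incident to any vertex is bounded by a constant (at most six), so $\delta_k$ is at most a constant times the longest EMST edge at the source. The longest edge of the EMST of a Poisson field of density $\lambda$ in $\mathcal{R}(\lambda,n)$ is \whp\ of order $\sqrt{\log n/\lambda}$ (the connectivity scale), uniformly over all nodes, whence $\sum_k\delta_k=O(n_s\sqrt{\log n/\lambda})$. Since $n_d\ge 1$ forces $n_d n\ge n\gg \log n\ge \log n/\lambda$, this is $o(n_s\sqrt{n_d n})$, and the claimed $\Omega(n_s\sqrt{n_d n})$ bound follows. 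The main obstacle is precisely this \whp\ control of $\delta_k$ \emph{simultaneously} across all $n_s$ sessions; I would handle it by a union bound over nodes together with the standard maximal-edge-length tail for Poisson deployments. Alternatively, one can bypass the surgery entirely and re-run the proof of Lemma~\ref{lem-n-s-tree-length-o} verbatim on the $(n_d+1)$-point sets $\mathcal{U}_{\mathcal{S},k}$, using $\sqrt{(n_d+1)n}=\Theta(\sqrt{n_d n})$ together with the growth law of Lemma~\ref{lem-K-EMST-length}.
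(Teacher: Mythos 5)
Your instinct about the containment step is exactly right, and it is worth knowing that the paper's own ``proof'' of this corollary consists of precisely that one line: it asserts $\|\EMST(\mathcal{M}_{\mathcal{S},k})\|\geq\|\EMST(\mathcal{D}_{\mathcal{S},k})\|$ on the grounds that $\mathcal{D}_{\mathcal{S},k}\subseteq\mathcal{U}_{\mathcal{S},k}$, and then sums over $k$. As you observe, EMST length is not monotone under vertex insertion (add the centre of a unit equilateral triangle and the MST length drops from $2$ to $\sqrt{3}$), so the inequality the paper relies on is false as a pointwise geometric fact; the corollary survives only because the loss from one added point is at most a constant factor. Your surgery supplies that missing constant-factor argument: deleting the source and its incident edges (total length $\delta_k$) and re-stitching the resulting subtrees through the former neighbours costs at most $2\delta_k$, whence $\|\EMST(\mathcal{U}_{\mathcal{S},k})\|\geq\|\EMST(\mathcal{D}_{\mathcal{S},k})\|-\delta_k$. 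This is a genuinely different, and more honest, route than the paper's.

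Where your write-up goes wrong is the closing estimate of $\sum_k\delta_k$. You bound $\delta_k$ by a constant times the longest incident EMST edge and then invoke the $\sqrt{\log n/\lambda}$ connectivity scale; but that scale is the maximal edge length of the EMST of the \emph{entire} $n$-point Poisson deployment, whereas your $T_k=\EMST(\mathcal{U}_{\mathcal{S},k})$ spans only the $n_d+1$ session points. Those points are spread over the whole region of area $n/\lambda$, so the edges of $T_k$ at the source are typically of length $\Theta(\sqrt{n/(\lambda n_d)})$, which dwarfs $\sqrt{\log n/\lambda}$ exactly in the regime $n_d=o(n/\log n)$ that the corollary addresses; the claimed bound $\sum_k\delta_k=O(n_s\sqrt{\log n/\lambda})$ therefore fails. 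Fortunately the entire probabilistic detour is unnecessary: the edges incident to the source form a subset of the edges of $T_k$, so $\delta_k\leq\|T_k\|$ holds deterministically, and your own inequality then gives $2\|\EMST(\mathcal{U}_{\mathcal{S},k})\|\geq\|\EMST(\mathcal{D}_{\mathcal{S},k})\|$ for every $k$ (the easy half of the Steiner-ratio bound). Summing and applying Lemma~\ref{lem-n-s-tree-length-o} completes the proof; your alternative of re-running that lemma directly on the $(n_d+1)$-point sets $\mathcal{U}_{\mathcal{S},k}$ is also sound.
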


\section{Network Topology under Feasible Routings}
We introduce the Poisson Boolean percolation model to make preparations for computing the
upper bounds on the general multicast capacity.
\subsection{Poisson Boolean Percolation Model}

In a 2-dimensional Poisson Boolean model $\mathbb{B}( \lambda, r )$
\cite{meester1996cp},
 nodes are distributed in $\mathbb{R}^2$ according to a
p.p.p of intensity $\lambda$. Each node is associated with a closed disk of radius $r/2$.
Two disks are \emph{directly connected} if they overlap. Two disks
are \emph{connected} if there exist a sequence of directly
connected disks between them. Define a \emph{cluster} as a set of
disks in which any two disks are connected. Denote the set of all
clusters by $\mathscr{C}( \lambda, r )$. Let $|\mathcal{C}_i|$ denote the number of disks
in a cluster $\mathcal{C}_i \in \mathscr{C}( \lambda, r )$.
We can associate $\mathbb{B}( \lambda, r )$ with a graph $\mathcal{G}(
\lambda, r )$, called an \emph{associated graph}, by associating a
vertex with each node in $\mathbb{B}( \lambda, r )$ and an edge with
each direct connection in $\mathbb{B}( \lambda, r )$.
Two models $\mathbb{B}( \lambda, r
)$ and $\mathbb{B}( \lambda_0, r_0 )$ lead to the same associated
graph, namely $\mathcal{G}( \lambda, r )=\mathcal{G}( \lambda_0, r_0
)$  if $\lambda_0 {r_0}^2=\lambda {r}^2$. Then, the graph properties
of $\mathbb{B}( \lambda, r )$ only depend on the parameter $\lambda
{r}^2$, \cite{dousse2004cvc}.
Let $\mathcal{C}$ denote the cluster containing the
given node, the percolation probability is thus defined as $\Pr\nolimits_{\lambda,
r}[|\mathcal{C}|=\infty]$.
We call $\gamma_c$ the \emph{critical percolation threshold}
of Poisson Boolean model in $\mathbb{R}^2$  when
\begin{center}
$ \gamma_c= \sup\{\gamma:=\lambda \pi {r}^2 ~|~
\Pr\nolimits_{\lambda,
r}[|\mathcal{C}|=\infty]=0 \}$.
\end{center}
The exact value of $\gamma_c$ is still open. The
analytical results show that it is within the range $(0.7698\pi,
~3.372\pi)$~\cite{2007-kong-isit,{meester1996cp}}. In terms of the value of $\gamma=\lambda \pi {r}^2$, the \emph{subcritical phase} and \emph{supercritical phase} can be defined, which correspond to
the cases when $\gamma < \gamma_c$ and $\gamma > \gamma_c$, respectively. The following lemma will be used in our analysis.

\begin{lem}[\cite{meester1996cp,grossglauser:noc}]\label{lemma-Boolean-Model}
For a Poisson Boolean  model $\mathbb{B}( \lambda, r )$ in
$\mathbb{R}^2$,
there exists a value $\gamma_c$ in a square region
$\mathcal{R}(\lambda,n)=[0,\sqrt{n/\lambda}]^2$, as $n\to \infty$:
\begin{itemize}
  \item  if $\gamma=\lambda \pi {r}^2 < \gamma_c$, i.e., in the subcritical phase \cite{meester1996cp}, it holds that
 \begin{center}
  $\Pr[\sup\{|\mathcal{C}_i|~|~{\mathcal{C}_i \in \mathscr{C}( \lambda, r )}\}=O(\log n)]=1;$
 \end{center}
   \item  if $\gamma=\lambda \pi {r}^2 > \gamma_c$, i.e., in the supercritical phase \cite{meester1996cp}, there exists, \whp,
\emph{exactly} one \emph{giant cluster} (\emph{giant component})
$\mathcal{C}_i \in \mathscr{C}( \lambda, r )$ of size
$|\mathcal{C}_i|=\Theta(n)$.
\end{itemize}
\end{lem}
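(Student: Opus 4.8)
The plan is to reduce the finite-window statement to classical theorems for Poisson Boolean percolation on the whole plane $\mathbb{R}^2$, exploiting the scale invariance noted above: since the associated graph depends only on $\gamma=\lambda\pi r^2$, the phase (sub- or supercritical) is governed by $\gamma$ alone, so I would analyze the ambient process on $\mathbb{R}^2$ and then restrict it to the box $\mathcal{R}(\lambda,n)$ of area $A=n/\lambda$, which contains $\Theta(n)$ nodes \whp\ by the same Chebyshev concentration already invoked for $\mathcal{N}(\lambda,n)$.

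For the subcritical case $\gamma<\gamma_c$, the main ingredient is the exponential decay of the cluster-size distribution below criticality: there is a constant $\beta=\beta(\gamma)>0$ such that $\Pr\nolimits_{\lambda,r}[|\mathcal{C}|\ge k]\le e^{-\beta k}$ for the cluster $\mathcal{C}$ containing a typical node (the continuum analogue of the Menshikov / Aizenman--Barsky theorem, available in \cite{meester1996cp}). Given this tail, I would fix a constant $c$ with $c\beta>1$ and apply a union bound over the $\Theta(n)$ nodes in the box: the probability that some node lies in a cluster of size at least $c\log n$ is at most $\Theta(n)\cdot e^{-\beta c\log n}=\Theta(n^{1-\beta c})\to 0$. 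Hence every cluster has size $O(\log n)$ with probability tending to one, which is exactly the almost-sure statement claimed in the $n\to\infty$ limit.

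For the supercritical case $\gamma>\gamma_c$, the percolation probability $\theta(\gamma):=\Pr\nolimits_{\lambda,r}[|\mathcal{C}|=\infty]$ is strictly positive. I would establish the $\Theta(n)$ size by a first-/second-moment argument: the expected number of box nodes belonging to the unbounded cluster of the ambient process is $\theta(\gamma)\cdot\Theta(n)$, and by the ergodic theorem together with the sharp concentration of Poisson functionals this count concentrates, giving size $\Theta(n)$ \whp; the matching upper bound $O(n)$ is trivial since the box has only $\Theta(n)$ nodes. The delicate point is the word \emph{exactly}: I must rule out two or more macroscopic clusters. For this I would invoke the uniqueness of the infinite cluster in continuum percolation (the Burton--Keane argument adapted to the Boolean model in \cite{meester1996cp}) and transfer it to the finite window by a renormalization/crossing argument---tiling $\mathcal{R}(\lambda,n)$ into sub-boxes that are individually \emph{good} (crossed in both coordinate directions) with overwhelming probability, so that all large clusters are forced to merge into a single spanning component through the renormalized lattice.

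I expect the uniqueness (``exactly one'') step to be the principal obstacle. The exponential-decay and density/concentration estimates are quantitatively routine once the ambient-plane theorems are in hand, but converting the almost-sure uniqueness of the infinite cluster on $\mathbb{R}^2$ into a high-probability statement about a single giant component in a finite window requires the coarse-graining to percolate on the renormalized lattice; this is precisely where strict supercriticality $\gamma>\gamma_c$ is used, and where the constant hidden in $\Theta(n)$ ultimately traces back to $\theta(\gamma)$.
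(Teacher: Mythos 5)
The paper offers no proof of this lemma at all: it is imported verbatim from \cite{meester1996cp} and the companion reference, so there is no internal argument to compare yours against. Your sketch is, in substance, the standard proof of exactly these two statements as found in the cited monograph and in Penrose's treatment of random geometric graphs: subcritical exponential decay of the cluster-size tail plus a union bound over the $\Theta(n)$ Poisson points in the box yields the $O(\log n)$ bound, and a first-/second-moment count of points in the unbounded cluster, combined with Burton--Keane uniqueness and a coarse-graining/crossing argument on a renormalized lattice, yields the unique component of size $\Theta(n)$ \whp. Two small points of care, both of which your outline survives: first, the exponential decay of the \emph{number of points} in a subcritical cluster (as opposed to its diameter) throughout the entire regime $\gamma<\gamma_c$ rests on the equality of the corresponding critical intensities, a nontrivial theorem for the planar fixed-radius Boolean model that is indeed available in \cite{meester1996cp}, so your union bound with $c\beta>1$ is legitimate; second, the phrase ``exactly one giant cluster'' implicitly requires that the second-largest cluster be $o(n)$, which is precisely what your renormalization step delivers. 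You have correctly identified the finite-window uniqueness transfer as the only genuinely delicate ingredient.
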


\subsection{Distance to Giant Component}

Connectivity is a necessary condition for a feasible routing scheme.
From \cite{gupta1998cpa,santi2003ctr},
the connectivity of a routing scheme for homogeneous random networks $\mathcal{N}(\lambda,n)$ can be ensured when the maximum link length can reach
$\Omega(\sqrt{\log n / \lambda})$.
More specifically,
by a geometric extension, we can obtain the following lemma
based on  Theorem 3.2 of \cite{gupta1998cpa}.
\begin{lem}\label{lemma-critical-connectivity}
In Poisson Boolean model $\mathcal{B}( \lambda,  r)$, with
\begin{center}
$\pi \cdot\lambda\cdot r^2=\log n +\varsigma(n)$,
\end{center}
all disks with radius $r$ are \emph{connected} with probability $1$ as $n\to \infty$ if and only if
$\varsigma(n)\to \infty$.
\end{lem}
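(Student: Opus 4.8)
The plan is to reduce the claim to the canonical connectivity-threshold regime studied in \cite{gupta1998cpa} by exploiting the scale invariance of the associated graph recorded above, namely that $\mathcal{G}(\lambda,r)$ depends on $\lambda$ and $r$ only through the product $\lambda r^2$. First I would rescale the deployment square $\mathcal{R}(\lambda,n)=[0,\sqrt{n/\lambda}\,]^2$ onto the unit square by the dilation $\mathbf{x}\mapsto \mathbf{x}\sqrt{\lambda/n}$. This carries the p.p.p.\ of intensity $\lambda$ into a p.p.p.\ of intensity $n$ on $[0,1]^2$ (so the expected number of nodes is still $n$) and carries the connection radius $r$ into $\tilde r=r\sqrt{\lambda/n}$, under which the hypothesis $\pi\lambda r^2=\log n+\varsigma(n)$ becomes $\pi n\tilde r^2=\log n+\varsigma(n)$. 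Since only $\lambda r^2=n\tilde r^2$ governs the graph, connectivity of $\mathcal{B}(\lambda,r)$ is equivalent to connectivity in this normalized model, which is exactly the setting of Theorem~3.2 of \cite{gupta1998cpa}.

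For the sufficiency direction ($\varsigma(n)\to\infty \Rightarrow$ connectivity with probability $1$), I would argue that the governing obstruction is the existence of an \emph{isolated} node, i.e., a disk overlapping no other disk. A node at $\mathbf{x}$ is isolated precisely when its \emph{isolation disk} $B(\mathbf{x},r)$ (the locus of centers that would overlap it, of area $\pi r^2$) contains no further point; by the void probability of the p.p.p.\ this occurs with probability $e^{-\lambda\pi r^2}=n^{-1}e^{-\varsigma(n)}$ in the bulk. By the Campbell--Mecke formula the expected number of isolated bulk nodes is then $\Theta\!\big(\lambda\cdot(n/\lambda)\cdot n^{-1}e^{-\varsigma(n)}\big)=\Theta(e^{-\varsigma(n)})\to 0$, so \whp\ no bulk node is isolated. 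The remaining and harder step is to promote ``no isolated node'' to ``single connected component'': following \cite{gupta1998cpa} I would tessellate the square into cells of side $\Theta(r)$, observe that any two points in the same or adjacent cells are within distance $r$, and show that a proper connected component would have to be encircled by a closed ring of empty cells, an event whose probability is summably small in this regime. Hence \whp\ the graph is connected.

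For the necessity direction ($\varsigma(n)\not\to\infty \Rightarrow$ connectivity fails with non-vanishing probability) I would run a second-moment argument on the number $N_0$ of isolated nodes. When $\varsigma(n)$ stays bounded along a subsequence, $\mathrm{E}[N_0]=\Theta(e^{-\varsigma(n)})$ is bounded away from $0$, and computing $\mathrm{E}[N_0^2]$ (or invoking a Chen--Stein/Poisson approximation for the isolated-node count) gives $\Pr[N_0\ge 1]$ bounded away from $0$; an isolated node forces disconnection, so the network is not connected \whp. Together with the sufficiency direction this establishes the ``if and only if''.

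The step I expect to be the main obstacle is the geometric upgrade in the sufficiency direction: the first-moment estimate that kills isolated nodes is routine, but ruling out components of size $\ge 2$ requires the cell-covering argument and is the true technical core inherited from \cite{gupta1998cpa}. A secondary point needing care is the boundary of the square: nodes within distance $r$ of an edge or corner have a reduced isolation area (down to $\tfrac12\pi r^2$ on an edge and $\tfrac14\pi r^2$ at a corner) and are correspondingly more likely to be isolated. Integrating the void probability $e^{-\lambda A(d)}$ across the boundary strip, where $A(d)$ is the in-region area at distance $d$ from the edge and $A(d)\to\pi r^2$ rapidly as $d$ grows, shows this contribution is only $\Theta\!\big(e^{-\varsigma(n)/2}/\sqrt{\log n}\big)$, hence lower order and unable to shift the threshold; verifying this ``geometric extension'' from the unit-area/toroidal setting of \cite{gupta1998cpa} to $\mathcal{R}(\lambda,n)$ is what makes the argument non-automatic.
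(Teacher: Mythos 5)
Your proposal is correct and follows essentially the same route as the paper, which establishes this lemma only by the remark that it follows ``by a geometric extension'' from Theorem~3.2 of \cite{gupta1998cpa}: your dilation of $[0,\sqrt{n/\lambda}]^2$ onto the unit square, turning the intensity into $n$ and the hypothesis into $\pi n\tilde r^2=\log n+\varsigma(n)$, is exactly that geometric extension. The extra material you supply (the first/second-moment analysis of isolated nodes, the cell-covering upgrade to full connectivity, and the boundary-strip estimate of order $e^{-\varsigma(n)/2}/\sqrt{\log n}$) correctly fills in details that the paper delegates entirely to the citation.
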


From Lemma \ref{lemma-critical-connectivity}, we limit the nontrivial range of $r$ in
$[{\mathfrak{p}_c}/{\sqrt{\lambda}},\sqrt{\log n / \lambda}]$,
\ie, $ r:[{1}/{\sqrt{\lambda}},\sqrt{\log n / \lambda}] $. According to
Lemma \ref{lemma-Boolean-Model},  in the Poisson Boolean  model
$\mathcal{B}( \lambda,  r)$, there exists exactly one
giant component, denoted by  $\mathcal{C}( \lambda,  r)$,
with $|\mathcal{C}( \lambda,  r)|=\Theta(n)$. Note that
we take no account of the specific values of the involved constants, since they
have no impact on the order of our final results.

In Poisson Boolean  model $\mathcal{B}( \lambda,
r)$, for any node outside the giant cluster
$\mathcal{C}( \lambda,  r)$,  say an \emph{exterior node} $u \notin \mathcal{C}(
\lambda,  r)$,  we define the distance between $u$ and the
giant component by
\[\bar{l}_c(u)=\min\nolimits_{v\in \mathcal{C}( \lambda,  r)} |uv|.\]
Furthermore, we define the largest distance between exterior nodes and $\mathcal{C}( \lambda,  r)$ as
\[\bar{l}^{\mathrm{M}}_c\left[\mathcal{C}\left( \lambda,   r\right)\right]:=\max\nolimits_{u\in \mathcal{V}(n)-\mathcal{C}( \lambda, r)}\bar{l}_c(u),\]
where $\mathcal{V}(n)$ denotes the set of all nodes in $\mathcal{N}(\lambda, n)$.
Please see the illustration in Fig.\ref{Fig-outsidenodes}.

\begin{figure}[t]
\begin{center}
\scalebox{1}{\includegraphics{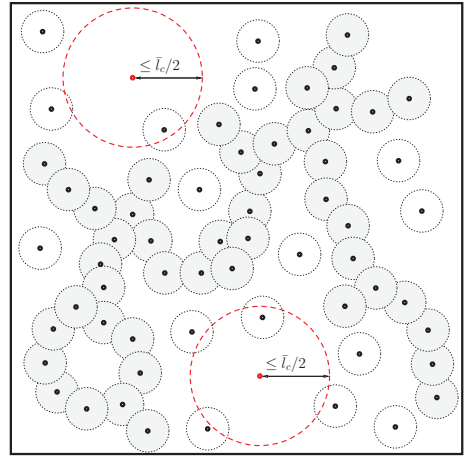}}
\end{center}
\caption{Distance
between an exterior node and the giant component.} \label{Fig-outsidenodes}
\end{figure}

From Lemma \ref{lemma-critical-connectivity}, there is no
node outside $\mathcal{C}( \lambda, r)$ when
\begin{center}
$\lambda \cdot
r^2=\frac{1}{\pi}\cdot (\log n +\varsigma(n))$ if
$\varsigma(n)\to \infty$.
\end{center}
 Then, we  only consider the case that
$\lambda \cdot r^2=o(\log n)$, \ie, $r=o(\sqrt{\log n /
\lambda})$. It holds that
\begin{center}
$    \bar{l}^{\mathrm{M}}_c\left[\mathcal{C}\left( \lambda,   r\right)\right] > r$, and $\bar{l}^{\mathrm{M}}_c\left[\mathcal{C}\left( \lambda,   r\right)\right]=o(\sqrt{\log n / \lambda})$.
\end{center}
Next, we give a useful result for computing the upper bounds on network capacity.
\begin{lem}
\label{lem-distance-outside} In Poisson Boolean  model $\mathcal{B}(
\lambda,  r)$ with $r=o(\sqrt{\log n / \lambda})$, it holds that
\begin{equation}\label{equ-key-upperbound-argue}
\lambda \cdot r \cdot \bar{l}^{\mathrm{M}}_c=\Omega(\log n), \whp,
\end{equation}
where $\bar{l}^{\mathrm{M}}_c:=\bar{l}^{\mathrm{M}}_c\left[\mathcal{C}\left( \lambda,   r\right)\right]$ for the sake of succinctness.
\end{lem}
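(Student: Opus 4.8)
The statement is a \emph{lower} bound on $\bar{l}^{\mathrm{M}}_c$, so the plan is to exhibit, w.h.p., a single exterior node that the giant component $\mathcal{C}(\lambda,r)$ fails to reach within Euclidean distance $d:=c_0\,\log n/(\lambda r)$ for a small constant $c_0$. I would construct such a node by producing an \emph{empty moat}: a thin annulus, free of Poisson points, that separates an inner disk from the rest of the region. Fix a center $x$ and set $A(x)=\{y: d\le |y-x|\le d+2r\}$. If $A(x)$ contains no node, then because every disk has radius $r/2$, a disk centered inside $B(x,d)$ and a disk centered outside $B(x,d+2r)$ have centers at distance $\ge 2r>r$ and hence cannot overlap; since no center lies in the moat, any chain of overlapping disks from inside to outside would have to take one step straddling the gap, which is impossible. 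Thus the occupied region inside $B(x,d)$ is disconnected from everything outside. As $\mathcal{C}(\lambda,r)$ has $\Theta(n)$ nodes whereas $B(x,d)$ holds only $O(\lambda d^2)=O((\log n)^2/(\lambda r^2))=o(n)$ nodes, the giant must lie entirely outside the moat, so every node $u\in B(x,\rho_0)$ (any $\rho_0<d$) is exterior with $\bar{l}_c(u)>d$.

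The reason for using a moat of width $\Theta(r)$ rather than an empty disk of radius $d$ is decisive: the moat has area only $\Theta(rd)$, so its emptiness has probability $e^{-\Theta(\lambda r d)}=n^{-\Theta(c_0)}$, which can be made $n^{-\beta}$ with $\beta<1$ by taking $c_0$ small, whereas an empty disk would cost $e^{-\Theta(\lambda d^2)}$ and only yield the far weaker separation $d=\Theta(\sqrt{\log n/\lambda})$. To pass from a positive probability per location to a w.h.p. statement I would tile the interior of $\mathcal{R}(\lambda,n)$ by $M=\Theta(n/(\lambda d^2))$ disjoint squares of side $\Theta(d)$, centered at points $x_i$, and define the event $G_i$ that $B(x_i,\rho_0)$ contains a node (with $\rho_0=\Theta(1/\sqrt\lambda)<d$) and that $A(x_i)$ is empty. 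The regions $B(x_i,\rho_0)$ and $A(x_i)$ are disjoint, so $\Pr[G_i]=\Theta(1)\cdot e^{-\Theta(\lambda r d)}\ge n^{-\beta}$; sizing the square so that it contains both $B(x_i,\rho_0)$ and $A(x_i)$ makes the $G_i$ independent. Then $\Pr[\bigcap_i \overline{G_i}]\le (1-n^{-\beta})^{M}\le \exp(-n^{-\beta}M)$, and since $\lambda d^2=O((\log n)^2)$ forces $M=\Omega(n/(\log n)^2)$, the exponent diverges. Hence w.h.p. some $G_i$ holds, its node is exterior at distance $>d$ from $\mathcal{C}(\lambda,r)$, and $\bar{l}^{\mathrm{M}}_c\ge d=\Omega(\log n/(\lambda r))$, i.e. $\lambda\, r\,\bar{l}^{\mathrm{M}}_c=\Omega(\log n)$.

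The hypothesis $r=o(\sqrt{\log n/\lambda})$, i.e. $\lambda r^2=o(\log n)$, is exactly what makes the construction legitimate: it yields $d\gg r$ (so the annulus-area estimate and the blocking geometry hold), $\rho_0<d$, and $d=o(\sqrt{n/\lambda})$ (so the moats fit inside $\mathcal{R}(\lambda,n)$ and interior tiling is valid), while also guaranteeing $|B(x,d)\cap\mathcal{V}(n)|=o(n)$ so that the trapped cluster cannot be the giant. The main obstacle I expect is precisely the upgrade from a diverging first moment to almost-sure existence: a plain union/first-moment estimate only shows the \emph{expected} number of moated nodes diverges, which does not rule out that number being zero w.h.p. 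The disjoint-tiling independence argument is what removes this gap, and the care needed is in choosing the tile size so that each candidate's inner disk and surrounding moat stay within a single tile, guaranteeing that the tile events are genuinely independent.
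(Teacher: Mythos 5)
Your proof is correct, and it takes a genuinely different --- and more self-contained --- route than the paper's. The paper first imports a large-deviation estimate of Dousse \emph{et al.} for a \emph{single} exterior node, $\log \Pr[\bar{l}_c(u)>x]\sim -\varepsilon\,\lambda r x$, evaluates it at $x=\Theta(\log n/(\lambda r))$ to get a per-node success probability of order $n^{-\beta}$ with $\beta<1$, and then bounds $\Pr[\bigwedge_{u}\bar{E}_u(r)]\leq (1-\varepsilon_1/o(n))^{\varepsilon_2 n}\to 0$, i.e.\ it multiplies these probabilities over the $\Theta(n)$ exterior nodes as though the events were independent. That independence is not justified (the distances $\bar{l}_c(u)$ of nearby exterior nodes are strongly correlated), which is precisely the ``diverging first moment does not imply existence'' issue you flag. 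Your empty-moat construction buys two things: it replaces the external citation by an elementary Poisson computation (the annulus of width $\Theta(r)$ at radius $d=c_0\log n/(\lambda r)$ has area $\Theta(rd)=\Theta(\log n/\lambda)$, hence is empty with probability $n^{-\Theta(c_0)}$, and emptiness together with the size comparison $|\mathcal{C}(\lambda,r)|=\Theta(n)$ versus $O(\lambda d^2)=o(n)$ forces the giant entirely outside the moat), and it makes the independence genuine by running the argument over $\Omega(n/(\log n)^2)$ disjoint tiles rather than over overlapping per-node events. The only details worth writing out in a final version are the uniform w.h.p.\ bound $O(\lambda d^2)=O((\log n)^2)$ on the number of points in every $B(x_i,d)$ (Chernoff plus a union bound over the tiles, using $\lambda r^2=\Omega(1)$ in the supercritical phase), and the constant bookkeeping $\rho_0<2r$ guaranteeing that the trapped node is at distance strictly greater than $d$ from the giant; neither affects the order of the result.
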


Prior to proving Lemma \ref{lem-distance-outside}, we get the following lemma based on Corollary 1 of
\cite{dousse2006delay} by a geometric scaling method.
\begin{lem}[\cite{dousse2005distancegiant,dousse2006delay}]
For any exterior node, say $u\notin \mathcal{C}( \lambda, r)$, it
holds that for any $x\in [0, \sqrt{n/\lambda}]$,
\[
\lim_{n\to
\infty}\log \Pr \left[\bar{l}_c(u)> x \right]=- \lim_{n\to
\infty} \varepsilon \cdot \lambda \cdot
r  \cdot x ,
\]
where $\varepsilon>0$ is a constant.
\end{lem}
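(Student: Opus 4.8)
The plan is to obtain the stated tail estimate for $\mathbb{B}(\lambda,r)$ by transporting the corresponding estimate for a canonical, fixed-parameter Poisson Boolean model in the plane through a homothety of $\mathbb{R}^2$ -- exactly the ``geometric scaling method'' announced before the statement. First I would fix a reference model $\mathbb{B}(\lambda_0,r_0)$ in $\mathbb{R}^2$ whose parameter $\gamma=\lambda_0\pi r_0^2$ equals the (constant) supercritical value used throughout, so that, by Corollary~1 of \cite{dousse2006delay}, the distance $D_0$ from the origin to the infinite cluster (given that the origin is exterior) obeys an exponential tail, i.e. $\frac{1}{y}\log\Pr[D_0>y]\to -\beta_0$ as $y\to\infty$ for a constant $\beta_0=\beta_0(\gamma)>0$. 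This reference estimate is the only external input; everything else is a pure change of scale.

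Next I would introduce the homothety $\Phi_c:\mathbb{R}^2\to\mathbb{R}^2$, $\mathbf{x}\mapsto c\mathbf{x}$, with $c=\sqrt{\lambda_0/\lambda}$. By the mapping theorem for Poisson processes, $\Phi_c$ sends a p.p.p of intensity $\lambda_0$ to one of intensity $\lambda_0/c^2=\lambda$, while each reference disk of radius $r_0$ is carried to a disk of radius $cr_0$; choosing $r_0:=r/c$ makes the image radius exactly $r$ and preserves the invariant $\lambda\pi r^2=\lambda_0\pi r_0^2=\gamma$, consistent with the fact recorded earlier that the associated graph depends only on $\lambda r^2$. The crucial point is that overlap of two disks is preserved under $\Phi_c$ (both radii and the centre distance scale by the same factor $c$), so the entire cluster structure, and in particular membership in the giant/infinite component, is carried over verbatim, whereas every Euclidean length is multiplied by $c$. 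Consequently $\bar{l}_c(u)$ in $\mathbb{B}(\lambda,r)$ equals $c$ times the analogous distance in the reference model, giving $\Pr[\bar{l}_c(u)>x]=\Pr[D_0>x/c]$.

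Substituting this identity into the reference tail yields $\log\Pr[\bar{l}_c(u)>x]\sim -\beta_0\,x/c=-\beta_0\sqrt{\lambda/\lambda_0}\,x$, and it remains only to rewrite the rate in the claimed form. Using $r=cr_0=\sqrt{\lambda_0/\lambda}\,r_0$ one checks $\lambda r=\sqrt{\lambda\lambda_0}\,r_0$, whence $\beta_0\sqrt{\lambda/\lambda_0}=\varepsilon\,\lambda r$ with $\varepsilon:=\beta_0/(\lambda_0 r_0)$; since $\lambda_0$, $r_0$ and $\beta_0$ are all fixed once $\gamma$ is fixed, $\varepsilon$ is a genuine dimensionless constant, independent of $\lambda$, $r$ and $n$ (a quick units check: $\beta_0$ is an inverse length and $\lambda_0 r_0$ is an inverse length, so $\varepsilon$ is dimensionless and $\varepsilon\lambda r x$ is a pure number, as a log-probability must be). This reproduces the asserted $\lim_{n\to\infty}\log\Pr[\bar{l}_c(u)>x]=-\lim_{n\to\infty}\varepsilon\lambda r x$.

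Finally I would address the one gap between the cited planar statement and our setting: Corollary~1 of \cite{dousse2006delay} concerns the infinite cluster in all of $\mathbb{R}^2$, whereas $\bar{l}_c(u)$ is measured against the finite giant cluster $\mathcal{C}(\lambda,r)$ living in the square $[0,\sqrt{n/\lambda}]^2$. The argument is that, after applying $\Phi_c$, this square exhausts the plane as $n\to\infty$, and by Lemma~\ref{lemma-Boolean-Model} it contains a unique giant component of size $\Theta(n)$ that converges to the infinite cluster; hence the finite-size distance converges to $D_0$, which is precisely why the statement is phrased as an $n\to\infty$ limit. I expect this coupling of the finite giant cluster to the infinite cluster -- controlling boundary effects so that the exponential rate is not degraded -- to be the main obstacle, while the scaling computation itself is routine once the reference estimate is in hand.
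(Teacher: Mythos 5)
Your proposal is correct and follows essentially the same route as the paper, which itself offers no proof beyond citing Corollary 1 of \cite{dousse2006delay} and invoking a ``geometric scaling method'': your homothety argument (the Poisson mapping theorem, invariance of $\lambda r^2$, rescaling of all distances by $c=\sqrt{\lambda_0/\lambda}$, and the bookkeeping $\varepsilon=\beta_0/(\lambda_0 r_0)$) is precisely that method made explicit. Your closing remark on coupling the finite giant cluster in $[0,\sqrt{n/\lambda}]^2$ to the infinite cluster of the planar model correctly identifies the one step the paper likewise leaves implicit in its $n\to\infty$ phrasing.
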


\begin{proof}[Proof of Lemma \ref{lem-distance-outside}]
Firstly, we give a bound on the probability of
event
\begin{center}
$\bar{E}(r)$: $\lambda \cdot r \cdot \bar{l}^{\mathrm{M}}_c=o(\log n)$ (a contradiction to Equation (\ref{equ-key-upperbound-argue})).
\end{center}
For any $u\notin \mathcal{C}( \lambda,
r)$, we define an event
\begin{center}
$\bar{E}_u(r)$: $\lambda \cdot r \cdot \bar{l}_c(u)=o(\log n)$.
\end{center}
Then, it follows that
\[
\Pr\left[\bar{E}(r)\right]= \Pr\left[\bigwedge\nolimits_{u\notin \mathcal{C}( \lambda,
r)} \bar{E}_u(r)\right]\leq \left(1- \frac{\varepsilon_1}{o(n)}\right)^{\varepsilon_2 n} \to 0,
\]
where $\varepsilon_1$ and $\varepsilon_2$ are some constants. Hence, the lemma is proved.
\end{proof}

\begin{figure}[t]
\begin{center}
 \scalebox{1}{\includegraphics{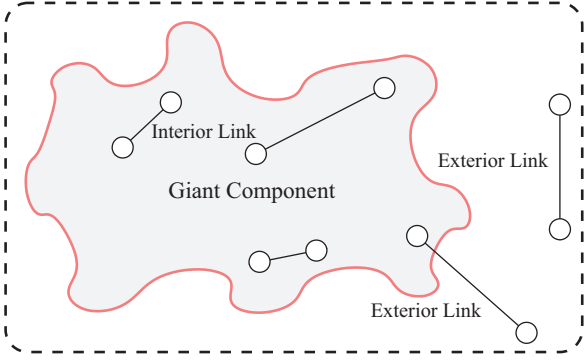}}
\end{center}
  \caption{Interior and Exterior Links.}\label{fig-cloud}
\end{figure}

%
%
%
%

\section{Upper Bounds on General Multicast Capacity}\label{sec-upper}

For any routing scheme, denote the maximum length (in the sense of order) of the
links  by $l_c$. According to \cite{02Gupta2000,15Franceschetti2007},
in the networking-theoretic scaling laws \cite{2012TC-MC},
under the premise of ensuring routing connectivity, long-distance communication is
not preferable, since the interference generated would preclude
too many nodes from communicating. The
optimal strategy is to confine to the nearest neighbor communication
and maximize the number of simultaneous transmissions, \ie, optimize the
spatial reuse, \cite{ozgur2007hca}.
From \cite{gupta1998cpa,santi2003ctr},
the routing connectivity of any scheme for homogeneous random networks can be ensured when the maximum link length is set to be
$\Omega(\sqrt{\log n / \lambda})$.
Then, we consider the range
$l_c:[{\mathfrak{p}_c}/{\sqrt{\lambda}},\sqrt{\log n / \lambda}]$,
\ie, $ l_c:[{1}/{\sqrt{\lambda}},\sqrt{\log n / \lambda}] $. From
Lemma \ref{lemma-Boolean-Model},  in the Poisson Boolean  model
$\mathcal{B}( \lambda,  {l_c})$, there exists exactly one
giant component, denoted by  $\mathcal{C}( \lambda,  {l_c})$,
with $|\mathcal{C}( \lambda,  {l_c})|=\Theta(n)$. Note that
we take no account of the specific values of the constants, for they
have no impact on the order of our final results.

Then, the links of any multicast scheme can be divided into two classes as
follows: A link is called an \emph{interior link}, if both endpoints are located in $\mathcal{C}( \lambda,  {l_c})$;
and it is called an \emph{exterior link}, otherwise.

In the Poisson Boolean  model $\mathcal{B}( \lambda,
{l_c})$, for any node outside the giant cluster
$\mathcal{C}( \lambda,  {l_c})$,  say $u \notin \mathcal{C}(\lambda,  {l_c})$, define the distance between $u$ and the
giant component by
\[\bar{l}_c(u)=\min\nolimits_{v\in \mathcal{C}( \lambda,  {l_c})} |uv|.\]
Furthermore, we define
\[\bar{l}^{\mathrm{M}}_c\left[\mathcal{C}\left( \lambda,   {l_c}\right)\right]:=\max\nolimits_{u\notin \mathcal{C}( \lambda,
 {l_c}/{2})}\bar{l}_c(u).\]
Please see the illustration in Fig.\ref{Fig-outsidenodes}.

We derive the upper bounds on multicast capacity by considering two types of links comprehensively.

\subsubsection{Inside a Giant Component}
All links inside  $\mathcal{C}( \lambda,  {l_c})$ have the
length of $\Theta(l_c)$. The upper bound on capacity of these links can be computed as
\[\mathbf{R}_{l_c}=\min\left\{1, B\log \left(1+\frac{l_c^{-\alpha}}{N_0} \right)\right\}=O(\min\{1, l_c^{-\alpha}\}).\]
Then, by combining with Lemma \ref{lem-basic-link-throughput}, we can obtain the following lemma.
\begin{lem}\label{lem-inside}
For any multicast scheme with the parameter $l_c$, the multicast throughput along
the links inside $\mathcal{C}( \lambda,  {l_c})$ is at most
of order $\Lambda_{l_c}=O\left(\frac{\min\{1,
l_c^{-\alpha}\}}{\mathbf{L}\left(n_s,  \frac{\sqrt{n}}{l_c\sqrt{n_d
\lambda}}\right)}\right)$.
\end{lem}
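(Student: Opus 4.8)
The plan is to obtain $\Lambda_{l_c}$ by invoking Lemma~\ref{lem-basic-link-throughput}, so it suffices to identify the two quantities that lemma requires for an interior link $uv$: the rate $\mathbf{R}$ it can sustain and the probability $\mathbf{p}$ that an arbitrary session routes through it. The rate is already in hand from the computation preceding the statement: every interior link has length $\Theta(l_c)$, and bounding the SINR by the interference-free SNR (which only inflates the rate and therefore keeps us on the upper-bound side) gives $\mathbf{R}=\mathbf{R}_{l_c}=O(\min\{1,l_c^{-\alpha}\})$. The entire content of the lemma thus reduces to establishing $1/\mathbf{p}=O\!\left(\sqrt{n}/(l_c\sqrt{n_d\lambda})\right)$, i.e.\ to a lower bound on the per-link sharing probability.

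To estimate $\mathbf{p}$, I would count link usage against the number of available interior links. First, partition $\mathcal{R}(\lambda,n)$, of side $\sqrt{n/\lambda}$, into cells of side $\Theta(l_c)$; there are $\Theta(n/(\lambda l_c^2))$ such cells, and by the supercritical giant-component property (Lemma~\ref{lemma-Boolean-Model}, applicable since $l_c\in[1/\sqrt\lambda,\sqrt{\log n/\lambda}]$) a constant fraction of them lie in $\mathcal{C}(\lambda,l_c)$, so the number of distinct interior links is $\Theta(n/(\lambda l_c^2))$. Second, each session's terminals are $\Theta(n_d)$ points placed uniformly in a square of side $\sqrt{n/\lambda}$, so by the EMST scaling of Lemma~\ref{lem-K-EMST-length} (with $a=\sqrt{n/\lambda}$, $d=2$) its routing tree has Euclidean length $\Omega(\sqrt{n\,n_d/\lambda})$ and must therefore traverse $\Omega\!\left(\sqrt{n\,n_d/\lambda}/l_c\right)$ interior links of length $\Theta(l_c)$.

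Dividing the per-session link usage by the total number of interior links and invoking the uniformity of the terminal placement, the probability that a fixed session uses a fixed interior link is
\[
\mathbf{p}=\Omega\!\left(\frac{\sqrt{n\,n_d/\lambda}/l_c}{n/(\lambda l_c^2)}\right)=\Omega\!\left(\frac{l_c\sqrt{n_d\lambda}}{\sqrt{n}}\right),
\qquad\text{hence}\qquad \frac{1}{\mathbf{p}}=O\!\left(\frac{\sqrt{n}}{l_c\sqrt{n_d\lambda}}\right).
\]
Feeding $\mathbf{R}_{l_c}$ and this bound on $1/\mathbf{p}$ into Lemma~\ref{lem-basic-link-throughput}, and using that $\mathbf{L}(n_s,\cdot)$ is non-increasing in its second argument (so that the $O$-bound on the bin count yields an $\Omega$-bound on the occupancy, up to constant factors that do not affect the order), gives
\[
\Lambda_{l_c}=O\!\left(\frac{\min\{1,l_c^{-\alpha}\}}{\mathbf{L}\!\left(n_s,\ \sqrt{n}/(l_c\sqrt{n_d\lambda})\right)}\right),
\]
which is the claim.

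The step I expect to be the main obstacle is the passage from the aggregate count --- ``$\Omega(\sqrt{n\,n_d/\lambda}/l_c)$ link-hops per session spread over $\Theta(n/(\lambda l_c^2))$ interior links'' --- to a genuine per-link probability $\mathbf{p}$ for a \emph{fixed} link. This requires arguing that, by the uniform and independent placement of the $n_d$ terminals, the load a session imposes is spread essentially evenly over the interior links, so that the expected fraction of links a session uses can legitimately play the role of $\mathbf{p}$ in the occupancy argument of Lemmas~\ref{lem-maximum-occupancy}--\ref{lem-basic-link-throughput}. One must also verify that rerouting through exterior links cannot depress the interior load below this estimate, which is precisely why the final capacity is taken as the bottleneck (minimum) of the interior and exterior contributions.
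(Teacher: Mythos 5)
Your proposal is correct and follows essentially the same route as the paper: bound the interior link rate by $O(\min\{1,l_c^{-\alpha}\})$, lower-bound the per-link sharing probability via the $\Omega(\sqrt{n_d n/\lambda})$ EMST length from Lemma~\ref{lem-K-EMST-length}, and feed both into Lemma~\ref{lem-basic-link-throughput}. Your ``hops used per session over total interior links'' ratio is algebraically identical to the paper's ``swept area $l_c\cdot\Omega(\sqrt{n_d n/\lambda})$ over total area $n/\lambda$'' computation, and the uniform-spreading step you flag as the main obstacle is asserted, not elaborated, in the paper's proof as well.
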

\begin{proof}
According to Lemma \ref{lem-K-EMST-length},
the length of any multicast tree is at least of order $\Omega(\sqrt{n_dn/\lambda})$.
Then, for a given sender of any links inside the giant component,
a multicast session passes through it with a probability of
\begin{center}
$\Omega\left(\min \left\{1, \frac{l_c\sqrt{n_dn/\lambda}}{n/\lambda}\right \}\right)$, \ie, $\Omega\left(\min\left\{1, \frac{l_c\sqrt{n_d \lambda}}{\sqrt{n}}\right\}\right)$.
\end{center}
By Lemma \ref{lem-basic-link-throughput}, the proof is completed.
\end{proof}

\subsubsection{Outside a Giant Component}

Based on Lemma \ref{lem-distance-outside}, we have,

\begin{lem}\label{lem-outside}
For any multicast scheme with $l_c$, the multicast throughput along
the links between $\mathcal{C}( \lambda,  {l_c})$ and the
nodes outside is at most of order
$\Lambda_{\bar{l}^{\mathrm{M}}_c}=O\left(\frac{\min\{1, (\frac{\lambda}{\log
n})^{\alpha/2}\}}{\mathbf{L}(n_s,  \frac{ n\sqrt{ \lambda} \cdot
l_c}{n_d \cdot \sqrt{\log n}})}\right)$.
\end{lem}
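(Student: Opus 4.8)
The plan is to mirror the structure of the proof of Lemma~\ref{lem-inside} for interior links, replacing the interior link scale $l_c$ by the ``gateway'' scale $\bar{l}^{\mathrm{M}}_c$ and invoking Lemma~\ref{lem-distance-outside} to pin down that scale. Since any feasible (connected) routing must deliver data to the exterior node realizing $\bar{l}^{\mathrm{M}}_c$, at least one exterior link of length $\Theta(\bar{l}^{\mathrm{M}}_c)$ is unavoidable, and this longest gateway link is the natural bottleneck to charge. First I would fix the length of that bottleneck link: by Lemma~\ref{lem-distance-outside} we have $\lambda\cdot l_c\cdot\bar{l}^{\mathrm{M}}_c=\Omega(\log n)$ \whp, and since we only consider $l_c\le\sqrt{\log n/\lambda}$, this yields $\bar{l}^{\mathrm{M}}_c=\Omega\!\left(\log n/(\lambda l_c)\right)=\Omega\!\left(\sqrt{\log n/\lambda}\right)$.

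Next I would bound the per-link rate. By the generalized physical model (Definition~\ref{defn-Gaussianmodel}), exactly as in Lemma~\ref{lem-inside}, a link of length $\bar{l}^{\mathrm{M}}_c$ can sustain a rate of at most $O(\min\{1,(\bar{l}^{\mathrm{M}}_c)^{-\alpha}\})$; substituting the lower bound $\bar{l}^{\mathrm{M}}_c=\Omega(\sqrt{\log n/\lambda})$ gives the numerator $O(\min\{1,(\lambda/\log n)^{\alpha/2}\})$. This step is routine once the scale of $\bar{l}^{\mathrm{M}}_c$ is established.

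The substantive step is to lower bound the probability $\mathbf{p}$ that an arbitrary multicast session is routed across this bottleneck gateway link, so that an application of Lemma~\ref{lem-basic-link-throughput} with $\mathbf{R}(\lambda,n)=O(\min\{1,(\lambda/\log n)^{\alpha/2}\})$ yields throughput $O(\mathbf{R}/\mathbf{L}(n_s,1/\mathbf{p}))$. Here I would again follow the template of Lemma~\ref{lem-inside}: invoke the per-session Euclidean spanning-tree length $\Omega(\sqrt{n_dn/\lambda})$ from Lemma~\ref{lem-n-s-tree-length-o} (equivalently Corollary~\ref{corallary-n-s-tree-length}), and use an integral-geometry/``tube'' argument to estimate the chance that a session's tree enters the catchment of the gateway link. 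Matching the target $1/\mathbf{p}=n\sqrt{\lambda}\,l_c/(n_d\sqrt{\log n})$ requires combining the tree length, the gateway scale $\bar{l}^{\mathrm{M}}_c\asymp\sqrt{\log n/\lambda}$, and the interior scale $l_c$ correctly.

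I expect this last step to be the main obstacle. Unlike the interior case, where the relevant width is simply $l_c$ and the density is uniform over the whole region, the exterior nodes sit in small residual clusters hanging off the giant component, and the ``width'' seen by a session is governed by the connectivity threshold $\sqrt{\log n/\lambda}$ rather than by $l_c$. The delicate part is to argue that an $\Omega(n_d\sqrt{\log n}/(n\sqrt{\lambda}\,l_c))$ fraction of sessions is genuinely forced through a \emph{single} gateway link, so that the maximum-occupancy bound $\mathbf{L}(n_s,1/\mathbf{p})$ applies, rather than being spread over many boundary links; this is precisely where the $\sqrt{\log n}$ factor and the $l_c$-dependence enter, and where Lemma~\ref{lem-distance-outside} does the heavy lifting by guaranteeing \whp\ the existence of a worst-case gateway of the required length.
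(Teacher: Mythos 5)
Your first two steps match the paper's proof exactly: the per-link rate $O(\min\{1,(\lambda/\log n)^{\alpha/2}\})$ comes from $\bar{l}^{\mathrm{M}}_c=\Omega(\sqrt{\log n/\lambda})$, which follows from Lemma \ref{lem-distance-outside} combined with the restriction $l_c\le\sqrt{\log n/\lambda}$. That part is sound and needs no change.

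The gap is the load bound, which you yourself flag as ``the main obstacle'' and never close, and the tool you propose for it is the wrong one. A tube argument around the session's spanning tree with width set by the gateway scale $\bar{l}^{\mathrm{M}}_c\asymp\sqrt{\log n/\lambda}$ gives a hitting probability of order $\sqrt{\log n/\lambda}\cdot\sqrt{n_d n/\lambda}\big/(n/\lambda)=\sqrt{n_d\log n/n}$, which contains no $l_c$ at all and therefore cannot reproduce the target $\mathbf{p}=\Theta\bigl(n_d\sqrt{\log n}/(n\sqrt{\lambda}\,l_c)\bigr)$. More fundamentally, the tube event is the wrong event: a session's tree passing within distance $\bar{l}^{\mathrm{M}}_c$ of the gateway does not force the session to traverse the gateway link, because exterior links are used only to reach exterior nodes. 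The event the paper charges is that one of the session's $n_d$ independently placed destination points lands in the catchment of the residual cluster served by that gateway, bounded by a union bound: $n_d$ times the ratio of a catchment area of linear scale $\bar{l}^{\mathrm{M}}_c$ to the total area $n/\lambda$. The $l_c$-dependence then enters precisely by substituting $\bar{l}^{\mathrm{M}}_c=\Omega\bigl(\log n/(\lambda l_c)\bigr)$ from Lemma \ref{lem-distance-outside}, which yields $\mathbf{p}=\Omega\bigl(n_d\sqrt{\log n}/(n\sqrt{\lambda}\,l_c)\bigr)$, after which Lemma \ref{lem-basic-link-throughput} finishes the proof. In particular, Lemma \ref{lem-n-s-tree-length-o} and Corollary \ref{corallary-n-s-tree-length} play no role in the exterior case; it is the destination count $n_d$, not the tree length, that drives the load on a gateway link.
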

\begin{proof}
Since there must be a link outside the giant component with the length of $\sqrt{\log n / \lambda}$,
the link capacity is bounded by
\begin{eqnarray*}
  \mathbf{R}_{\bar{l}^{\mathrm{M}}_c} &=& \min\left\{1, B\log \left(1+\frac{(\sqrt{\log n/ \lambda})^{-\alpha}}{N_0}\right)\right\} \\
  &=& O\left(\min\left\{1,  \left(\frac{\lambda}{\log n}\right)^{\alpha/2}\right\}\right).
\end{eqnarray*}
From Lemma \ref{lem-distance-outside}, $\bar{l}^{\mathrm{M}}_c=\Omega\left(\frac{\log n}{\lambda \cdot l_c}\right)$. It implies that  $\bar{l}^{\mathrm{M}}_c=\Omega(\sqrt{\log n / \lambda})$
because $ l_c:[{1}/{\sqrt{\lambda}},\sqrt{\log n/\lambda}] $. The
probability that a multicast session passes through such a link is
of
\[\Omega\left(\min\left\{1, \frac{n_d \cdot \bar{l}^{\mathrm{M}}_c  \bar{l}^{\mathrm{M}}_c \cdot \sqrt{\lambda}
}{n\cdot \sqrt{\log n }}\right\}\right).\]
 By Lemma \ref{lem-basic-link-throughput}, the proof is completed.
\end{proof}

By combining Lemma  \ref{lem-inside} and Lemma \ref{lem-outside}, we
finally obtain Theorem \ref{thm-upper-bound}.

\section{Conclusion and Discussion}\label{section-Conclusion and Future Work}

We derive the general upper bounds on the capacity  for
random wireless networks with a general node density.
When the general results are specialized  to
the well-known random dense and extended networks,
we show that our results close the open gaps between the upper and the lower bounds
on the multicast capacity for both networks.

{\small
\bibliographystyle{IEEEtran}
\bibliography{Capacity}
}

\newpage

\clearpage

\begin{flushleft}
{\huge Supplementary File}
\end{flushleft}

{\numberwithin{equation}{section}
{\numberwithin{lem}{section}
{\numberwithin{thm}{section}
{\numberwithin{defn}{section}
{\numberwithin{figure}{section}
{\numberwithin{table}{section}

\appendices

\section{Lower Bounds on General Multicast Capacity}
\label{sec-lower}
We design two general multicast schemes by using two types of hierarchical backbones systems in a well-integrated manner.
One hierarchical backbones system consists of the \emph{highways} and \emph{ordinary arterial roads};
the other is composed of the \emph{highways} and \emph{parallel arterial roads}.
Combining the achievable throughputs under our two schemes and other two schemes \cite{2012-winet-asymptotic} that are respectively based only on
ordinary arterial roads and parallel arterial roads,
we derive the optimal throughput as  the lower bounds on general multicast capacity
according to different ranges of parameters.

For the sake of succinctness, we first introduce a notion
called \emph{scheme lattice} from \cite{2012TC-MC}.
\begin{defn}[Scheme Lattice, \cite{2012TC-MC}]\label{defn-Scheme-Lattice}
Divide the deployment region
$\mathcal{R}(\lambda,n)=[0,\sqrt{n/\lambda}]^2$ into a lattice
consisting of square cells of side length $b$, we call the lattice
\emph{scheme lattice} and denote it by
$\mathbb{L}(\sqrt{n/\lambda},b,\theta)$, where $\theta \in [0,
{\pi}/{4}]$ is the minimum angle
 between the sides of the deployment region and produced cells.
\end{defn}

In our multicast schemes, the backbones of routing comprise two levels:
 \emph{highway system} and \emph{arterial road system}.
The highway system based on \emph{bond percolation theory} \cite{grimmett1999p} was originally proposed in
\cite{15Franceschetti2007}; and the connectivity-based arterial road system was devised in \cite{2012-winet-asymptotic}.
The main novelty of schemes in this work is the adoption of these two types of backbone systems in an integrated manner.
For the sake of completeness, we introduce concisely the construction procedures of these backbone systems,
and extend some relevant results in \cite{15Franceschetti2007} and \cite{2012-winet-asymptotic}
into the scenarios with general node density by a  geometric scaling, respectively.

\subsection{Highway System}

\subsubsection{Construction of highway system}
The highways are built based on scheme
lattice $\mathbb{L}(\sqrt{n/\lambda}, \sqrt{c^2/\lambda},  {\pi}/{4})$, as illustrated in
Fig.\ref{Fig-highways}. Then, there are $m^2$ cells, where $m = \left\lceil  \sqrt{n}/\sqrt{2}c \right\rceil^2 $.
A cell  is non-empty (open)
with the probability of
$p\to 1-\exp({-c^2})$, as $n\to \infty$, independently from each other.
Based on $\mathbb{L}(\sqrt{n/\lambda}, \sqrt{c^2/\lambda}, {\pi}/{4})$, draw a
horizontal edge across half of the squares, and a vertical edge
across the others, to obtain a new  lattice as described in Fig.\ref{Fig-highways}.
 An edge  $\hbar$ in the new lattice is
\emph{open} if the cell
crossed  by $\hbar$ is \emph{open}, and call a path comprised of
edges in the new lattice (Fig.\ref{Fig-highways}) \emph{open} if it
contains only open edges. Based on an open path penetrating the deployment region, as illustrated
in Fig.\ref{Fig-highways}, we choose a node from each cell
in $\mathbb{L}(\sqrt{n/\lambda}, \sqrt{c^2/\lambda}, {\pi}/{4})$ corresponding to the
edge of open path, call this node \emph{highway-station},
connect a pair of highway-stations in two
adjacent cells,  and finally obtain a crossing path, and call it \emph{highway}, as in Fig.\ref{Fig-highways}.

\begin{figure}[h]
\begin{flushright}
\scalebox{1}{\includegraphics{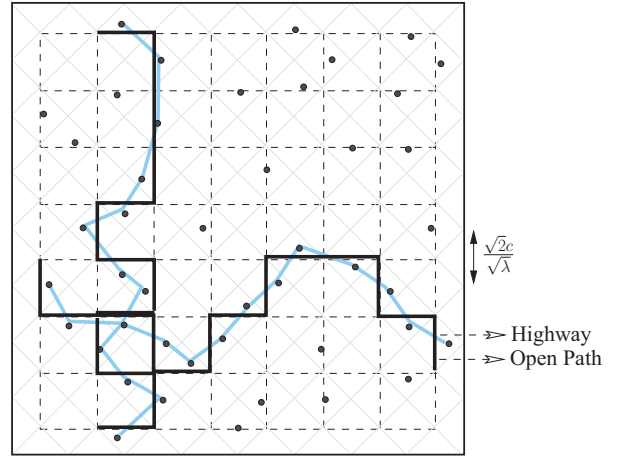}}
\end{flushright}
 \caption{Building Horizontal and Vertical Highways, \cite{2011-cheng-general-capacity}.
}
\label{Fig-highways}
\end{figure}

\begin{table}[t] \renewcommand{\arraystretch}{1.9}
\caption{Notations used in this paper.} \label{tab-notions}
 \centering
 \scalebox{0.98}{\begin{tabular}{c|c} \hline Notation & Meaning\\
  \hline \hline
  $\mathbb{L}(\cdot,\cdot,\cdot)$ &
   Scheme Lattice (Definition \ref{defn-Scheme-Lattice})\\
\hline
AR  & Arterial Road \\
\hline
AR-cell & The cell in $\mathbb{L}(\sqrt{n/\lambda}, 3\sqrt{\log n/\lambda},0)$ \\
\hline
Station-cell & The square cell centered at AR-cell of area  $\frac{4\log n}{\lambda}$, Fig.\ref{Fig-AR-system}. \\
\hline
PA-cell & Parallel Assignment Cell-subsquare in  AR-cell of area $\frac{9}{2\lambda}$. \\
\hline
O-AR, P-AR & Ordinary Arterial Road, Parallel Arterial Road \\
\hline
  O-AP, P-AP & Ordinary Access Path, Parallel Access Path  \\
\hline
$\mathcal{U}_{k}$     & Spanning Set of Multicast Session $\mathcal{M}_k$\\
\hline
$\mathbf{S}_{\mathrm{o}}(v)$& The entry point from   node $v$ to an assigned O-AR\\
\hline
$\mathbf{S}_{\mathrm{p}}(v)$& The entry point from  node $v$ to an assigned P-AR \\
\hline
$\EST(\mathcal{U}_{k})$& An Euclidean Spanning Tree  of Multicast Session $\mathcal{M}_k$ \\
\hline
$\mathbb{M}_{\mathrm{o}}$  & Scheme based on only O-AR system\\
\hline
$\mathbb{M}_{\mathrm{p}}$  & Scheme based on only P-AR system\\
\hline
$\mathbb{M}_{\mathrm{o}\&\mathrm{h}}$  & Scheme based on both O-AR and highway system\\
\hline
$\mathbb{M}_{\mathrm{p}\&\mathrm{h}}$  & Scheme based on both P-AR  and highway system\\
\hline
  \end{tabular}
  }
  \end{table}

For a given constant $\kappa >0$, partition the scheme lattice
$\mathbb{L}(\sqrt{n/\lambda}, \sqrt{c^2/\lambda}, {\pi}/{4})$ into horizontal (or
vertical) rectangle slabs of size $m \times  \kappa \log m$ (or $  \kappa \log m  \times  m$), denoted
by $\mathcal{R}^{\mathrm{H}}_{i}$ (or $\mathcal{R}^{\mathrm{V}}_{i}$),
where $m=\frac{\sqrt{n}}{\sqrt{2}c}$.
 Denote the
number of disjoint horizontal (or vertical) highways within
$\mathcal{R}^{\mathrm{H}}_i$ (or $\mathcal{R}^{\mathrm{V}}_i$) by $N^{\mathrm{H}}_i$ (or
$N^{\mathrm{V}}_i$). The next lemma follows.
\begin{lem}  (\cite{15Franceschetti2007}) \label{lemma-density-highway}
For any $\kappa$ and $p\in (5/6, 1)$ satisfying $2+\kappa
\log(6(1-p))<0$, there exists a $\eta =\eta(\kappa, p)$ such that
\[
  \lim\limits_{m \to \infty} \Pr[N^h \geq \eta \log m]=1,
  \lim\limits_{m \to \infty} \Pr[N^v \geq \eta \log m]=1,
\]
where $N^{\mathrm{H}}=\min_i N^{\mathrm{H}}_i$ and $N^{\mathrm{V}}=\min_i N^{\mathrm{V}}_i$.
\end{lem}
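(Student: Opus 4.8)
The plan is to reduce the claim to a classical crossing estimate for supercritical bond percolation and then close it with a union bound over the slabs. By the $\mathcal{R}^{\mathrm{H}}_i \leftrightarrow \mathcal{R}^{\mathrm{V}}_i$ symmetry of the lattice $\mathbb{L}(\sqrt{n/\lambda}, \sqrt{c^2/\lambda}, \pi/4)$, the horizontal and vertical statements are identical in distribution, so it suffices to bound $\Pr[N^{\mathrm{H}}_i < \eta \log m]$ for a single horizontal slab of size $m \times \kappa \log m$ and then sum over the $\Theta(m/\log m)$ such slabs. Note that the geometric rescaling by the cell side $\sqrt{c^2/\lambda}$ leaves the edge-openness probability $p\to 1-e^{-c^2}$ intact, so every percolation estimate transfers verbatim to $\mathcal{N}(\lambda,n)$.

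First I would invoke planar duality together with the max-flow--min-cut theorem. Assigning unit capacity to open edges and zero to closed ones and routing flow from the left side of $\mathcal{R}^{\mathrm{H}}_i$ to the right, the maximum number of edge-disjoint open left-to-right crossings equals the minimum number of open primal edges pierced by a top-to-bottom path of the dual lattice. Hence the event $\{N^{\mathrm{H}}_i < \eta \log m\}$ forces a self-avoiding dual path $\pi$ spanning the slab vertically, of some length $L \ge \kappa \log m$, along which at least $L-\eta \log m$ of the pierced edges are \emph{closed}.

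Next I would bound this probability by direct enumeration. There are $O(m)$ admissible starting dual vertices on the top boundary, and at most $3^{L}$ self-avoiding dual paths of length $L$ from a fixed vertex. For a fixed such path the chance that at least $L-\eta\log m$ of its $L$ edges are closed is at most $\binom{L}{\eta\log m}(1-p)^{L-\eta\log m} \le 2^{L}(1-p)^{L-\eta\log m}$. Summing the resulting geometric series over $L\ge\kappa\log m$ (which converges once $6(1-p)<1$) and multiplying by the $O(m)$ starting positions leaves a bound of order $m^{\,1+\kappa\log(6(1-p))+o(1)}$. A union bound over the $\Theta(m/\log m)$ slabs then gives $\Pr[N^{\mathrm{H}}<\eta\log m]=O\!\left(m^{\,2+\kappa\log(6(1-p))+o(1)}\right)$, which tends to $0$ precisely under the hypothesis $2+\kappa\log(6(1-p))<0$.

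The main obstacle is calibrating $\eta=\eta(\kappa,p)$ so that the two extra factors generated by the ``at least $L-\eta\log m$ closed edges'' count---namely the entropy term $\binom{L}{\eta\log m}\le 2^{L}$ and the correction $(1-p)^{-\eta\log m}=m^{\,\eta\log(1/(1-p))}$---do not eat into the clean exponent $2+\kappa\log(6(1-p))$. Choosing $\eta$ small enough relative to $\kappa$ makes both contributions $o(1)$ in the exponent and keeps it strictly negative. Since this is exactly the highway-density estimate established in \cite{15Franceschetti2007}, an equally valid route is to cite their crossing lemma directly and transfer it through the rescaling above.
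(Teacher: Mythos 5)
The paper offers no proof of this lemma---it is imported verbatim from \cite{15Franceschetti2007}---and your reconstruction is precisely the argument given there (and in Grimmett's treatment of box crossings in supercritical bond percolation): Menger/planar duality converts a shortage of disjoint open crossings into a top-to-bottom dual path with few open edges, a Peierls-type count of self-avoiding dual paths produces the factor $6(1-p)$ (with $3$ from the path count and $2$ from the binomial entropy, whence the hypothesis $p>5/6$), and the union bound over the $\Theta(m/\log m)$ slabs yields the exponent $2+\kappa\log(6(1-p))$, after which $\eta$ is taken small enough that the residual $m^{\eta\log(1/(1-p))}$ factor keeps the exponent strictly negative. The argument is correct; the one slight imprecision is attributing the $2^{L}$ entropy factor to the calibration of $\eta$---that factor is independent of $\eta$ and is already absorbed into the constant $6$---but this does not affect the conclusion.
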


\subsubsection{Transmission scheduling for highway system:} The highways can be scheduled by a 9-TDMA  scheme based on scheme lattice
$\mathbb{L}(\sqrt{n/\lambda}, \sqrt{c^2/\lambda}, {\pi}/{4})$, \cite{15Franceschetti2007}. By a similar to Theorem 3 in
\cite{15Franceschetti2007}, we can prove that all highways can sustain w.h.p. the rate of
order $\Omega(1)$.

\subsection{Arterial Road  (AR) System}
We introduce two types of arterial road  (AR) systems from \cite{2012-winet-asymptotic}: \emph{ordinary
arterial road system} and \emph{parallel arterial road system},
which perform better than the other according to the different density
$\lambda$. Both AR systems are constructed based on the scheme
lattice $\mathbb{L}(\sqrt{n/\lambda}, 3\sqrt{\log n/\lambda},0)$.
Then, there are
$\frac{n}{9\log n}$ cells in $\mathbb{L}(\sqrt{n/\lambda},
3\sqrt{\log n/\lambda},0)$, called \emph{AR-cells}. Denote each row
(or column) by $\mathcal{\tilde{R}}_i^h$ (or
$\mathcal{\tilde{R}}_i^v$), where $i\in[1,
\frac{\sqrt{n}}{3\sqrt{\log n}}]$.
Then, from Lemma \ref{lem-poisson-trial-tail},
for all $\frac{n}{9\log n}$ AR-cells,
the number of nodes is \whp within $[\frac{9}{2}\log n, 18\log n]$.

Firstly, we introduce  the \emph{ordinary arterial road system}.

\subsubsection{Ordinary Arterial Road System (O-AR system)}

The \emph{ordinary arterial road system} can be obtained by choosing randomly one
node from each cell, called \emph{ordinary AR-station}, and connecting these stations in edge-adjacent cells.
Then, we have
\begin{lem}[\cite{2012-winet-asymptotic}]\label{lemma-rate-ordinary}
By a $9$-TDMA scheme,
each ordinary arterial road in O-AR system  can sustain a rate of
$$\mathbf{R}_{\mathrm{O-AR}}(\lambda, n)=\left\{
\begin{array}{lll}
\Theta(\frac{\lambda^{\frac{\alpha}{2}}}{(\log n)^{\frac{\alpha}{2}}}) & \mbox{when} &  \lambda :[1, \log n] \\
\Theta(1) & \mbox{when} &   \lambda :[\log n, n] \\
  \end{array} \right.$$
 \end{lem}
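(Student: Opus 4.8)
The goal is to establish the achievable rate per ordinary arterial road stated in Lemma~\ref{lemma-rate-ordinary}, namely $\mathbf{R}_{\mathrm{O-AR}}(\lambda,n)=\Theta(\lambda^{\alpha/2}/(\log n)^{\alpha/2})$ for $\lambda:[1,\log n]$ and $\Theta(1)$ for $\lambda:[\log n,n]$. The plan is to compute the per-link rate under the generalized physical model for a single hop of an O-AR, and then argue that a constant-factor ($9$-TDMA) scheduling guarantees this rate is sustainable simultaneously across all O-ARs without asymptotic interference penalty. First I would fix the geometry: each O-AR is built on the scheme lattice $\mathbb{L}(\sqrt{n/\lambda},3\sqrt{\log n/\lambda},0)$, so AR-cells have side $3\sqrt{\log n/\lambda}$ and two ordinary AR-stations in edge-adjacent cells are separated by a Euclidean distance of order $\Theta(\sqrt{\log n/\lambda})$. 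Consequently each O-AR hop has length $\ell_{\mathrm{AR}}=\Theta(\sqrt{\log n/\lambda})$, and the received signal power scales as $\ell(\ell_{\mathrm{AR}})=\Theta((\log n/\lambda)^{-\alpha/2})=\Theta((\lambda/\log n)^{\alpha/2})$.

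The main analytical step is to bound the aggregate interference at a receiving AR-station under the $9$-TDMA schedule. Under $9$-TDMA over the AR-cell lattice, at most one transmitter is active per group of $9$ cells, so the nearest simultaneous interferer lies at distance $\Omega(\sqrt{\log n/\lambda})$ and, more generally, the interferers sit on a grid with spacing $\Theta(\sqrt{\log n/\lambda})$. Summing the power-law attenuation $\ell(\cdot)=|\cdot|^{-\alpha}$ over this grid of interferers gives a geometric/$p$-series sum that converges for $\alpha>2$; I would show the total interference is of the same order as a single nearest-interferer contribution, i.e.\ $\Theta((\lambda/\log n)^{\alpha/2})$, and is therefore dominated by (or comparable to) the constant noise floor $N_0=\Theta(1)$. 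This yields $\mathrm{SINR}=\Theta((\lambda/\log n)^{\alpha/2}/(N_0+\text{interference}))=\Theta((\lambda/\log n)^{\alpha/2})$ whenever this quantity is $o(1)$, which happens precisely when $\lambda=o(\log n)$.

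I would then convert SINR to rate via $R=\min\{1,B\log(1+\mathrm{SINR})\}$, mirroring the computation already used in Lemma~\ref{lem-inside} and Lemma~\ref{lem-outside}. In the regime $\lambda:[1,\log n]$ the SINR is $o(1)$, so $\log(1+\mathrm{SINR})=\Theta(\mathrm{SINR})=\Theta((\lambda/\log n)^{\alpha/2})=\Theta(\lambda^{\alpha/2}/(\log n)^{\alpha/2})$, matching the first branch. In the regime $\lambda:[\log n,n]$ the hop length is $O(1/\sqrt{\lambda}\cdot\sqrt{\log n})=O(1)$ order, the SINR is $\Omega(1)$, the logarithm saturates, and the $\min\{1,\cdot\}$ cap forces the rate to $\Theta(1)$, matching the second branch. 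The crossover at $\lambda=\Theta(\log n)$ is exactly where $(\lambda/\log n)^{\alpha/2}$ transitions through a constant.

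The hard part will be the interference summation: I must verify that the $9$-TDMA schedule genuinely places all concurrent transmitters on a lattice of spacing $\Theta(\sqrt{\log n/\lambda})$ so that the interference sum over the deployment region converges (using $\alpha>2$) rather than blowing up with $n$, and that the \whp\ node-count guarantee on AR-cells (each holding within $[\tfrac{9}{2}\log n,18\log n]$ nodes by Lemma~\ref{lem-poisson-trial-tail}) ensures every AR-cell actually contains a station so the O-AR is well-defined. Since this construction and its scheduling are imported essentially verbatim from \cite{2012-winet-asymptotic}, I would lean on that reference for the convergence bookkeeping and the high-probability non-emptiness, and present the rate computation here as a geometric scaling of their result to general density $\lambda$, exactly as the excerpt announces it will do.
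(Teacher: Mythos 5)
The paper never proves this lemma: it is imported verbatim from \cite{2012-winet-asymptotic}, and the surrounding text only restates the construction (one station per AR-cell of the lattice $\mathbb{L}(\sqrt{n/\lambda},3\sqrt{\log n/\lambda},0)$, stations in edge-adjacent cells connected, $9$-TDMA scheduling) while deferring the rate analysis entirely to that reference. So there is no internal proof to compare against; your sketch is a reconstruction of the standard argument, and in outline it is the right one: hop length $\Theta(\sqrt{\log n/\lambda})$, a convergent ($\alpha>2$) lattice interference sum over concurrent transmitters spaced $\Theta(\sqrt{\log n/\lambda})$ apart, and $R=B\log(1+\mathrm{SINR})$ with the constant TDMA penalty absorbed into the $\Theta(\cdot)$.

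Two points need tightening before this would stand as a proof. First, in the regime $\lambda=\omega(\log n)$ your claim that the aggregate interference is ``dominated by (or comparable to) the constant noise floor'' is wrong: the interference sum is $\Theta\bigl((\lambda/\log n)^{\alpha/2}\bigr)\gg N_0$ there. The conclusion $\mathrm{SINR}=\Theta(1)$ still holds, but only because the received signal power scales identically with the interference, not because interference is negligible; you should state the SINR as signal over (noise plus interference) and observe that the two regimes are distinguished by whether noise or interference dominates the denominator. Second, your second branch silently uses the uncapped attenuation $\ell(d)=d^{-\alpha}$ for hop lengths below $1$. Under the extended-network law $\ell(d)=\min\{1,d^{-\alpha}\}$ (Definition \ref{defn-Gaussianmodel}), when $\lambda\gg\log n$ there are $\Theta(\lambda/\log n)$ concurrent interferers within unit distance of a receiver, each contributing the capped power $P$, so the interference would grow like $\lambda/\log n$ while the signal stays at $P$, driving the SINR to zero and breaking the $\Theta(1)$ claim. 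The lemma is therefore only correct under the dense-type power law for sub-unit distances, and since the paper never specifies which attenuation governs general $\lambda$, you must make that choice explicit rather than leave it to the citation.
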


\begin{figure}[t]
\scalebox{1}{\includegraphics{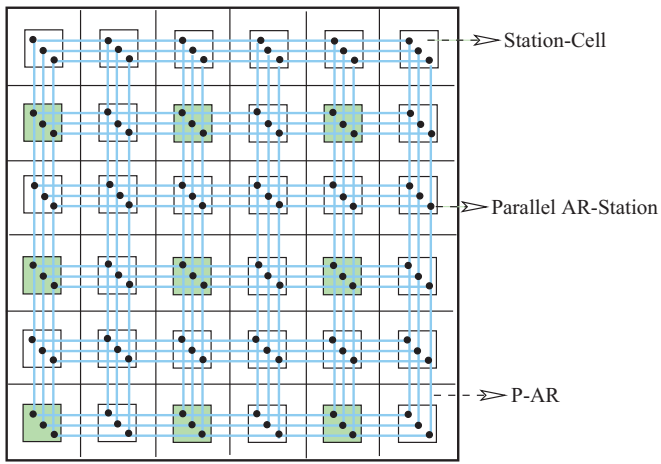}}
\caption{Parallel
Arterial Roads.
The shaded station-cells can
  be scheduled
simultaneously.
  In any time slot, there are
$2\log n$ concurrent links initiated from every activated
station-cell, \cite{2012-winet-asymptotic}.} \label{Fig-AR-system}
\end{figure}

Next, we introduce the \emph{parallel arterial road system}.

\subsubsection{Parallel Arterial Road System  (P-AR system)}

In the center of each AR-cell,
we set a smaller square of side length $2 \sqrt{\log n /\lambda}$,
as illustrated in Fig.\ref{Fig-AR-system},  call it
\emph{station-cell}. Then, by
Lemma \ref{lem-poisson-trial-tail}, we can prove that
for all station-cells, there are, \whp,  at least
 $2\log n$ nodes.

The
\emph{horizontal} arterial roads in $\mathcal{\tilde{R}}_i^h$ is constructed by using the
following operations: Firstly, for all $\frac{\sqrt{n}}{3\sqrt{ \log n}}$
station-cells in $\mathcal{\tilde{R}}_i^h$, choose $2\log n$ nodes from each station-cell,
called \emph{parallel AR-stations}. Secondly,  connect those parallel AR-stations in the station-cells contained    in the edge-adjacent AR-cells in a one-to-one pattern, as illustrated in Fig.\ref{Fig-AR-system}.
In a
similar way, we can construct the \emph{vertical} arterial roads.
We say that two  arterial roads are \emph{disjoint} if  no
station is shared by them. According to the procedure of
construction above, there are $2\log n$ disjoint horizontal (or vertical)
arterial roads in every row (or column) of $\mathbb{L}(\sqrt{n/\lambda}, 3\sqrt{\log n/\lambda},0)$.

A $4$-TDMA scheme, as depicted in Fig.
\ref{Fig-AR-system}, is adopted to schedule
arterial roads.
The main technique
called \emph{parallel transmission scheduling} is: Instead of
scheduling only one link in each activated station-cell (or cell) in each time slot,
we consider scheduling $2\log n$ links   initiating from the same
station-cell (or cell) together.
It can be proven that
 this
modification  increases the total throughput for each cell by order
of $\Theta(\log n)$, compared with only scheduling one link in each
cell.
\begin{lem}[\cite{2012-winet-asymptotic}]\label{lemma-P-AR-rate}
Each P-AR  can sustain a rate of
$$\mathbf{R}_{\mathrm{P-AR}}(\lambda, n)=\left\{
\begin{array}{lll}
\Theta(\frac{\lambda^{\frac{\alpha}{2}}}{(\log n)^{\frac{\alpha}{2}}}) & \mbox{when} &  \lambda :[1, (\log n)^{1-\frac{2}{\alpha}}] \\
\Theta(\frac{1}{\log n} ) & \mbox{when} &   \lambda :[(\log n)^{1-\frac{2}{\alpha}}, n] \\
  \end{array} \right.$$
\end{lem}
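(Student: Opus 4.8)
The plan is to compute directly the signal-to-interference-plus-noise ratio (SINR) seen at a typical receiver on a parallel arterial road under the $4$-TDMA parallel-transmission schedule, and then convert it to a rate through Definition~\ref{defn-Gaussianmodel}. First I would fix the relevant geometry. Every P-AR link joins two parallel AR-stations lying in station-cells of edge-adjacent AR-cells, so by construction its length is $\Theta(\sqrt{\log n/\lambda})$. By Lemma~\ref{lem-poisson-trial-tail} (a Chernoff-type bound on the Poisson node counts), every station-cell contains at least $2\log n$ nodes \whp, so the schedule indeed activates $2\log n$ concurrent links out of each active station-cell; this $\log n$-fold spatial multiplexing is exactly what will drive the two regimes apart. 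It therefore suffices to show that a single such link sustains the claimed rate \whp.

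Next I would estimate signal and interference. The desired signal power is $S=P\cdot\ell(\Theta(\sqrt{\log n/\lambda}))=\Theta((\lambda/\log n)^{\alpha/2})$. The dominant interference is intra-cell: the other $2\log n-1$ transmitters in the same active station-cell sit at distance $\Theta(\sqrt{\log n/\lambda})$ from the receiver, contributing $I_{\mathrm{intra}}=\Theta(\log n)\cdot S$. For the inter-cell part, the $4$-TDMA schedule keeps active station-cells on a sublattice of spacing $\Theta(\sqrt{\log n/\lambda})$; grouping them into rings, ring $k$ holds $\Theta(k)$ cells at distance $\Theta(k\sqrt{\log n/\lambda})$, each carrying $2\log n$ transmitters, so its contribution is $\Theta(\log n\cdot S)\cdot k^{1-\alpha}$. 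Summing over $k\ge 1$ gives $I_{\mathrm{inter}}=\Theta(\log n\cdot S)\sum_{k\ge1}k^{1-\alpha}=\Theta(\log n\cdot S)$, where convergence uses $\alpha>2$. Hence the total interference is $I=\Theta(\log n\cdot S)$ and $\mathrm{SINR}=S/(N_0+\Theta(\log n\cdot S))$.

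Finally I would split on the sign of $\log n\cdot S-N_0$, which flips precisely at $S=\Theta(1/\log n)$, i.e. at $\lambda=\Theta((\log n)^{1-2/\alpha})$. For $\lambda:[1,(\log n)^{1-2/\alpha}]$ the noise dominates, $\mathrm{SINR}=\Theta(S)$, and since $S=o(1)$ we get a rate $B\log(1+\mathrm{SINR})=\Theta(S)=\Theta(\lambda^{\alpha/2}/(\log n)^{\alpha/2})$; for $\lambda:[(\log n)^{1-2/\alpha},n]$ the interference dominates, $\mathrm{SINR}=\Theta(1/\log n)$, and the rate is $\Theta(1/\log n)$, using $\log(1+x)=\Theta(x)$ for $x=o(1)$ in both cases. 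Equivalently one may reach the same bound by the geometric-scaling reduction announced in the text: rescale $\mathcal{N}(\lambda,n)$ to the extended network $\mathcal{N}(1,n)$ (coordinates multiplied by $\sqrt{\lambda}$), invoke the P-AR result of \cite{2012-winet-asymptotic} there, and track how $\ell$ transforms the received power. I expect the main obstacle to be the interference accounting: one must show that the $\log n$ amplification from parallel scheduling is of exactly the right order and no larger, which requires both the convergence of the ring sum (hence $\alpha>2$) and the verification that the intra-cell and inter-cell contributions are of the same order, so that neither the noise-limited nor the interference-limited estimate is off by a logarithmic factor.
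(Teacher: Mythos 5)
The paper never proves this lemma: it is imported verbatim from \cite{2012-winet-asymptotic}, and the surrounding text only sketches the P-AR construction and asserts that scheduling $2\log n$ parallel links per station-cell ``increases the total throughput for each cell by order of $\Theta(\log n)$.'' Your direct SINR computation is therefore a genuine reconstruction rather than a retelling of an in-paper argument, and it is the right one: the $\Theta(\sqrt{\log n/\lambda})$ hop length, the $\Theta(\log n)\cdot S$ intra-cell interference created by the parallel schedule, the convergent ring sum (using $\alpha>2$) for the inter-cell part, and the identification of the noise/interference crossover $\log n\cdot S=\Theta(N_0)$ at $\lambda=\Theta((\log n)^{1-2/\alpha})$ together yield exactly the two regimes of the lemma. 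This is consistent with, and considerably more explicit than, what the paper provides.

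One step needs more care. Your signal estimate $S=\Theta((\lambda/\log n)^{\alpha/2})$ and the ring-sum step $\ell(kd)=k^{-\alpha}\ell(d)$ both presuppose the pure power law $\ell(d)=d^{-\alpha}$. Definition~\ref{defn-Gaussianmodel} uses $\ell(d)=\min\{1,d^{-\alpha}\}$ for extended scaling, and once $\lambda>\log n$ the hop length $\sqrt{\log n/\lambda}$ drops below $1$; under the capped law the signal then saturates at $\Theta(1)$ while the $\Theta(\lambda)$ active transmitters within unit distance of the receiver (density $2\log n$ per active cell times $\Theta(\lambda/\log n)$ active cells per unit area) each contribute $\Theta(1)$ of interference, driving the SINR down to $\Theta(1/\lambda)$ rather than the claimed $\Theta(1/\log n)$. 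You should therefore either state explicitly that the uncapped (dense-scaling) attenuation is in force whenever $\sqrt{\log n/\lambda}<1$ --- the convention under which the cited bound holds over the whole range $\lambda:[(\log n)^{1-\frac{2}{\alpha}},n]$ --- or redo the interference accounting for $\lambda>\log n$. The remaining ingredients (the Chernoff bound guaranteeing $2\log n$ stations per station-cell, the deterministic $\Theta(\sqrt{\log n/\lambda})$ distances making the SINR bound uniform over all P-AR links, and the use of $\log(1+x)=\Theta(x)$ since the SINR is $o(1)$ in both regimes) are sound.
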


\subsection{Access Paths}
We assign nodes to the specific arterial roads by now.
Next, we devise the
\emph{access path}, including \emph{draining paths} and \emph{delivering paths}, for every node to the arterial road system.

\subsubsection{Access Paths to O-AR System (O-APs)}
We call those links, along which the nodes outside drain the packets to  O-AR system or the stations in O-AR system deliver the
packets to the nodes outside,
\emph{ordinary access paths} (O-APs).

For every node outside ordinary arterial roads, say $v$,
it drains (or receives) data packets to (or from) the ordinary AR-station in the AR-cell containing $v$,
denoted by  $\mathbf{S}_{\mathrm{o}}(v)$,   by a single hop called
\emph{ordinary draining path} (or \emph{ordinary delivering path}).

A 4-TDMA scheme based on $\mathbb{L}(\sqrt{n/\lambda}, 3\sqrt{\log n/\lambda},0)$
is adopted to schedule the O-APs.
Each slot can be further divided into $8\log n$ subslots, ensuring that every link contained in each  AR-cell
can be scheduled once in a period of $4\times 8\log n$ subslots. Then, it follows that
\begin{lem}[\cite{2012-winet-asymptotic}]\label{lemma-ordinary-access-rate}
The rate of each ordinary access path, including ordinary draining path and ordinary delivering path,
can \emph{also} be sustained of
$$\mathbf{R}_{\mathrm{O-AR}}(\lambda, n)=\left\{
\begin{array}{lll}
\Theta(\frac{\lambda^{\frac{\alpha}{2}}}{(\log n)^{\frac{\alpha}{2}}}) & \mbox{when} &  \lambda :[1, \log n] \\
\Theta(1) & \mbox{when} &   \lambda :[\log n, n] \\
  \end{array} \right.$$
\end{lem}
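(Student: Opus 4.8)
The plan is to reduce the claim to the interference computation already carried out for the arterial roads in Lemma \ref{lemma-rate-ordinary}, by showing that an ordinary access path is, geometrically and in the interference it induces, indistinguishable (up to constants) from an O-AR hop. First I would bound the length of an O-AP. Since every node $v$ drains to (or receives from) the station $\mathbf{S}_{\mathrm{o}}(v)$ lying in the \emph{same} AR-cell, and each cell of $\mathbb{L}(\sqrt{n/\lambda}, 3\sqrt{\log n/\lambda},0)$ has side $3\sqrt{\log n/\lambda}$, every O-AP is a single hop of length at most the cell diagonal $3\sqrt{2}\sqrt{\log n/\lambda}=\Theta(\sqrt{\log n/\lambda})$, which is exactly the order of an O-AR hop between stations in edge-adjacent cells. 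Hence, under the generalized physical model of Definition \ref{defn-Gaussianmodel} with $\ell(d)=d^{-\alpha}$, the received signal power of an active O-AP is $S=\Theta\!\left((\lambda/\log n)^{\alpha/2}\right)$.

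Next I would control the interference seen at the receiver of an active O-AP. The decisive observation is that the $4$-TDMA schedule on $\mathbb{L}(\sqrt{n/\lambda}, 3\sqrt{\log n/\lambda},0)$, refined into $\Theta(\log n)$ subslots, activates \emph{at most one} O-AP link per active AR-cell in any given subslot, while same-colour (simultaneously active) cells are separated by a reuse distance of order $\sqrt{\log n/\lambda}$. This is precisely the one-transmitter-per-active-cell geometry used to schedule the O-ARs in Lemma \ref{lemma-rate-ordinary}. To bound the aggregate interference I would group the active interferers into concentric rings around the receiver: the $j$-th ring contains $O(j)$ active cells at distance $\Theta(j\sqrt{\log n/\lambda})$, each contributing $\Theta\!\big((j\sqrt{\log n/\lambda})^{-\alpha}\big)$, so the total is $\Theta\!\big((\lambda/\log n)^{\alpha/2}\big)\sum_{j\ge 1} j^{1-\alpha}$, a sum that converges for $\alpha>2$. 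Thus $I=\Theta(S)$, i.e. the interference is of the same order as the signal.

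With $S=\Theta((\lambda/\log n)^{\alpha/2})$, $I=\Theta(S)$ and $N_0=\Theta(1)$, the achievable rate is $\min\{1, B\log(1+\mathrm{SINR})\}$ with $\mathrm{SINR}=S/(N_0+\Theta(S))$. When $\lambda:[\log n,n]$ we have $S=\Omega(1)$, so the link is interference-limited, $\mathrm{SINR}=\Theta(1)$, and the rate is $\Theta(1)$; when $\lambda:[1,\log n]$ we have $S=o(1)=o(N_0)$, so the link is noise-limited, $\mathrm{SINR}=\Theta((\lambda/\log n)^{\alpha/2})$, and the rate is $\Theta((\lambda/\log n)^{\alpha/2})$. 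These are exactly the two branches of $\mathbf{R}_{\mathrm{O-AR}}(\lambda,n)$, which is what the lemma claims for each O-AP when it is scheduled.

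The hard part, as in the arterial-road case, will be the interference estimate: one must verify that the refined $4$-TDMA schedule genuinely preserves the sparse one-transmitter-per-active-cell pattern, so that the ring sum — and hence the signal-to-interference ratio — coincides in order with that of Lemma \ref{lemma-rate-ordinary}, and that the convergence for $\alpha>2$ is uniform as $n\to\infty$. A minor but necessary point is that $\Theta(\log n)$ subslots indeed suffice to give every O-AP in a cell its own turn; this follows because, by the tail bound of Lemma \ref{lem-poisson-trial-tail}, every AR-cell contains only $O(\log n)$ nodes, and hence $O(\log n)$ access links, \whp.
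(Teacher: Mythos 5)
Your argument is essentially the paper's: the paper imports this lemma from \cite{2012-winet-asymptotic} with only the remark that the $4$-TDMA schedule on $\mathbb{L}(\sqrt{n/\lambda},3\sqrt{\log n/\lambda},0)$, refined into $\Theta(\log n)$ subslots, reduces each O-AP to the same one-transmitter-per-active-cell, cell-diagonal-length geometry as an O-AR hop, so the SINR ring-sum of Lemma \ref{lemma-rate-ordinary} applies verbatim — which is exactly the reduction you carry out in detail. Your closing caveat is the right one to flag: the $\Theta(\log n)$ subslot refinement means the stated rate is the per-activation rate of an O-AP (which is how the paper later uses it, arguing the O-AP phase is never the bottleneck because each access path carries a correspondingly smaller load), not its long-run time average.
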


\subsubsection{Access Paths to P-AR System (P-APs)}
We call those links, along which the nodes outside drain the packets to  P-AR system or the stations in P-AR system deliver the
packets to the nodes outside,
\emph{parallel access paths} (P-APs).

For every node outside parallel arterial roads, say $v$, where $v \in  \mathcal{\tilde{R}}_j^v$ and $v \in  \mathcal{\tilde{R}}_i^h$,
it drains the data packets into a  parallel AR-station located in
the adjacent AR-cell in $\mathcal{\tilde{R}}_j^v$, denoted by  $\mathbf{S}_{\mathrm{p}}(v)$,
 by a single hop called \emph{parallel draining path} (Please see the illustration in Fig.\ref{fig-accessing-path}(a)); and receives the packets from the station, located in
the adjacent AR-cell in $\mathcal{\tilde{R}}_i^h$,
of
a specific arterial road by a single hop called \emph{parallel delivering path}
(Please see the illustration in Fig.\ref{fig-accessing-path}(b)).
Specifically, each AR-cell is further divided into $2\log n$ subsquares, called \emph{parallel assignment cell} (PA-cell), of area $\frac{9\log n/\lambda}{2\log n}=\frac{9}{2\lambda}$.
Connect all nodes in the same PA-cell with the same P-AR station in the adjacent AR-cell to build the
P-APs.

A 2-TDMA scheme is capable to
schedule the draining paths (delivering paths, resp.) except those
initiating  from (terminating to, resp.) nodes in
$\mathcal{\tilde{R}}_{\delta}^h$ ($\mathcal{\tilde{R}}_{\delta}^v$,
resp.), where $\delta=\frac{\sqrt{n}}{3\sqrt{\log n}}$, and use an
additional 1-TDMA scheme to schedule other draining paths
(delivering paths, resp.). Please see the illustrations in
Fig.\ref{fig-accessing-path}(a) and Fig.\ref{fig-accessing-path}(b).
Then, it follows that
\begin{lem}[\cite{2012-winet-asymptotic}]\label{lemma-parallel-access-rate}
The rate of each parallel access path, including parallel draining and parallel delivering paths,
 can also be sustained of
$\mathbf{R}_{\mathrm{P-AR}}(\lambda, n)$.
\end{lem}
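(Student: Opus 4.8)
The plan is to establish that a parallel access path sustains the rate $\mathbf{R}_{\mathrm{P-AR}}(\lambda,n)$ by controlling the three quantities that the generalized physical model of Definition \ref{defn-Gaussianmodel} turns into a rate: the link length, the received signal power, and the aggregate interference admitted by the $2$-TDMA (plus supplementary $1$-TDMA) schedule. First I would bound the P-AP length. A node in a PA-cell drains to, or is served by, a P-AR station in an edge-adjacent AR-cell of the lattice $\mathbb{L}(\sqrt{n/\lambda},3\sqrt{\log n/\lambda},0)$, whose side is $3\sqrt{\log n/\lambda}$; hence every P-AP has length $O(\sqrt{\log n/\lambda})$, the same order as a P-AR link. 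The key geometric observation is that the station-cell of side $2\sqrt{\log n/\lambda}$ sits centered in its AR-cell, leaving a margin of $\tfrac{1}{2}\sqrt{\log n/\lambda}$ to the AR-cell boundary, so \emph{every} transmitter of a co-scheduled link is at distance $\Theta(\sqrt{\log n/\lambda})$ from a receiver in the adjacent cell. No interferer is anomalously close. Thus the received signal power is $\Theta\bigl(P\cdot\min\{1,(\lambda/\log n)^{\alpha/2}\}\bigr)$, the $\min$ accounting for the saturating attenuation $\ell(|\cdot|)=\min\{1,|\cdot|^{-\alpha}\}$ when $\lambda>\log n$ makes the hop shorter than $1$.

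Next I would carry out the interference bookkeeping, mirroring the computation behind Lemma \ref{lemma-P-AR-rate}. In any slot the schedule activates the $2\log n$ parallel links emanating from one station-cell while keeping distinct active cells separated by a constant number of AR-cell widths. The dominant contribution is the \emph{intra-cell} interference: by the distance estimate above, each of the $2\log n-1$ co-scheduled links contributes $\Theta(\min\{1,(\lambda/\log n)^{\alpha/2}\})$, so their sum is $\Theta\bigl(\log n\cdot\min\{1,(\lambda/\log n)^{\alpha/2}\}\bigr)$, i.e. exactly a factor $\Theta(\log n)$ times the signal. The \emph{inter-cell} interference I would control by summing over concentric rings of simultaneously active station-cells; the TDMA spatial-reuse pattern makes this a convergent series for $\alpha>2$, dominated by the intra-cell term.

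I would then assemble the SINR and apply $R=B\log(1+\mathrm{SINR})$. For $\lambda\in[1,(\log n)^{1-2/\alpha}]$ one has $(\lambda/\log n)^{\alpha/2}<1/\log n$, so the intra-cell interference is $O(1)$ and the denominator is governed by $N_0$; hence $\mathrm{SINR}=\Theta((\lambda/\log n)^{\alpha/2})$ is small and the rate is $\Theta\bigl(\lambda^{\alpha/2}/(\log n)^{\alpha/2}\bigr)$. For $\lambda\in[(\log n)^{1-2/\alpha},n]$ the intra-cell interference dominates both noise and signal, forcing $\mathrm{SINR}=\Theta(1/\log n)$ and rate $\Theta(1/\log n)$. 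The two expressions meet precisely at the threshold $\lambda=(\log n)^{1-2/\alpha}$, reproducing $\mathbf{R}_{\mathrm{P-AR}}(\lambda,n)$ and matching the P-AR link rate claimed in the lemma.

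The main obstacle will be the interference accounting for the parallel schedule: one must confirm that the $2\log n$ co-scheduled links contribute \emph{exactly} an $\Theta(\log n)$ factor and not more, which rests on the centering margin ruling out close interferers, and one must verify that the inter-cell ring sum stays lower-order. A secondary technical point is the boundary rows and columns $\mathcal{\tilde{R}}_{\delta}^h,\mathcal{\tilde{R}}_{\delta}^v$ handled by the extra $1$-TDMA phase, where I would check that the constant additional slots leave the order unchanged. Since the statement only asserts an order match inherited from \cite{2012-winet-asymptotic}, I would finish by a geometric-scaling reduction to the P-AR rate computation of Lemma \ref{lemma-P-AR-rate} rather than re-deriving every interference bound from scratch.
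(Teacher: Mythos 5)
Your proposal is correct and matches the intended argument: the paper itself states this lemma without proof, importing it from \cite{2012-winet-asymptotic} and justifying it only by describing the $2$-TDMA (plus supplementary $1$-TDMA) schedule, and your SINR computation --- hop length $\Theta(\sqrt{\log n/\lambda})$, intra-cell interference a $\Theta(\log n)$ multiple of the signal thanks to the station-cell centering margin, convergent inter-cell ring sum for $\alpha>2$ --- is exactly the calculation that reference relies on and that underlies the parallel result of Lemma \ref{lemma-P-AR-rate}. Your case split at $\lambda=(\log n)^{1-2/\alpha}$ correctly reproduces both regimes of $\mathbf{R}_{\mathrm{P-AR}}(\lambda,n)$, so the reconstruction fills in details the paper delegates to the citation rather than diverging from it.
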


\subsection{Multicast Routing Schemes}

\subsubsection{Euclidean Spanning Tree}
We recall a result from \cite{LiTON08}.
\begin{lem}[\cite{LiTON08} ]\label{lemma-EMST-upperbound}
 For any spanning set $ \mathcal{U}_{k} $ consisting of $ n_{d}+ 1$   nodes    placed in a square $\mathcal{R}=[0,\mathfrak{a}]^2$,  the length
 of \emph{Euclidean spanning tree}
$\EST(\mathcal{U}_{k})$ obtained by
the algorithm in \cite{LiTON08}
is at most of $ 2\sqrt{2} \cdot\sqrt
{n_d+1}\cdot \mathfrak{a}$.
\end{lem}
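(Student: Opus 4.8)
The plan is to prove the bound by an explicit grid construction, which I take to be the algorithm of \cite{LiTON08}, rather than by invoking the extremal optimality of a minimum spanning tree. Write $N=n_d+1$ and set $m=\lceil\sqrt{N}\rceil$, so that $m\ge\sqrt{N}$ and $m^2\ge N$. First I would partition the square $\mathcal{R}=[0,\mathfrak{a}]^2$ into $m^2$ congruent cells of side $\mathfrak{a}/m$, each of diameter $\sqrt{2}\,\mathfrak{a}/m$, and sort the $N$ points of $\mathcal{U}_{k}$ into these cells. The tree $\EST(\mathcal{U}_{k})$ is then assembled in two layers: an \emph{intra-cell} layer that connects the points lying in a common cell, and an \emph{inter-cell backbone} that stitches the cells together in a boustrophedon (snake) order of the grid, left to right along the first row, right to left along the second, and so on.

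For the intra-cell layer, within every nonempty cell I pick one representative and join each remaining point of that cell to it by a single edge. Since any two points sharing a cell are at distance at most the cell diameter $\sqrt{2}\,\mathfrak{a}/m$, and the number of such edges, summed over all cells, equals $N$ minus the number of nonempty cells and is therefore at most $N$, the total intra-cell length is at most $N\cdot\sqrt{2}\,\mathfrak{a}/m\le \sqrt{2}\,\sqrt{N}\,\mathfrak{a}$, using $m\ge\sqrt{N}$. Joining the representatives of consecutive nonempty cells along the snake order then yields a path through all representatives which, together with the intra-cell stars, forms a genuine spanning tree of $\mathcal{U}_{k}$ (no Steiner points are introduced).

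The crux, and the step I expect to be the main obstacle, is bounding the backbone by a second $\sqrt{2}\,\sqrt{N}\,\mathfrak{a}$ term so that the two layers combine to the claimed $2\sqrt{2}\,\sqrt{N}\,\mathfrak{a}$. The subtlety is that empty cells may force a representative to be linked to one several cells away, so the naive per-edge diameter bound fails, and the crude device of splitting each backbone edge into horizontal and vertical displacements, $|e|\le|e_x|+|e_y|$, and summing along the snake (horizontal components telescoping to $\le\mathfrak{a}$ per row and vertical components controlled by $\mathfrak{a}/m$) overshoots the target constant. To pin the constant down to $2\sqrt{2}$ I would exploit the monotone, one-directional structure of the snake within each row together with a disciplined choice of representative in each cell (for instance the point nearest the entry side), grouping the backbone edges so that displacements in each coordinate are charged against the total side length $\mathfrak{a}$ and the $m$ rows at most once; the remaining work is the purely arithmetic step of adding the two layers and absorbing the gap between $m$ and $\sqrt{N}$ arising from the ceiling $\lceil\sqrt{N}\rceil$ into the stated bound.
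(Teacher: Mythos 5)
First, a point of reference: the paper does not prove this lemma at all --- it is imported verbatim from \cite{LiTON08} --- so there is no in-paper proof to match; the standard argument in that reference is a single-layer strip construction (partition $[0,\mathfrak{a}]^2$ into $\Theta(\sqrt{n_d+1})$ horizontal strips, sort the points of each strip by abscissa, concatenate the resulting monotone paths), which avoids the two-layer bookkeeping you set up.

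The genuine gap in your proposal is exactly the step you flag as ``the crux'': you never establish that the inter-cell backbone has length at most $\sqrt{2}\sqrt{N}\,\mathfrak{a}$, and in fact that target is not attainable for your construction, so the clean split of the claimed bound into $\sqrt{2}\sqrt{N}\,\mathfrak{a}$ per layer cannot work. Concretely, with $m=\lceil\sqrt N\rceil$ and all $m^2$ cells occupied by one adversarially placed point each, the intra-cell layer costs nothing but the snake backbone has $m^2-1$ edges between representatives in adjacent cells, each of length up to $\sqrt{5}\,\mathfrak{a}/m$, giving roughly $\sqrt5\,\sqrt N\,\mathfrak{a}>\sqrt2\,\sqrt N\,\mathfrak{a}$; and even decomposing edges as $|e|\le|e_x|+|e_y|$, the horizontal sweeps alone contribute up to about $m\mathfrak{a}\approx\sqrt N\,\mathfrak{a}$ while the vertical increments contribute up to another $c\cdot\mathfrak{a}/m$ with $c$ the number of nonempty cells, so the backbone can approach $2\sqrt N\,\mathfrak{a}$. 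No ``disciplined choice of representative'' changes this, because the lower bound comes from the cell geometry, not from where the representative sits. The construction is probably still salvageable, but only by charging the two layers \emph{jointly} --- e.g., intra-cell cost plus backbone vertical cost is at most $\bigl((N-c)\sqrt2+c\bigr)\mathfrak{a}/m\le\sqrt2\,\sqrt N\,\mathfrak{a}$, leaving the horizontal backbone cost, which must then be shown to be at most roughly $(2\sqrt2-\sqrt2)\sqrt N\,\mathfrak{a}$ including the row-transition jumps --- and that accounting, which is the entire content of the lemma's constant, is precisely what is missing. As written, the proposal proves only the easy half and asserts the hard half.
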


Then, for any multicast session $\mathcal{M}_k$, based on its spanning set $\mathcal{U}_k$, we build an
Euclidean spanning tree, denoted by $\EST(\mathcal{U}_{k})$.
Denote the set of all edges of $\EST(\mathcal{U}_{k})$ by $\mathcal{E}_k$.

\subsubsection{Assignment of Backbones}
Now, we determine which backbones, including highway and AR,
can be used by a specific communication-pair, \ie, a link $u\to v \in \mathcal {E}_k$.

\textbf{Assignment of Arterial Roads:} Denote the vertical O-AR (or
P-AR) passing through the ordinary (or parallel) AR-station
$\mathbf{S}_\mathrm{o}(u)$ (or $\mathbf{S}_\mathrm{p}(u)$) by
$\mathbf{AR}^\mathrm{V}_\mathrm{o}(u)$ (or
$\mathbf{AR}^\mathrm{V}_\mathrm{p}(u)$); and denote the horizontal
O-AR (or P-AR) passing through the ordinary (or parallel) AR-station
$\mathbf{S}_\mathrm{o}(v)$ (or $\mathbf{S}_\mathrm{p}(v)$) by
$\mathbf{AR}^\mathrm{H}_\mathrm{o}(v)$ (or
$\mathbf{AR}^\mathrm{H}_\mathrm{p}(v)$).

\textbf{Assignment of Highways:} Recall from Lemma
\ref{lemma-density-highway} that in each horizontal (or vertical)
rectangle slab $\mathcal{R}^{\mathrm{H}}_{i}$ (or
$\mathcal{R}^{\mathrm{V}}_{i}$) of area $\sqrt{n} \times  \kappa
\sqrt{2c} \cdot\log \frac{\sqrt{n}}{\sqrt{2c}}$ (or $   \kappa
\sqrt{2c} \cdot\log \frac{\sqrt{n}}{\sqrt{2c}}  \times  \sqrt{n}$),
there are at least $\eta \cdot \log \frac{\sqrt{n}}{\sqrt{2}c}$
horizontal (or vertical) highways. Divide further each horizontal
(or vertical) slab into horizontal (or vertical) slice of area
$\sqrt{n} \times  \frac{\kappa \sqrt{2c}}{\eta}$ (or $ \frac{\kappa
\sqrt{2c}}{\eta}  \times  \sqrt{n}$). Choose any $\eta \cdot \log
\frac{\sqrt{n}}{\sqrt{2}c}$ highways from each slab, and define an
arbitrary bijection from those highways to the slices. For any node
$u$ located in a horizontal slice $\mathbf{Slice}^{\mathrm{H}}_{j}$
(or vertical slice $\mathbf{Slice}^{\mathrm{V}}_{j}$), the packets
initiating from $u$ and terminating to $v$ are assigned to the
horizontal highway $\mathbf{H}^{\mathrm{H}}(u)$ and vertical highway
$\mathbf{H}^{\mathrm{V}}(v)$ that are mapped to the slices
$\mathbf{Slice}^{\mathrm{H}}_{j}$ and
$\mathbf{Slice}^{\mathrm{V}}_{j}$, respectively.

\subsubsection{Multicast Routing Schemes}

For each multicast session $\mathcal{M}_k$ with an Euclidean spanning tree  $\EST(\mathcal{U}_{k})$,
we build two types of  multicast routing trees by  two  corresponding schemes,
denoted by
$\mathbb{M}_{\mathrm{o}\&\mathrm{h}}$ and $\mathbb{M}_{\mathrm{p}\&\mathrm{h}}$, as described in Table.\ref{tab-notions}.

For each edge $u\to v \in \mathcal {E}_k$:
\begin{itemize}
  \item [Under $\mathbb{M}_{\mathrm{o}\&\mathrm{h}}$,]
  $u$ drains the packets into the ordinary AR-station $\mathbf{S}_\mathrm{o}(u)$ along a specific O-AP;
the packets are transported along the vertical ordinary AR
$\mathbf{AR}^\mathrm{V}_\mathrm{o}(u)$  to the assigned horizontal
highway $\mathbf{H}^{\mathrm{H}}(u)$;
the packets are carried along $\mathbf{H}^{\mathrm{H}}(u)$ and then the vertical highway $\mathbf{H}^{\mathrm{V}}(v)$;
the packets are transported along  $\mathbf{AR}^\mathrm{H}_\mathrm{o}(v)$ to  the ordinary AR-station $\mathbf{S}_\mathrm{o}(v)$; and
this station delivers the packets to $v$.
  \item [Under $\mathbb{M}_{\mathrm{p}\&\mathrm{h}}$,]
  $u$ drains the packets into the parallel AR-station $\mathbf{S}_\mathrm{p}(u)$ along a specific P-AP;
the packets are transported along the vertical parallel AR
$\mathbf{AR}^\mathrm{V}_\mathrm{p}(u)$  to the assigned horizontal
highway $\mathbf{H}^{\mathrm{H}}(u)$;
the packets are carried along $\mathbf{H}^{\mathrm{H}}(u)$ and then the vertical highway $\mathbf{H}^{\mathrm{V}}(v)$;
the packets are transported along  $\mathbf{AR}^\mathrm{H}_\mathrm{p}(v)$ to  the parallel AR-station $\mathbf{S}_\mathrm{p}(v)$; and
this station delivers the packets to $v$.
\end{itemize}
When all links in $\mathcal {E}_k$ are checked, merge the same edges (hops) and remove the circles
that cannot break the connectivity of $\EST(\mathcal{U}_{k})$. Finally, we
obtain the corresponding multicast routing trees.

\begin{figure}[t]
\begin{flushleft}
\begin{tabular}{cc}
\scalebox{0.9}{\includegraphics{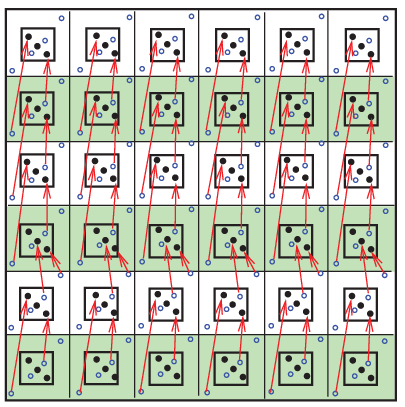} }&
\scalebox{0.9}{\includegraphics{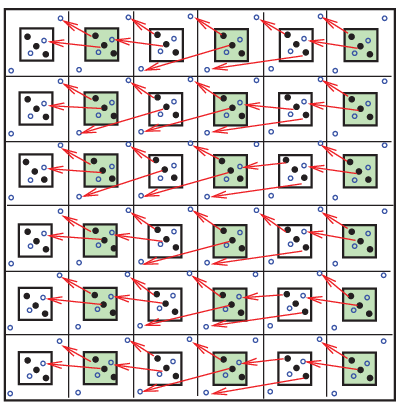}}
 \\
(a) Parallel Draining Paths & (b) Parallel Delivering Paths
\end{tabular}
\end{flushleft}
\caption{(a) The shaded cells can be scheduled simultaneously. All
draining paths except those initiating from nodes in
$\mathcal{R}_{\delta}^h$, where $\delta=\frac{\sqrt{n}}{3\sqrt{\log
n}}$, can be scheduled once in $2\times \frac{16\log n}{2\log n}=16$
time slots. In each slot, $2\log n$ links can be scheduled
simultaneously. Here, $16\log n$ is the maximum number of nodes in
each cell, and  $2\log n$ is the number of stations in each cell. In
addition, the nodes in $\mathcal{R}_{\delta}^h$ drain packets to the
stations in $\mathcal{R}_{\delta-1}^h$, and those access paths can
be scheduled by additional $\frac{16\log n}{2\log n}=8$ time slots.
(b) The shaded station-cells can
  be scheduled
simultaneously.
 All delivering paths except those terminating to nodes in $\mathcal{R}_{\delta}^v$,
can be scheduled once in $2\times \frac{16\log n}{2\log n}=16$ time
slots. In each slot, $2\log n$ links can be scheduled
simultaneously. In addition, the nodes in $\mathcal{R}_{\delta}^v$
receive packets from the stations in $\mathcal{R}_{\delta-1}^v$, and
those access paths can be scheduled by  additional $\frac{16\log
n}{2\log n}=8$ time slots.} \label{fig-accessing-path}
 \end{figure}

\subsection{Achievable Multicast Throughput}
Let $\mathbb{M}_{\mathrm{o}}$ and $\mathbb{M}_{\mathrm{p}}$ denote respectively the schemes only using O-AR system and using
P-AR system, \cite{2012-winet-asymptotic}.
By deriving the optimal throughput based on these four schemes
$\mathbb{M}_{\mathrm{o}}$, $\mathbb{M}_{\mathrm{p}}$,
$\mathbb{M}_{\mathrm{o}\&\mathrm{h}}$, and $\mathbb{M}_{\mathrm{p}\&\mathrm{h}}$,
we can obtain Lemma \ref{thm-achievable-MT-all-schemes}.

According to \cite{2012-winet-asymptotic}, under schemes $\mathbb{M}_{\mathrm{o}}$ and $\mathbb{M}_{\mathrm{p}}$,
the multicast throughputs can be respectively achieved as
$$\Lambda_{\mathrm{o}}(\lambda, n)=\frac{\mathbf{R}_{\mathrm{O-AR}}(\lambda, n)}{\mathbf{L}(n_s,\frac{1}{\mathbf{p}_\mathrm{o}})},~\Lambda_{\mathrm{p}}(\lambda, n)=\frac{\mathbf{R}_{\mathrm{P-AR}}(\lambda, n)}{\mathbf{L}(n_s,\frac{1}{\mathbf{p}_\mathrm{p}})}.$$

Next, we analyze our new schemes, \ie, $\mathbb{M}_{\mathrm{o}\&\mathrm{h}}$ and $\mathbb{M}_{\mathrm{p}\&\mathrm{h}}$.

\subsubsection{Scheme Using  Both the O-AR and Highway Systems, $\mathbb{M}_{\mathrm{o}\&\mathrm{h}}$}
The routing realization of any link in $\mathcal{E}_k$, say $u\to v$,
can be divided into three phases:
ordinary access path (O-AP) phase during which the packets are drained into O-ARs (or delivered from O-ARs) via O-APs,
ordinary arterial Road (O-AR) phase during which the packets are drained  into highways (or delivered from highways)
along O-ARs,
and highway phase during which the packets are transported along the highways.
Consider the throughput during all three phases, we can obtain the multicast throughput
under the scheme $\mathbb{M}_{\mathrm{o}\&\mathrm{h}}$ according to bottleneck principle.
\begin{lem}\label{lem-throughput-ordinary-highway}
Under the multicast scheme $\mathbb{M}_{{\mathrm{o}\&\mathrm{h}}}$, the multicast throughput is achieved as
$$\Lambda_{{\mathrm{o}\&\mathrm{h}}}(\lambda, n)=\min\left\{\frac{\mathbf{R}_{\mathrm{O-AR}}(\lambda,
n)}{\mathbf{L}(n_s,
\frac{1}{\mathbf{p}_{\mathrm{oh},\mathrm{O-AR}}})},
\frac{1}{\mathbf{L}(n_s,\frac{1}{\mathbf{p}_{\mathrm{oh},\mathrm{H}}})}\right\}.$$
\end{lem}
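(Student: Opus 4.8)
The plan is to invoke the bottleneck principle across the three routing phases of $\mathbb{M}_{\mathrm{o}\&\mathrm{h}}$ and then collapse the resulting three-way minimum into the stated two-way minimum. First I would pin down the rate sustained by each phase: by Lemma~\ref{lemma-ordinary-access-rate} and Lemma~\ref{lemma-rate-ordinary} the ordinary access paths and the ordinary arterial roads both carry rate $\mathbf{R}_{\mathrm{O-AR}}(\lambda,n)$, while the $9$-TDMA highway scheduling sustains rate $\Omega(1)$. By Lemma~\ref{lem-basic-link-throughput} the throughput contributed by a phase is $\mathbf{R}/\mathbf{L}(n_s,1/\mathbf{p})$ as soon as the probability $\mathbf{p}$ that a fixed session's routing traverses a given link of that phase is known, so the whole argument reduces to computing three sharing probabilities and taking the minimum of the three quotients.

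Next I would obtain each sharing probability as the per-session backbone length consumed divided by the total number of backbone links, capped at $1$. For the highway phase, each tree edge $u\to v$ is carried on the highways over essentially its full geodesic span, so one session consumes $\Theta(\|\EST(\mathcal{U}_k)\|)=\Theta(\sqrt{n_d n/\lambda})$ of highway length by Lemma~\ref{lemma-EMST-upperbound}; dividing by the $\Theta(n)$ highway links (there are $\Theta(\sqrt{n})$ highways each spanning $\Theta(\sqrt{n})$ cells by Lemma~\ref{lemma-density-highway}) yields $\mathbf{p}_{\mathrm{oh},\mathrm{H}}=\Theta(\sqrt{n_d/n})$ in the spread-out regime. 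For the O-AR phase, by contrast, each of the $\Theta(n_d)$ tree edges rides a vertical then horizontal arterial road only over the $\Theta(\log n/\sqrt{\lambda})$ slab height separating its station from the assigned highway, i.e. across $\Theta(\sqrt{\log n})$ AR-cells; summing gives $\Theta(n_d\sqrt{\log n})$ occupied O-AR links, which, divided by the $\Theta(n/\log n)$ AR-cells of $\mathbb{L}(\sqrt{n/\lambda},3\sqrt{\log n/\lambda},0)$, produces $\mathbf{p}_{\mathrm{oh},\mathrm{O-AR}}=\Theta(n_d(\log n)^{3/2}/n)$. Finally, a node's single-hop O-AP is used only when that node is a spanning-set endpoint of a session, an event of probability $\Theta(n_d/n)$, so the access-path sharing probability is $\Theta(n_d/n)$.

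With the three probabilities in hand I would use the monotonicity of maximum occupancy from Lemma~\ref{lem-maximum-occupancy}: $\mathbf{L}(n_s,1/\mathbf{p})$ is increasing in $\mathbf{p}$, since a larger $\mathbf{p}$ means fewer bins and a heavier load. Because the O-AP and O-AR phases share the \emph{same} rate $\mathbf{R}_{\mathrm{O-AR}}(\lambda,n)$ while $\Theta(n_d/n)\le\Theta(n_d(\log n)^{3/2}/n)=\mathbf{p}_{\mathrm{oh},\mathrm{O-AR}}$, the access-path quotient is never smaller than the O-AR quotient and thus drops out of the minimum. The two surviving terms are exactly $\mathbf{R}_{\mathrm{O-AR}}(\lambda,n)/\mathbf{L}(n_s,1/\mathbf{p}_{\mathrm{oh},\mathrm{O-AR}})$ and $1/\mathbf{L}(n_s,1/\mathbf{p}_{\mathrm{oh},\mathrm{H}})$, whose minimum is the claimed $\Lambda_{\mathrm{o}\&\mathrm{h}}(\lambda,n)$.

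The hard part will be making the two sharing probabilities rigorous across the whole range $n_d:[1,n]$ rather than only in the spread-out regime used above. When $n_d$ passes the thresholds $n/(\log n)^{3/2}$ for the O-ARs and $n/(\log n)^2$, $n/\log n$ for the highways, the destinations begin to saturate the AR-cells and the highway cells, the Euclidean spanning-tree length ceases to scale like $\sqrt{n_d n/\lambda}$, and the per-link usage must instead be controlled by a maximum-occupancy counting over cells; this is precisely what yields the piecewise forms of $\mathbf{p}_{\mathrm{oh},\mathrm{O-AR}}$ and $\mathbf{p}_{\mathrm{oh},\mathrm{H}}$ in Table~\ref{tab-functions}, and each breakpoint requires its own geometric estimate. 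I would close by verifying that these estimates degenerate to $\Theta(1)$ exactly on the stated large-$n_d$ subranges, so that the two quotients behave correctly and the bottleneck formula remains valid throughout.
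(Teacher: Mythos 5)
Your proposal is correct and follows essentially the same route as the paper: discard the O-AP phase (same rate as the O-ARs and no larger load), bound the per-station sharing probabilities for the O-AR and highway phases, convert each to a throughput via Lemma~\ref{lem-basic-link-throughput}, and take the bottleneck minimum. The only cosmetic difference is that the paper obtains $\mathbf{p}_{\mathrm{oh},\mathrm{O-AR}}$ by a union bound over the $n_d$ tree edges combined with a per-station area estimate of the slab strip (yielding $n_d\cdot\frac{6\sqrt{\log n/\lambda}\cdot\sqrt{2/\lambda}\,c\log(\sqrt{n}/\sqrt{2}c)}{n/\lambda}\le \frac{6n_d(\log n)^{3/2}}{n}$), whereas you count total occupied AR-links per session and divide by the number of AR-cells; both give the same $\Theta(n_d(\log n)^{3/2}/n)$ and the same conclusion.
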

\begin{proof}
Since O-APs can sustain the same rate (in order sense) as that of
O-ARs, and the maximum burden of O-APs is necessarily not more than
that of O-ARs, we neglect the analysis of O-AP phase, and only
consider the O-AR phase and highway phase.

\emph{O-AR Phase:} For any AR-station, say
$\mathbf{S}_{\mathrm{oh},\mathrm{O-AR}}$, define an event
$E_k(\mathbf{S}_{\mathrm{oh},\mathrm{O-AR}})$ for  $\mathcal{M}_k$:
$\mathcal{M}_k$ shares the bandwidth of the link of an AR initiating
from the station $\mathbf{S}_{\mathrm{oh},\mathrm{O-AR}}$ during the
O-AR phase of  multicast scheme
$\mathbb{M}_{\mathrm{o}\&\mathrm{h}}$. Clearly, if
$E_k(\mathbf{S}_{\mathrm{oh},\mathrm{O-AR}})$ happens, then there is
an edge $u\to v \in \mathcal{E}_k$ such that the event
$E_{k;u,v}(\mathbf{S}_{\mathrm{oh},\mathrm{O-AR}})$ occurs, where
the event $E_{k;u,v}(\mathbf{S}_{\mathrm{oh},\mathrm{O-AR}})$ is
defined as: the routing path of $u \to v$ under the scheme
$\mathbb{M}_{\mathrm{o}}$ passes through
$\mathbf{S}_{\mathrm{oh},\mathrm{O-AR}}$. Obviously,
$E_k(\mathbf{S}_{\mathrm{oh},\mathrm{O-AR}})=\bigcup_{uv\in \Pi_k}
E_{k;u,v}(\mathbf{S}_{\mathrm{oh},\mathrm{O-AR}})$. Then,
\begin{eqnarray*}
  \Pr(E_k(\mathbf{S}_{\mathrm{oh},\mathrm{O-AR}})) &\leq& n_d \cdot  \Pr(E_{k;u,v}(\mathbf{S}_{\mathrm{oh},\mathrm{O-AR}})) \\
 &\leq & n_d \cdot \frac{6\sqrt{\log n/\lambda} \cdot \sqrt{2/\lambda}c \log \frac{\sqrt{n}}{\sqrt{2}c}}{n/\lambda} \\
 &\leq &\frac{6n_d \cdot (\log n)^{3/2}}{n} 
\end{eqnarray*}
Define $\mathbf{p}_{\mathrm{oh},\mathrm{O-AR}}=\min\{\frac{6n_d
\cdot (\log n)^{3/2}}{n},1\}$. Then, according to Lemma
\ref{lem-basic-link-throughput}, we obtain that the throughput
during the AR phase of scheme $\mathbb{M}_{\mathrm{o}\&\mathrm{h}}$
is achieved as  $\frac{\mathbf{R}_{\mathrm{O-AR}}(\lambda,
n)}{\mathbf{L}(n_s,
\frac{1}{\mathbf{p}_{\mathrm{oh},\mathrm{O-AR}}})}$.

\emph{Highway Phase:}
The routing realization of any multicast
session  $\mathcal{M}_k$ passes through a station during the highway
phase with the probability at most of
\begin{center}
$\mathbf{p}_{\mathrm{oh},\mathrm{H}}=\left\{ \begin{array}{ll}
\Theta(\sqrt{\frac{n_d}{n}}) & \mathrm{when}~n_d:[1,\frac{n}{(\log n)^2}] \\
\Theta(\frac{n_d\log n}{n}) & \mathrm{when}~n_d:[\frac{n}{(\log n)^2},\frac{n}{\log n}]\\
\Theta(1) & \mathrm{when}~n_d:[\frac{n}{(\log n)},n]\\
  \end{array} \right.     $
\end{center}
From Lemma
\ref{lem-basic-link-throughput}, we get that the throughput during
highway phase of multicast scheme
$\mathbb{M}_{\mathrm{o}\&\mathrm{h}}$ can be achieved as
$\frac{\mathbf{R}_{\mathrm{H}}(\lambda,
n)}{\mathbf{L}(n_s,\frac{1}{\mathbf{p}_{\mathrm{oh},\mathrm{H}}})}$.

\emph{Multicast Throughput under Scheme
$\mathbb{M}_{\mathrm{o}\&\mathrm{h}}$:} According to
\emph{bottleneck principle}, we can obtain the final throughput
under the scheme $\mathbb{M}_{\mathrm{o}\&\mathrm{h}}$.
\end{proof}

\subsubsection{Scheme Using Both the P-AR  and Highway Systems, $\mathbb{M}_{\mathrm{p}\&\mathrm{h}}$}
By a similar procedure to the analysis of $\mathbb{M}_{\mathrm{o}\&\mathrm{h}}$, we can obtain
\begin{lem}\label{lem-throughput-parallel-highway}
Under the multicast scheme $\mathbb{M}_{{\mathrm{o}\&\mathrm{h}}}$, the multicast throughput is achieved as
$$\Lambda_{{\mathrm{p}\&\mathrm{h}}}(\lambda, n)=\min\left\{\frac{\mathbf{R}_{\mathrm{P-AR}}(\lambda,
n)}{\mathbf{L}(n_s,
\frac{1}{\mathbf{p}_{\mathrm{ph},\mathrm{P-AR}}})},
\frac{1}{\mathbf{L}(n_s,\frac{1}{\mathbf{p}_{\mathrm{ph},\mathrm{H}}})}\right\}.$$
\end{lem}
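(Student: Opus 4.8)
The plan is to reproduce verbatim the three-phase analysis of Lemma \ref{lem-throughput-ordinary-highway}, substituting the parallel arterial road system for the ordinary one while leaving the highway system untouched. Under $\mathbb{M}_{\mathrm{p}\&\mathrm{h}}$ the routing of any edge $u\to v\in\mathcal{E}_k$ decomposes into a parallel access path (P-AP) phase, a parallel arterial road (P-AR) phase, and a highway phase; by the bottleneck principle the achievable per-session throughput is the minimum of the three phase throughputs. First I would discard the P-AP phase exactly as before: by Lemma \ref{lemma-parallel-access-rate} each P-AP sustains the same order rate $\mathbf{R}_{\mathrm{P-AR}}(\lambda,n)$ as a P-AR, while the number of sessions loading any access link is no larger than that loading the arterial link it feeds, so this phase can never be the bottleneck.

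For the P-AR phase I would fix a parallel station $\mathbf{S}_{\mathrm{ph},\mathrm{P-AR}}$, define the event $E_k$ that $\mathcal{M}_k$ shares the arterial link initiating from it, and bound $\Pr(E_k)\le n_d\cdot\Pr(E_{k;u,v})$ by a union bound over the $n_d$ edges of $\EST(\mathcal{U}_k)$, each single-edge probability being the ratio of the geometric catchment of that station to the total area $n/\lambda$. The only substantive change from the ordinary case lies in this catchment. A vertical P-AR feeds from exactly one PA-cell (of area $\Theta(1/\lambda)$) in each AR-cell it traverses, whereas an O-AR drains the entire AR-cell (of area $\Theta(\log n/\lambda)$); summed over the $\Theta(\sqrt{\log n})$ AR-cells a packet crosses before reaching its assigned highway, this yields catchments of order $\sqrt{\log n}/\lambda$ for P-AR against $(\log n)^{3/2}/\lambda$ for O-AR, a $\Theta(\log n)$ reduction. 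Dividing by $n/\lambda$ and applying the union bound gives $\mathbf{p}_{\mathrm{ph},\mathrm{P-AR}}=\min\{\Theta(n_d\sqrt{\log n}/n),1\}$, which saturates at $\Theta(1)$ precisely when $n_d=\Theta(n/\sqrt{\log n})$, matching Table \ref{tab-functions}. Feeding this into Lemma \ref{lem-basic-link-throughput} with link rate $\mathbf{R}_{\mathrm{P-AR}}(\lambda,n)$ produces the P-AR-phase throughput $\mathbf{R}_{\mathrm{P-AR}}(\lambda,n)/\mathbf{L}(n_s,1/\mathbf{p}_{\mathrm{ph},\mathrm{P-AR}})$.

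Because $\mathbb{M}_{\mathrm{p}\&\mathrm{h}}$ and $\mathbb{M}_{\mathrm{o}\&\mathrm{h}}$ share the identical highway system and the same slice-based highway assignment, the highway phase is literally unchanged: the probability that $\mathcal{M}_k$ traverses a fixed highway station is $\mathbf{p}_{\mathrm{ph},\mathrm{H}}=\mathbf{p}_{\mathrm{oh},\mathrm{H}}$, and since each highway sustains rate $\Theta(1)$, Lemma \ref{lem-basic-link-throughput} gives the highway-phase throughput $1/\mathbf{L}(n_s,1/\mathbf{p}_{\mathrm{ph},\mathrm{H}})$. Taking the minimum over the P-AR and highway phases then yields the claimed expression for $\Lambda_{\mathrm{p}\&\mathrm{h}}(\lambda,n)$.

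I expect the main obstacle to be the geometric justification of the P-AR catchment, namely that the one-to-one pairing of the $2\log n$ parallel stations across edge-adjacent AR-cells really does restrict a fixed P-AR to draining a single PA-cell per AR-cell, so that the per-cell catchment shrinks by the full $\Theta(\log n)$ factor rather than by the $\Theta(\sqrt{\log n})$ factor a naive side-length comparison would suggest; this is exactly the step that converts the $(\log n)^{3/2}$ factor of $\mathbf{p}_{\mathrm{oh},\mathrm{O-AR}}$ into the $\sqrt{\log n}$ factor of $\mathbf{p}_{\mathrm{ph},\mathrm{P-AR}}$. Verifying the saturation threshold and confirming that the P-AP phase is genuinely dominated are comparatively routine.
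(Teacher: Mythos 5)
Your proposal is correct and follows exactly the route the paper intends: the paper's own ``proof'' is a one-line appeal to the analysis of $\mathbb{M}_{\mathrm{o}\&\mathrm{h}}$ (Lemma \ref{lem-throughput-ordinary-highway}), and your three-phase reconstruction --- discarding the P-AP phase via Lemma \ref{lemma-parallel-access-rate}, shrinking the per-AR-cell catchment from the full AR-cell to a single PA-cell of area $\Theta(1/\lambda)$ to get $\mathbf{p}_{\mathrm{ph},\mathrm{P-AR}}=\min\{\Theta(n_d\sqrt{\log n}/n),1\}$, and reusing the unchanged highway phase with $\mathbf{p}_{\mathrm{ph},\mathrm{H}}=\mathbf{p}_{\mathrm{oh},\mathrm{H}}$ --- is precisely that ``similar procedure,'' with all quantities matching Table \ref{tab-functions}. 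You also correctly isolate the one genuinely new step (the $\Theta(\log n)$ catchment reduction from the one-to-one station pairing), which the paper leaves implicit.
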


\end{document}